\newcommand\numberthis{\addtocounter{equation}{1}\tag{\theequation}}
\newtheorem{definition}{Definition}[section]
\newtheorem{theorem}{Theorem}[section]
\newtheorem{lemma}{Lemma}[section]
\newtheorem{remark}{Remark}[section]
\numberwithin{equation}{section}
\DeclareMathOperator{\rank}{\mathrm{rank}}
\DeclareMathOperator{\diag}{\mathrm{diag}}
\DeclareMathOperator{\trace}{trace}
\def\la{\langle}
\def\ra{\rangle}
\newcommand{\vecnorm}[2]{\left\| #1\right\|_{#2}}
\newcommand{\matsnorm}[2]{\left\| #1\right\|_{{#2}}}
\newcommand{\nucnorm}[1]{\ensuremath{\matsnorm{#1}{\footnotesize{\mbox{$\ast$}}}}}
\newcommand{\fronorm}[1]{\ensuremath{\matsnorm{#1}{\footnotesize{\mathsf{F}}}}}
\newcommand{\opnorm}[1]{\ensuremath{\matsnorm{#1}{}}}
\newcommand{\bfm}[1]{\ensuremath{\mathbf{#1}}}
\renewcommand{\Pr}[2][]{\mathbb{P}_{#1} \left\{ #2 \rule{0mm}{3mm}\right\}}
\newcommand{\E}[2][]{\mathbb{E}_{#1} \left\{ #2 \rule{0mm}{3mm}\right\}}
   \def\mA{\bfm A}  
   \def\mB{\bfm B}  
     \def\C{\mathbb{C}}
   \def\mD{\bfm D}  
\def\ve{\bfm e}   \def\mE{\bfm E}  
  \def\mF{\bfm F}  
   \def\mG{\bfm G}
   \def\mL{\bfm L}
   \def\mR{\bfm R}  \def\R{\mathbb{R}}
\def\vs{\bfm s}   \def\mS{\bfm S}  
   \def\mU{\bfm U}  
   \def\mV{\bfm V}  
   \def\mW{\bfm W}  
\def\vx{\bfm x}   \def\mX{\bfm X}  
\def\vy{\bfm y}   \def\mY{\bfm Y}  
\def\vz{\bfm z}   \def\mZ{\bfm Z}
\def\calA{{\cal  A}}
\def\calD{{\cal  D}}
\def\calG{{\cal  G}} 
\def\calH{{\cal  H}} 
\def\calI{{\cal  I}}
\def\calO{{\cal  O}} 
\def\calP{{\cal  P}}
\def\calW{{\cal  W}}
\def\calZ{{\cal  Z}} 
\def\bzero{\bfm 0}
\newcommand{\bfsym}[1]{\ensuremath{\boldsymbol{#1}}}
             \def\bSigma{\bfsym \Sigma}
        \def\bLambda {\bfsym {\Lambda}}
\def \tE{\widehat{\mE}}
\def \tU{\widehat{\mU}}
\def \tV{\widehat{\mV}}
\def \tW{\widehat{\mW}}
\def \tX{\widehat{\mX}}
\def \tZ{\widehat{\mZ}}
\def \tA{\widehat{\mA}}
\def \tB{\widehat{\mB}}
\def \tcalG{\widehat{\calG}}
\def \tcalGT{\widehat{\calG}^{\ast}}
\def \tT{\widehat{T}}
\def \tmG{\widehat{\mG}}
\def \tSigma {\widehat{\bSigma}}
\def \tran {\mathsf{T}}
\def \tranH{\mathsf{H}}
\def \bzero{\bm 0}
\def \bone{\bm 1}
\newcommand{\kw}[1]{{{#1}}}
\DeclareMathOperator*{\minimize}{minimize}
\begin{document}
	
\title{Exact matrix completion based on low rank Hankel structure in  the Fourier domain}
\author[1]{Jinchi Chen}
\author[1,2]{Weiguo Gao}
\author[1]{Ke Wei}
\affil[1]{School of Data Science, Fudan University, Shanghai, China.\vspace{.15cm}}
\affil[2]{School of Mathematical Science, Fudan University, Shanghai, China.}

\date{\today}
\maketitle
\begin{abstract}
Matrix completion is about recovering a matrix from its partial revealed entries, and it can often be achieved by exploiting the  inherent simplicity or low dimensional structure of the target matrix. For instance, a typical notion of matrix simplicity is low rank. In this paper we study matrix completion based on another low dimensional structure, namely the low rank Hankel structure in the Fourier domain. It is shown that matrices with this structure can be exactly recovered by solving a convex  optimization program provided the sampling complexity is nearly optimal. Empirical results are also presented to justify the effectiveness of the convex method.
\end{abstract}
\section{Introduction}\label{sec:introduction}
This paper considers the problem of matrix completion which is about filling in the missing entries of a matrix. Letting $\mX^\natural\in\C^{d\times n}$ be the target matrix and  $\Omega\subset[d]\times [n]$ be a subset of indices corresponding \kw{to} the observed entries, the matrix completion problem can be expressed as: 
\begin{align*}
\mbox{\it Find the matrix }\mX^\natural~\mbox{\it from }\calP_\Omega(\mX^\natural),
\end{align*}
where $\calP_\Omega$ denotes the sampling operator which only acquires the matrix entries in $\Omega$. Despite the simplicity of this problem, it has many applications, such as sensor network localization \cite{so2007theory}, collaborative filtering \cite{rennie2005fast}, and multi-class learning \cite{argyriou2008convex}.

Without any additional assumptions, matrix completion is an ill-posed problem which does not even have a unique solution. Therefore computationally efficient solution to this problem is typically based on certain intrinsic low dimensional structures of the matrix. A notable example is low rank matrix completion where the target matrix is \kw{assumed} to be low rank. From the pioneering work of Cand\`es and Recht \cite{candes2009exact}, low rank matrix completion has received plenty of attention both  from the theoretical and algorithmic aspects, see \cite{recht2011simpler,gross2011recovering,chen2015incoherence,davenport2016overview,cai2018exploiting,chen2018harn,chi2019nonconvex} and references therein.

In this paper a different simple structure in the Fourier domain will be exploited for matrix completion. Let $\calH$ be a linear operator which maps a vector $\vx\in\C^{1\times n}$ into an $n_1\times n_2$ Hankel matrix, 
\begin{align*}
\calH(\vx) = \begin{bmatrix}
\vx_1  & \vx_2 & \vx_3&\cdots &\vx_{n_2}\\
\vx_2  & \vx_3 & \vx_4&\cdots &\vx_{n_2+1}\\
\vx_3  & \vx_4 & \vx_5&\cdots &\vx_{n_2+2}\\
\vdots & \vdots& \vdots& \ddots&\vdots\\
\vx_{n_1}&\vx_{n_1+1}&\vx_{n_1+2}&\cdots &\vx_{n}\\
\end{bmatrix}\in\C^{n_1\times n_2},
\end{align*}
where \kw{$\vx_i$ is the $i$th entry of $\vx$} and $n+1=n_1+n_2$ . Let $\tX^\natural\in\C^{d\times n}$ be \kw{the} matrix obtained by applying the Discrete Fourier Transform (DFT) to each column of $\mX^\natural$; namely, $\tX^\natural=\mF\mX^\natural$, where  $\mF\in \C^{d\times d}$ is the unitary DFT matrix. We will study matrix completion based on the low rank structure of the Hankel matrices associated with each row of $\tX^\natural$, which corresponds to different frequencies in the Fourier domain; see Figure~\ref{fig:illustration} for an illustration. Specifically, assuming
\begin{align}
\label{def: hankel low rank structures}
\rank(\calH(\tX^\natural(i,:))) = r\ll \min(d,n), \text{ for all } ~i=1,\cdots,d,
\end{align}
we ask when and how we are able to \kw{recover}
the missing entries of $\mX^\natural$ from the observed ones. The low rank Hankel structure in the Fourier domain also arises in many applications,  including seismic data de-noising
and reconstruction \cite{oropeza2010singular,oropeza2011,huang2016damped} and the finite rate of innovation model \cite{vetterli2002sampling,blu2008sparse}.

\begin{figure}[t]
	\centering
	\subfigure{
		\begin{minipage}[t]{0.22\linewidth}
			\centering
			\centerline{\includegraphics[width=1\textwidth]{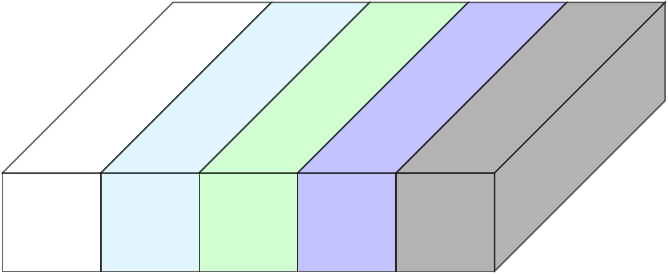}}
			\vspace{0.3cm}
			\centerline{\hspace{-.8cm}$\mX^\natural$}
		\end{minipage}%
	}%
	\subfigure{
		\begin{minipage}[t]{0.22\linewidth}
			\centering
			\centerline{\includegraphics[width=1\textwidth]{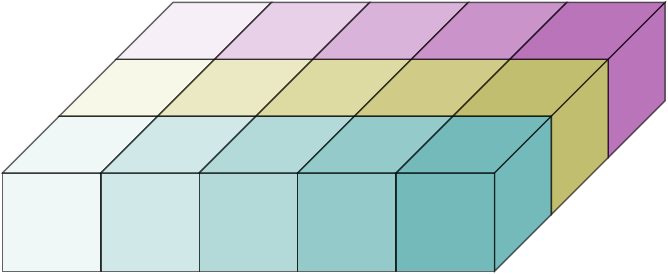}}
			\vspace{0.3cm}
			\centerline{\hspace{-.8cm}$\tX^\natural=\mF\mX^\natural$}
		\end{minipage}%
	}%
	\subfigure{
		\begin{minipage}[t]{0.35\linewidth}
			\centering
			\centerline{\includegraphics[width=0.7\textwidth]{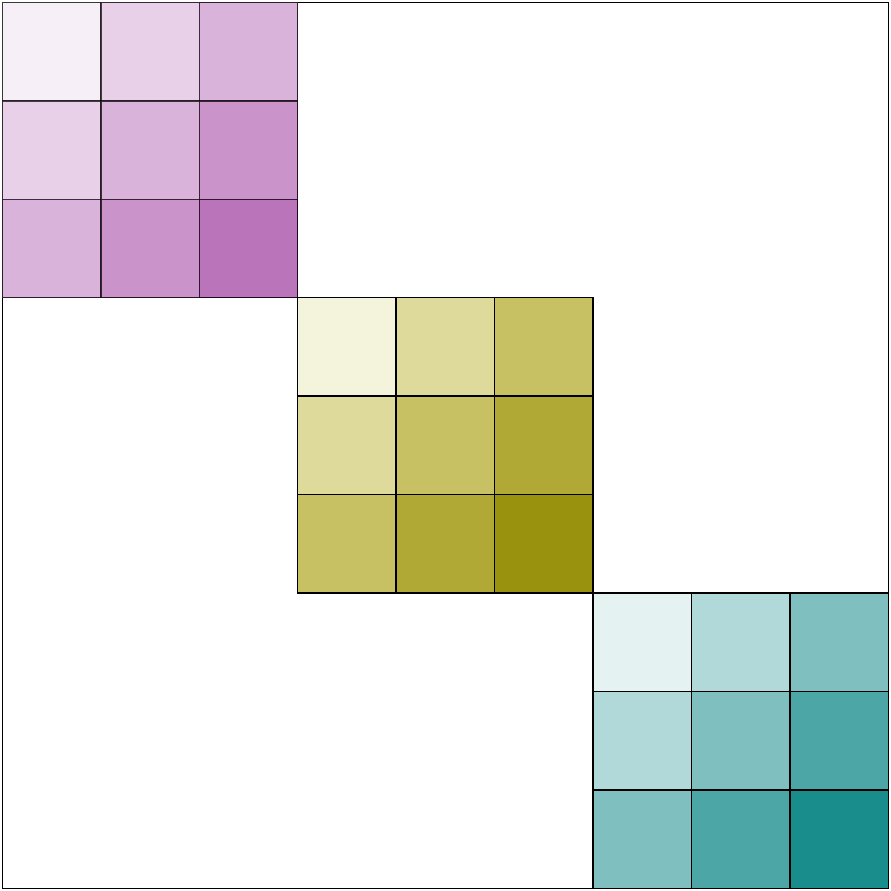}}
			\vspace{0.3cm}
			\centerline{$\diag(\calH(\tX^\natural(1,:)),\cdots,\calH(\tX^\natural(d,:)))$}
		\end{minipage}
	}%
	
	\centering
	\caption{ Illustration of  the data structure model.}\label{fig:illustration}
\end{figure}

\paragraph{Motivating example from seismic data analysis}
In seismology, a signal can be modeled as a superposition of plane waves in the temporal-spatial domain. For simplicity, 
consider a 2-dimensional seismic signal with $r$ dipping events (plane waves),
\begin{align}
\label{eq: 2D seismic}
s(x, t) = \sum_{\ell=1}^{r} u_\ell(t-p_\ell\cdot  x),
\end{align}
where $t$ denotes the temporal direction, $x$ denotes the spatial direction, $u_\ell(\cdot)$ is a pulse or wavelet function, 
 and $p_\ell$ is the ray parameter of each waveform. After applying the Fourier transform to each spatial trace, we have
\begin{align}
\hat{s}(x,\omega) = \sum_{\ell=1}^{r}A_\ell(\omega)e^{-i\omega p_\ell x}, 
\end{align}
where $\omega$ denotes temporal frequency, and $A_\ell(\omega)$ represents the amplitude for the $\ell$th event given by
\begin{align*}
A_\ell(\omega) = \int_{-\infty}^{\infty} u_\ell(t)e^{-i\omega t}dt.
\end{align*}

By replacing the spatial variable $x$ with its discrete counterpart $x=k\Delta_x$ for $k=1,\cdots, n$, we obtain  the discrete signal for one monochromatic temporal frequency $\omega$,
\begin{align}
\hat{ \vs} = \begin{bmatrix}
\hat{\vs}_1,\cdots, \hat{\vs}_{n}
\end{bmatrix},~\mbox{where }\hat{s}_k:=\hat{ s}(k\Delta_x, \omega) = \sum_{\ell=1}^{r}A_\ell(\omega)e^{-i\omega p_\ell k\Delta_x}.
\end{align}
Letting $y_\ell = e^{-i\omega p_\ell\Delta_x}$,  then a simple algebra yields that the Hankel matrix $\calH\hat{ \vs}$ corresponding to $\hat{ \vs}$ has the following Vandermonde decomposition:
\begin{align*}
\calH(\hat{ \vs})= \mL\mD\mR^\tran,
\end{align*}
where the matrices $\mL$ and $\mR$ are given by
\begin{align*}
\mL= \begin{bmatrix}
1 	& 1 	&\cdots & 1\\
y_1	& y_2 	&\cdots &y_r\\
\vdots& \vdots &\ddots & \vdots\\
y_1^{n_1-1} & y_2^{n_1-1} &\cdots &y_r^{n_1-1}\\
\end{bmatrix},~\mR= \begin{bmatrix}
1 	& 1 	&\cdots & 1\\
y_1	& y_2 	&\cdots &y_r\\
\vdots& \vdots &\ddots & \vdots\\
y_1^{n_2-1} & y_2^{n_2-1} &\cdots &y_r^{n_2-1}\\
\end{bmatrix},
\end{align*}
 and $\mD = \diag(A_1(\omega),\cdots, A_r(\omega))$ is a diagonal matrix. If all $y_\ell$'s are distinct and all $A_\ell(\omega)$'s are non-zeros,  it is not hard to see that $\calH(\hat{ \vs})$ is a low rank matrix when $r\ll \min(n_1,n_2)$. Therefore, the seismic signal exhibits a low rank Hankel structure in the Fourier domain for each fixed frequency. In fact, this structure has been  extensively  used in seismic data processing, see for example \cite{oropeza2009multifrequency,sacchi2009fx,trickett2008f,oropeza2011simultaneous,trickett2009prestack,wang2019fast}.

\subsection{Methodology}
In low rank matrix completion, one of the most widely studied method is nuclear norm minimization, where the nuclear norm of a matrix is the sum of its singular values. As the tightest convex envelope of the matrix rank, minimizing the nuclear norm of a matrix is able to promote the low rank structure.
Here we also attempt to complete $\mX^\natural$ by solving a nuclear norm minimization problem,
\begin{align}
\label{opt: our nuclear norm}
\minimize_{\mX\in\C^{d\times n}}~\sum_{i=1}^{d} \nucnorm{\calH\left(\ve_i^\tran \mF\mX\right)}~\text{subject to }~\calP_{\Omega}(\mX) = \calP_{\Omega}(\mX^\natural),
\end{align}
where $\ve_i$ is the $i$-th standard orthonormal basis of $\R^d$, and  $\ve_i^\tran \mF\mX$ is just the $i$-th row of $\mF\mX$.

Note that \eqref{opt: our nuclear norm} is a convex program which can be solved by some well established software packages. Therefore we restrict our attention to the theoretical side of the problem and study when the solution to \eqref{opt: our nuclear norm} \kw{coincides} with $\mX^\natural$. In a nutshell, our result shows that 
\begin{quote}\centering\em
$\mX^\natural$ can be reconstructed  from about $\calO(dr\log^3 (dn))$ observed entries by solving \eqref{opt: our nuclear norm}.
\end{quote}
When $\mX^\natural$ satisfies \eqref{def: hankel low rank structures}, it has $\calO(dr)$ degrees of freedom which can be counted in the Fourier domain. Thus the sampling complexity established in this paper is optimal up to a logarithm factor. The details of the main theoretical result will be presented in Section~\ref{sec:main result}. Next we consider the recovery of a special matrix.
\paragraph{A special example}\label{para: special example}
Let $\mX^\natural$ be a rank-1 matrix given by
\begin{align}\label{eq:specialX}
\mX^\natural = \begin{bmatrix}
1		&0		&\cdots &0\\
1		&0		&\cdots &0\\
\vdots 	&\vdots &\ddots &\vdots \\
1		&0		&\cdots &0\\
\end{bmatrix} = \bone \ve_1^\tran\in\R^{d\times n}.
\end{align}
Since the right singular vector of $\mX^\natural$ is aligned with the canonical basis, we cannot expect to recover $\mX^\natural$ from $\calP_\Omega(\mX^\natural)$ by solving the standard nuclear norm minimization problem,
\begin{align}
\label{opt: nuclear norm}
\kw{\minimize_{\mX\in\C^{d\times n}} ~\nucnorm{\mX}}
~\text{subject to}~\calP_{\Omega}(\mX) = \calP_{\Omega}(\mX^\natural).
\end{align}
Indeed, one can easily show that the solution to \eqref{opt: nuclear norm} cannot be $\mX^\natural$ unless all the 1's in the first column of $\mX^\natural$ are observed.

\begin{figure}[t!]
	\centering
	\includegraphics[width=0.45\linewidth]{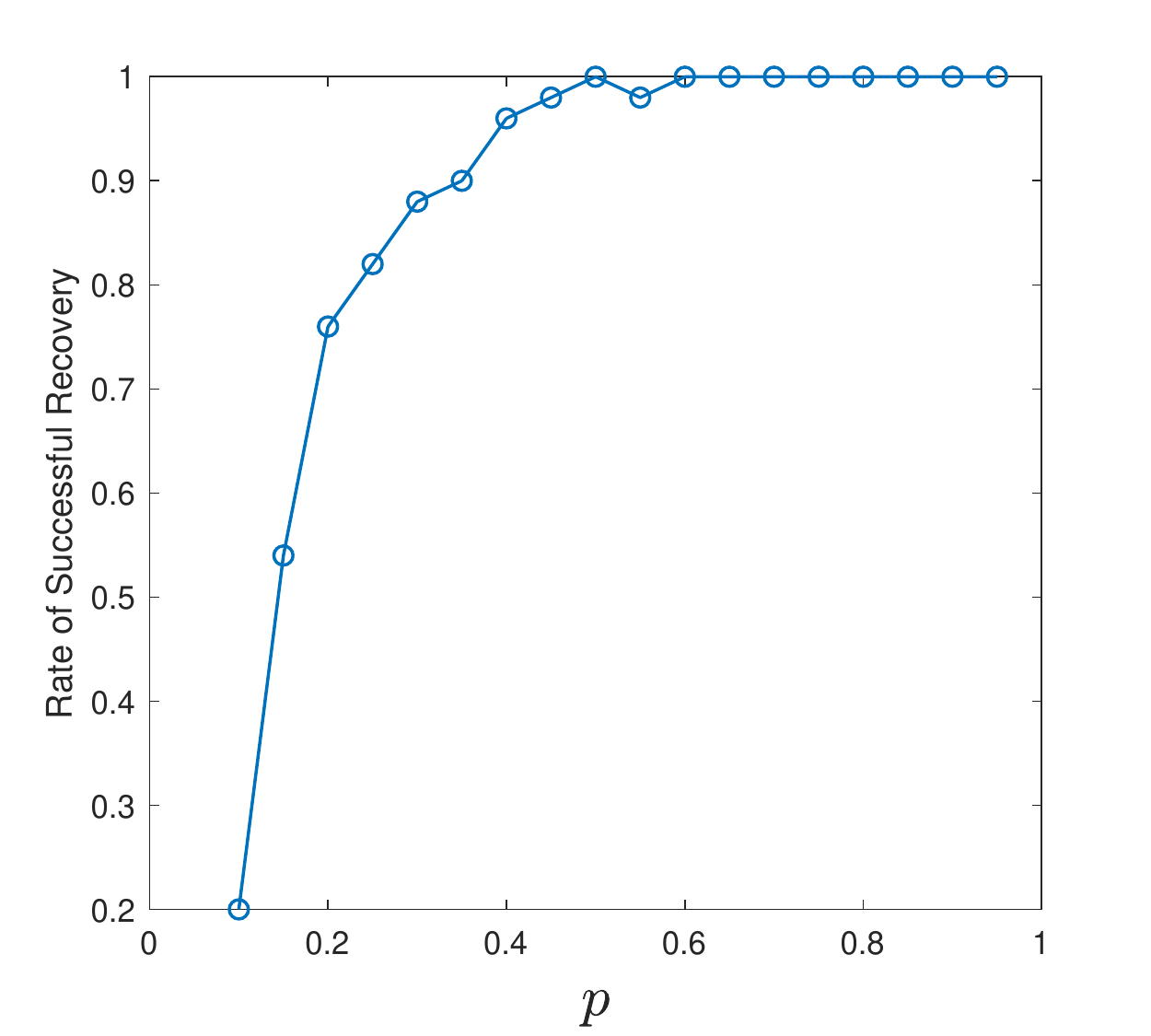}
	\caption{Rate of successful recovery {\it v.s.} Sampling probability for the special example.}
	\label{fig:special}
\end{figure}

To see whether \eqref{opt: our nuclear norm} can be used to fill in the missing entries of $\mX^\natural$, we first test this by Monte Carlo simulation. Here \eqref{opt: our nuclear norm} is solved by SDPT3 \cite{tutuncu2001sdpt3} based on CVX \cite{cvx,gb08}.
Assume that each entry of $\mX^\natural$ is observed independently with probability $p$. In our simulation we set $d=16$, $n=47$, and test $18$ equispaced values of $p$ from $0.1$ to $0.95$. For each value of $p$, $50$ trials are tested and a trial is declared to be successful if the relative reconstruction error is less than $10^{-3}$.
The rate of successful recovery out of $50$ trials against the sampling probability is 
presented in Figure~\ref{fig:special}. As can be seen from the figure, $\mX^\natural$ can be successfully recovered with probability at least 0.9 when $p\geq 0.35$. This is essentially because after the Fourier transform the first column of $\mX^\natural$ becomes a $1$-sparse vector and the convex program \eqref{opt: our nuclear norm}
reduces to the $\ell_1$-minimization method. The following theorem provides a formal justification of this observation. 

\begin{theorem}
	\label{main results: special}
	 Suppose  $d=2^L$ for a positive integer $L$, and assume that each entry of the special matrix $\mX^\natural$  is observed independently with probability $p$. Then $\mX^\natural$ can be exactly recovered by \eqref{opt: our nuclear norm} with probability at least $\left(1-(1-p)^d - dp(1-p)^{(d-1)}\right)/2$, which approaches $0.5$ when $d$ is sufficiently large.
\end{theorem}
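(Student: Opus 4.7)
The plan is to reduce the convex program to an $\ell_1$ recovery problem in the Fourier domain. Writing $\tX = \mF\mX$, a direct computation gives $\tX^\natural = \sqrt{d}\,\ve_1\ve_1^\tran$, so $\mX^\natural$ attains objective value $\sqrt{d}$. For any feasible $\mX$ I would exploit the elementary inequality $\nucnorm{\calH(\vx)}\geq\opnorm{\calH(\vx)}\geq |\vx_1|$ (because $\vx_1 = \calH(\vx)_{1,1}$) to deduce
\[
\sum_{i=1}^d \nucnorm{\calH(\tX(i,:))}\;\geq\;\sum_{i=1}^{d}|\tX(i,1)|\;=\;\|\mF\mX(:,1)\|_1.
\]
The equality case of each term requires $\calH(\tX(i,:))$ to be rank one with all its mass at entry $(1,1)$, which propagates up the antidiagonals to force $\mX(:,j)=0$ for every $j\geq 2$. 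Hence exact recovery is equivalent to proving that $\bone$ is the unique minimizer of $\|\mF\mx\|_1$ over $\{\mx\in\C^d:\calP_{\Omega_1}(\mx)=\bone_{\Omega_1}\}$, where $\Omega_1 = \{k:(k,1)\in\Omega\}$ — the classical problem of recovering the $1$-sparse DFT-domain vector $\sqrt{d}\,\ve_1$ from random time-domain samples.

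For the $\ell_1$ subproblem I would construct an explicit dual certificate. Standard convex duality asks for some $\vq\in\C^d$ supported on $\Omega_1$ with $(\mF\vq)_1 = 1$ and $|(\mF\vq)_j|<1$ for every $j\neq 1$. A natural candidate is the symmetric one $\vq_k = \sqrt{d}/|\Omega_1|$ for $k\in\Omega_1$; a short calculation then yields
\[
(\mF\vq)_j\;=\;\frac{1}{|\Omega_1|}\sum_{k\in\Omega_1}\zeta^{-(j-1)(k-1)},\qquad \zeta = e^{2\pi i/d},
\]
and the triangle inequality is strict unless all summands share a common phase, i.e.\ unless $(j-1)(k-k')\equiv 0\!\pmod{d}$ for every $k,k'\in\Omega_1$. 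Since $d = 2^L$, the largest proper subgroup of $\Z/d\Z$ is the set of even residues, so a short divisibility argument identifies the ``bad'' $\Omega_1$ precisely as those contained in a single parity class. Consequently the convex program succeeds whenever $\Omega_1$ contains at least one even and at least one odd index.

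The probability bound is then a clean combinatorial estimate. Conditional on $|\Omega_1| = k\geq 2$, the set $\Omega_1$ is a uniform random $k$-subset of $\{1,\ldots,d\}$, so
\[
\Pr{\text{all same parity}\,\big|\,|\Omega_1|=k}\;=\;\frac{2\binom{d/2}{k}}{\binom{d}{k}}\;\leq\;\frac{d-2}{2(d-1)}\;\leq\;\tfrac{1}{2},
\]
with the maximum attained at $k = 2$. Averaging over $k\geq 2$ yields $\Pr{\text{mixed parity}\mid|\Omega_1|\geq 2}\geq 1/2$, and hence
\[
\Pr{\text{recovery}}\;\geq\;\tfrac{1}{2}\,\Pr{|\Omega_1|\geq 2}\;=\;\tfrac{1}{2}\bigl(1-(1-p)^d-dp(1-p)^{d-1}\bigr),
\]
which tends to $0.5$ as $d\to\infty$. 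The only delicate step I foresee is the equality analysis of the Hankel bound in the first paragraph: one must combine $\opnorm{\calH(\vx)} = |\vx_1|$ with rank one to pin the singular vectors to $\ve_1,\ve_1$ and then read off, antidiagonal by antidiagonal, that every $\vx_k$ with $k\geq 2$ must vanish. Everything else is either elementary DFT arithmetic or a standard dual-certificate argument, and the hypothesis $d = 2^L$ enters solely to make ``mixed parity'' the correct dichotomy for the coset condition.
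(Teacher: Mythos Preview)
Your proposal is correct and follows essentially the same route as the paper: both reduce to the basis-pursuit problem for the first column in the Fourier domain, use the identical symmetric dual certificate $\vq_k=\sqrt{d}/|\Omega_1|$ on $\Omega_1$, invoke the parity argument enabled by $d=2^L$, and arrive at the same probability bound. If anything, your reduction is more careful than the paper's, which simply asserts that columns $2$ through $n$ of any optimal solution vanish, whereas you justify this via the equality analysis of $\nucnorm{\calH(\vx)}\geq|\vx_1|$.
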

We include the proof of this theorem  in Appendix~\ref{sec:special} since the key idea behind the proof is actually the same as that for the proof of the main result. Overall,  a dual variable needs to be constructed to certify the optimality of $\mX^\natural$.  In addition,  we would like to emphasize that the special example here merely shows a different way to exploit the matrix structure, and it by no means suggests that the low rank Hankel structure in the Fourier domain is more suitable than the low rank structure of the matrix itself for all the matrix completion problems.

\subsection{Related work}
In addition to low rank matrix completion, this work is also related to spectrally compressed sensing, low rank matrices demixing, and tensor completion via t-SVD.
\paragraph{Spectrally compressed sensing} 
This line of research is about reconstructing a spectrally sparse signal from a small number of time domain samples. It is indeed a special case of the problem studied in this paper when $d=1$. Both convex methods \cite{chen2014robust,bhaskar2013atomic,tang2013compressed} and nonconvex methods \cite{cai2019fast,cai2018spectral} have been proposed and analyzed for spectral compressed sensing. In particular, a method  called Enhanced Matrix Completion (EMaC) is studied  in the work of Chen and Chi \cite{chen2014robust}, which seeks a successful recovery by minimizing the nuclear norm of the Hankel matrix associated with the  signal.

\paragraph{Low rank matrices demixing} Demixing problems appear in many applications and low rank matrices demixing is concerned with the recovery of a few low rank matrices simultaneously 
from a single measurement vector. In \cite{mccoy2013achievable}, McCoy and Tropp consider the demixing model of the form $\vy = \mA \left( \sum_{i=1}^d  \mU_i \mX_k \right)$ and they propose to
reconstruct $\{\mX_i\}$ by minimizing a sum of the nuclear norms of all the constituent matrices.
Computationally efficient methods have \kw{been} developed for low rank matrices demxing in \cite{strohmer2019painless} and the exact recovery guarantees have been established based on the Gaussian measurement model. A joint blind deconvolution and blind demixing problem has been studied in \cite{ling2017blind,jung2017blind}, which can be reformulated as a demixing problem of rank-1 matrices from rank-1 measurements. As we will see later, the problem of recovering $\mX^\natural$ from $\calP_\Omega(\mX^\natural)$ based on \eqref{def: hankel low rank structures} can also be reformulated as a low rank matrices demixing problem.

\paragraph{Tensor completion via t-SVD} A tensor is an $N$-way array. When $N\geq 3$, there are various ways to define tensor ranks, typically based on different tensor factorizations such as the
 CANDECOMP/PARAFAC (CP)  factorization \cite{carroll1970analysis,harshman1970foundations,kolda2009tensor}, the Tucker factorization \cite{tucker1966some,grasedyck2010hierarchical}, the Tensor Train factorization \cite{oseledets2011tensor} , and the t-SVD factorization \cite{braman2010third,kilmer2011factorization,kilmer2013third,gleich2013power}. The philosophy behind our work is similar to that behind the tensor completion model via t-SVD \cite{zhang2017tensor,lu2018exact,lu2019tensor}, both of which utilize the low rank structure in the transform domain for data reconstruction. That being said, the t-SVD model cannot be used for matrix completion but  only work for high dimensional tensors as it does not involve another transformation from vectors to structure matrices. Therefore, our result cannot be covered by those for tensor completion via t-SVD.
 
 \subsection{Organization}
 The remainder of this paper is organized as follows. Notation and preliminaries that are useful for our analysis are given in Section \ref{sec: preliminary}. The exact recovery guarantee of \eqref{opt: our nuclear norm} and numerical simulations are presented in Section \ref{sec:main result}. The proofs of the exact recovery guarantee  are provided from Section \ref{sec:proof1} to Section \ref{proof lemma key02to05}. 
\section{Notation and Preliminaries}
\label{sec: preliminary}
Throughout this paper vectors and matrices are denoted by bold lowercase and bold uppercase letters, respectively.  Moreover,  we often consider matrices in the Fourier domain, so there will be a hat over the letter in this case. Operators are denoted by calligraphic letters. In particular,  $\calI$ denotes the identity operator. 
For any matrix $\mX$, $\opnorm{\mX}, \fronorm{\mX}$ and $ \nucnorm{\mX}$ denote its spectral norm, Frobenius norm and nuclear norm, respectively. Given two complex matrices $\mX$ and $\mY$, their inner product is given by $\la\mX, \mY\ra = \trace(\mX\mY^\tranH)$. For a natural number $d$, we use $[d]$ to denote the set $\{1,\cdots, d\}$. The conjugate of a complex number $x\in\C$ is denoted by $\bar{x}$.

Recall that $\calH$ is a linear operator which associates a vector with a Hankel matrix. The adjoint of $\calH$, denoted $\calH^*$, is a linear mapping from $n_1\times n_2$ matrices to vectors of length $n$, 
\begin{align*}
[\calH^\ast(\mX)](a) = \sum_{j+k=a+1}\mX_{j,k},\quad\text{ for any }\mX\in\C^{n_1\times n_2},1\leq a\leq n.
\end{align*}
Let $\calD^2=\calH^*\calH$. One can easily verify that $\calD^2$ maps any vector $\vx\in\C^n$ to $\calD^2\vx = \{w_a\vx_a\}_{a=1}^n$, where $w_a$ denotes the number of entries on the $a$-th anti-diagonal of an $n_1\times n_2$ matrix, 
\begin{align*}
w_a = \#\left\{ (j,k)~|~j+k=a+1 ,1\leq j\leq n_1, 1\leq k\leq n_2 \right\}. \numberthis\label{eq:wi}
\end{align*}
Define $\calG=\calH\calD^{-1}$. Then its adjoint is given by $\calG^*=\calD^{-1}\calH$ and moreover we have $\calG^*\calG=\calI$. In addition, 
$\{\mG_a = \calG(\ve_a)\}_{a=1}^n$ form an orthogonal basis of the set of Hankel matrices, where $\ve_a$ is the $a$-th standard basis vector of $\R^n$.

With $\calG$ defined as above, let $\tcalG$ be an operator which maps a $d\times n$ matrix $\mX$ to  a $d$-block diagonal matrix of the form
\begin{align}
\label{def: G hat}
\tcalG(\mX) = \begin{bmatrix}
\calG(\ve_1^\tran\mF\mX)&&\\
&\ddots&\\
&&\calG(\ve_d^\tran\mF\mX)\\
\end{bmatrix}\in\C^{dn_1\times dn_2}.
\end{align}
The adjoint of $\tcalG$, \kw{denoted $\tcalGT$}, is given by
\begin{align}
\label{def G hat adjoint}
\tcalG^\ast(\tZ) = \mF^{-1}\left(\sum_{i=1}^{d}\ve_i\calG^\ast(\tZ_i) \right)\in\C^{d\times n},\text{ where $\tZ\in\C^{dn_1\times dn_2}$ is a $d$-block diagonal matrix}.
\end{align}
Furthermore, letting 
\begin{align}
\label{eq: Gjk}
\tcalG(\ve_j\ve_k^\tran)=\tmG_{j,k}=
\begin{bmatrix}
\ve_1^\tran\mF\ve_j\mG_k&&\\
&\ddots&\\
&&\ve_d^\tran\mF\ve_j\mG_k\\
\end{bmatrix}
\in\C^{dn_1\times dn_2},
\end{align}
we have the following lemma.
\begin{lemma}\label{lem:property 2}
The spectral norm of $\tmG_{j,k}$ satisfies $\|\tmG_{j,k}\|\leq{1}/{\sqrt{d w_k}}.$
\end{lemma}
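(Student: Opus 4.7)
The plan is to exploit the block-diagonal structure of $\tmG_{j,k}$ to reduce the problem to computing the spectral norm of a single block, and then to decompose each block into a scalar factor times a rescaled Hankel basis matrix whose norm can be read off directly.

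First, I would observe that since $\tmG_{j,k}$ is block diagonal with diagonal blocks $\ve_i^\tran \mF \ve_j \, \mG_k$ for $i=1,\dots,d$, its spectral norm equals $\max_{i\in[d]} |\ve_i^\tran \mF \ve_j| \cdot \|\mG_k\|$. Because $\mF$ is the unitary DFT matrix, every entry $\ve_i^\tran\mF\ve_j$ has modulus $1/\sqrt{d}$. This immediately peels off the factor of $1/\sqrt{d}$ and reduces the claim to $\|\mG_k\|\leq 1/\sqrt{w_k}$.

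Next, I would use the definition $\mG_k=\calG(\ve_k)=\calH\calD^{-1}\ve_k$. Since $\calD^2$ is the diagonal operator multiplying coordinates by the weights $w_a$ defined in \eqref{eq:wi}, we have $\calD^{-1}\ve_k=\ve_k/\sqrt{w_k}$, so $\mG_k = \calH(\ve_k)/\sqrt{w_k}$. Thus it remains to show $\|\calH(\ve_k)\|=1$.

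Finally, I would verify this last fact by a direct structural argument: $\calH(\ve_k)$ is the $n_1\times n_2$ 0/1 matrix with a $1$ in position $(j,l)$ exactly when $j+l=k+1$, i.e., 1's on a single anti-diagonal. Each row (and each column) contains at most one 1, so the rows are pairwise orthogonal, $\calH(\ve_k)\calH(\ve_k)^\tranH$ is a diagonal projection, and hence $\|\calH(\ve_k)\|=1$. Combining the three factors $1/\sqrt{d}$, $1$, and $1/\sqrt{w_k}$ gives the claim. I do not anticipate any real obstacle here; the only thing to be careful about is correctly identifying $\calD^{-1}\ve_k$ via the eigenstructure of $\calD^2$ so that the normalization $1/\sqrt{w_k}$ comes out cleanly.
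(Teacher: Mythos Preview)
Your proposal is correct and follows essentially the same approach as the paper: the paper's proof simply cites the two facts $\|\tmG_{j,k}\|=\max_{1\leq i\leq d}\|(\ve_i^\tran\mF\ve_j)\mG_k\|$ and $\|\mG_k\|\leq 1/\sqrt{w_k}$, and you have supplied the (straightforward) justification for the latter via $\mG_k=\calH(\ve_k)/\sqrt{w_k}$ and the partial-permutation structure of $\calH(\ve_k)$.
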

\begin{proof}
It follows directly from the facts
$\|\mG_k\|\leq 1/\sqrt{w_k}$ and $\|\tmG_{j,k}\|=\max_{1\leq i\leq d}\|(\ve_i^\tran\mF\ve_j)\mG_k\|$.
\end{proof}

Let $\tZ_i= \tU_i\tSigma_i\tV_i^\tranH$ be the compact singular value decomposition (SVD) of a rank-$r$ matrix, where $\tU_i\in\C^{n_1\times r}, \tV_i\in\C^{n_2\times r}$ and $\tSigma_i\in\R^{r\times r}$. The sub-differential of $\|\cdot\|_*$ at \kw{$\tZ_i$} is given by  \cite{watson1992characterization}
\begin{align}
\partial \nucnorm{\tZ_i} = \tU_i\tV_i^\tranH + \left\{ \tW_i: \tW_i^\tranH\tU_i =\bzero, \tW_i\tV_i = \bzero, \opnorm{\tW_i}\leq 1 \right\}.
\end{align}
It is also known that the tangent space of the fixed rank-$r$ matrix manifold at \kw{$\tZ_i$} is given by \cite{vandereycken2013low}
\begin{align*}
\tT_i = \left\{ \tU_i\tA_i^\tranH + \tB_i\tV_i^\tranH: \tA_i\in\C^{n_2\times r},\tB_i\in\C^{n_1\times r} \right\}.
\end{align*}
Given any matrix $\tW_i\in\C^{n_1\times n_2}$, the projection of $\tW_i$ onto $\tT_i$ can be computed  using the formula  
\begin{align*}
\calP_{\tT_i}(\tW_i) = \tU_i\tU_i^\tranH\tW_i + \tW_i\tV_i\tV_i^\tranH -  \tU_i\tU_i^\tranH\tW_i\tV_i\tV_i^\tranH.
\end{align*}
Let $\tZ=\diag(\tZ_1,\cdots,\tZ_d)$, where each $\tZ_i$ is a rank-$r$ matrix with the compact SVD
$\tZ_i= \tU_i\tSigma_i\tV_i^\tranH$. 
Define $\tT=\diag(\tT_1,\cdots,\tT_d)$. 
For any $d$-block diagonal matrix $\tW=\diag(\tW_1,\cdots,\tW_d)$, the projection of $\tW$ onto $\tT$ is given by
\begin{align*}
\calP_{\tT}(\tW) = \begin{bmatrix}
\calP_{\tT_1}(\tW_1)&&\\
&\ddots&\\
&&\calP_{\tT_d}(\tW_d)
\end{bmatrix}.
\end{align*} 

Lastly, our analysis relies on the Bernstein inequality, which is stated as follows.
\begin{lemma}[\cite{tropp2012,chen2014robust}]
	Suppose $\{ \mX_\ell \}_{\ell=1}^m$ are independent random matrices  of dimension $d_1\times d_2$ and satisfy $\E{\mX_{\ell}} = \bzero$ and $\opnorm{\mX_\ell}\leq B$. Define
	\begin{align*}
	\sigma^2= \max\left\{ \opnorm{\sum_{\ell=1}^m\E{\mX_\ell\mX_\ell^\tranH}},\opnorm{\sum_{\ell=1}^m\E{\mX_\ell^\tranH\mX_\ell} } \right\}.
	\end{align*}
	Then the event 
	\begin{align}
	\label{bernstein}
	\opnorm{\sum_{\ell=1}^m\mX_{\ell}} \leq c\left( \sqrt{\sigma^2\log(d_1+d_2)} + B\log(d_1+d_2) \right)
	\end{align}
	occurs  with probability at least $1-(d_1+d_2)^{-c_1}$, where $c,c_1>0$ are absolute constants.
\end{lemma}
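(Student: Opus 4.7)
The plan is to prove the matrix Bernstein inequality via the standard Laplace transform machinery developed by Ahlswede--Winter and Tropp. Since the matrices $\mX_\ell$ are rectangular, the first move is to pass to Hermitian dilations. Define
\begin{align*}
\mY_\ell = \begin{bmatrix} \bzero & \mX_\ell \\ \mX_\ell^{\tranH} & \bzero \end{bmatrix} \in \C^{(d_1+d_2)\times(d_1+d_2)}.
\end{align*}
Then $\mY_\ell$ is Hermitian, $\E{\mY_\ell}=\bzero$, $\opnorm{\mY_\ell}=\opnorm{\mX_\ell}\leq B$, and crucially $\opnorm{\sum_\ell \mX_\ell} = \lambda_{\max}(\sum_\ell \mY_\ell)$. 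Moreover $\mY_\ell^2 = \blkdiag(\mX_\ell\mX_\ell^{\tranH},\, \mX_\ell^{\tranH}\mX_\ell)$, so $\opnorm{\sum_\ell \E{\mY_\ell^2}} = \sigma^2$ with the same $\sigma^2$ as in the statement. This reduction lets me focus on bounding $\lambda_{\max}$ of a sum of bounded, zero-mean, Hermitian random matrices.

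The core of the argument is the Laplace transform step: for any $\theta>0$ and $t>0$,
\begin{align*}
\Pr{\lambda_{\max}(\textstyle\sum_\ell \mY_\ell)\geq t} \leq e^{-\theta t}\,\E{\trace\, e^{\theta \sum_\ell \mY_\ell}}.
\end{align*}
To split the expectation over independent summands I invoke Lieb's concavity theorem, which yields the subadditivity of the matrix cumulant generating function,
\begin{align*}
\E{\trace\, e^{\theta \sum_\ell \mY_\ell}} \leq \trace \exp\left(\sum_\ell \log \E{e^{\theta \mY_\ell}}\right).
\end{align*}
For each term I use the scalar inequality $e^x \leq 1 + x + \tfrac{x^2/2}{1-|x|/3}$ in the functional calculus: provided $\theta B < 3$, since $\E{\mY_\ell}=\bzero$ and $\opnorm{\mY_\ell}\leq B$, this gives $\E{e^{\theta \mY_\ell}} \preceq \mI + g(\theta)\,\E{\mY_\ell^2}$ with $g(\theta)=\tfrac{\theta^2/2}{1-\theta B/3}$, and then $\log \E{e^{\theta \mY_\ell}} \preceq g(\theta)\,\E{\mY_\ell^2}$ from operator monotonicity of $\log(\mI + \cdot)$.

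Assembling these pieces and using $\trace \exp(\cdot) \leq (d_1+d_2)\exp(\lambda_{\max}(\cdot))$, the operator-norm control $\opnorm{\sum_\ell \E{\mY_\ell^2}} \leq \sigma^2$ yields the master Bernstein tail
\begin{align*}
\Pr{\opnorm{\textstyle\sum_\ell \mX_\ell}\geq t} \leq (d_1+d_2)\exp\!\left(-\frac{t^2/2}{\sigma^2 + Bt/3}\right),
\end{align*}
after optimizing over $\theta\in(0,3/B)$. The final step is to invert this tail: setting $t = c\bigl(\sqrt{\sigma^2 \log(d_1+d_2)} + B\log(d_1+d_2)\bigr)$, the sub-Gaussian term dominates when $\sigma^2 \gtrsim B^2 \log(d_1+d_2)$ and the sub-exponential term dominates otherwise, and in either regime the exponent is at most $-(c_1+1)\log(d_1+d_2)$ for $c$ large enough depending on $c_1$, giving the desired failure probability $(d_1+d_2)^{-c_1}$.

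The main obstacle is the Lieb-concavity step: the natural scalar manipulation $\E{e^{\theta\sum \mY_\ell}} = \prod_\ell \E{e^{\theta \mY_\ell}}$ fails because the $\mY_\ell$ do not commute, so exchanging expectation and matrix exponential requires a genuinely non-commutative tool. Everything else (dilation, Taylor bound on the matrix MGF, optimization over $\theta$) is essentially a mechanical lifting of the scalar Bernstein proof, but Lieb's theorem is what makes the matrix version possible and is the one ingredient I would not attempt to re-derive.
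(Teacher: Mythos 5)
The paper does not prove this lemma; it cites it directly from Tropp (2012) and Chen--Chi (2014) and uses it as a black box. Your proposal correctly reconstructs the standard Tropp argument---Hermitian dilation, Laplace transform, Lieb's concavity theorem to obtain subadditivity of the matrix cumulant generating function, the Taylor-type bound on the matrix MGF, and inversion of the resulting Bernstein tail---which is precisely the proof in the cited reference, so this is the same approach.
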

\section{Main results}
\label{sec:main result}
To study the recovery guarantee of \eqref{opt: our nuclear norm}, we first reformulate it as a block diagonal low rank Hankel matrix recovery problem. Given a matrix $\mX\in\C^{d\times n}$, let $\tZ=\diag(\tZ_1,\cdots,\tZ_d)$ be a $d$-block diagonal matrix defined as 
\begin{align*}
\tZ=\begin{bmatrix}
\calH(\ve_1^\tran\mF\mX)&&\\
&\ddots&\\
&&\calH(\ve_d^\tran\mF\mX)\\
\end{bmatrix}= \begin{bmatrix}
\calG(\ve_1^\tran\mF\mX\mD)&&\\
&\ddots&\\
&&\calG(\ve_d^\tran\mF\mX\mD)\\
\end{bmatrix}\in\C^{dn_1\times dn_2},
\end{align*}
where $\mD$ is diagonal matrix with $\mD_{ii} = \sqrt{w_i}$. Then we have the following facts:
\begin{align*}
(\calI - \tcalG\tcalG^\ast)(\tZ) = \bzero,~\sum_{i=1}^{d} \nucnorm{\calH\left(\ve_i^\tran \mF\mX\right)} = \nucnorm{\tZ},~\mbox{and}~\kw{\calP_{\Omega}(\mX\mD) = \calP_{\Omega}\tcalGT(\tZ)}.
\end{align*}
Therefore, \eqref{opt: our nuclear norm} is equivalent to the following optimization problem in the Fourier domain,
\begin{align}
\label{opti: nuclear norm}
\minimize_{\tZ\in\C^{dn_1\times dn_2}}&\nucnorm{\tZ}\notag\\
\text{subject to}~&(\calI - \tcalG\tcalG^\ast)(\tZ) = \bzero\\
&\calP_{\Omega}\tcalG^\ast(\tZ) = \calP_{\Omega}\tcalG^\ast(\tZ^\natural),\notag
\end{align}
which will be our primary focus in the later analysis.
\begin{remark}{\normalfont
As we have already mentioned earlier, \eqref{opti: nuclear norm} (or equivalently, \eqref{opt: our nuclear norm}) is indeed a low rank Hankel matrices demixing problem. This can be seen by defining $\calA_i(\tZ_i) = \calP_{\Omega}(\mF^{-1}\ve_i\calG^\ast(\tZ_i))$ and further noticing that
$
	\calP_{\Omega}\tcalG^\ast(\tZ)=\sum_{i=1}^{d}\calA_{i}(\tZ_i),
$
}
\end{remark}
\kw{
\begin{remark}\normalfont
	Noticing that the $(j,k)$th entry of $\tcalGT(\tZ)$ is given by
	\begin{align*}
	[\tcalGT(\tZ)]_{j,k} = \la\ve_j\ve_k^\tran,\tcalGT(\tZ)\ra = \sum_{i=1}^{d}\la\ve_i^\tran\mF\ve_j \mG_k,\tZ_i\ra=\la\tmG_{j,k},\tZ \ra,
	\end{align*}
	the last constraint in \eqref{opti: nuclear norm} can be reformulated as
	\begin{align*}
	\la\tmG_{j,k}, \tZ \ra = \la\tmG_{j,k}, \tZ^\natural \ra,\quad(j,k)\in\Omega.
	\end{align*}
	This means  \eqref{opti: nuclear norm} is also a low rank matrix recovery problem from a few coefficients in an orthogonal basis \cite{gross2011recovering}. However, we note that the performance guarantee establish in \cite{gross2011recovering} does not apply here because we also require the matrix to be Hankel and block diagonal.\end{remark}
}
\subsection{Exact recovery guarantee}
As is typical in the analysis for low rank matrix completion, the recovery guarantee of \eqref{opti: nuclear norm} relies on a notion of incoherence, which is defined as follows. Without loss of generality, we assume $n_1=n_2=(n+1)/2$  in the sequel; that is, $\calH$ maps a vector of length $n$ to a square Hankel matrix.

\begin{definition}[Average Case Incoherence Property]
\label{def: incoherence}
Let $\tZ^\natural$ be a $d$-block diagonal matrix. Assume the compact SVD of its $a$th-block diagonal is given by $\tZ_a^\natural = \tU_a\tSigma_a\tV^\tranH_a$. The matrix $\tZ^\natural$ is said to obey the incoherence condition with parameter $\mu_0$ if 
\begin{align}
\label{def: incoherent}
\max_{1\leq i\leq (n+1)/2}\frac{1}{d} \sum_{a=1}^{d}  \vecnorm{ \ve_i^\tran\tU_a}{2}^2 \leq  \frac{\mu_0r}{n}\quad\mbox{and}\quad \max_{1\leq j\leq (n+1)/2}\frac{1}{d} \sum_{a=1}^{d}  \vecnorm{ \ve_j^\tran\tV_a}{2}^2 \leq  \frac{\mu_0r}{n},
\end{align}
\kw{where $\ve_i$ is the $i$-th standard basis vector of suitable length.}
\end{definition}

When $d=1$,  the average case incoherence property reduces to the standard incoherence property that has been widely used in \cite{chen2014robust,cai2019fast} for low rank matrix completion.  In this case, it means that the energy of the singular vector matrices is evenly distributed across all the rows so that  the singular vectors are
weakly correlated with the canonical basis. In general, Definition~\ref{def: incoherence} means that the average of the $i$-th rows of all the left singular vector matrices $\tU_a$ (respectively, the right singular vector matrices $\tV_a$) is in the same order for every $i$. Moreover, if $\{\tU_a,\tV_a\}_{a=1}^d$ satisfies the {\em worst case incoherence property} 
\begin{align}
\label{def worst incoh}
\max_{\substack{1\leq i\leq (n+1)/2\\1\leq a\leq d}} \vecnorm{ \ve_i^\tran\tU_a}{2}^2 \leq  \frac{\mu_1r}{n}\quad\mbox{and}\quad \max_{\substack{1\leq j\leq (n+1)/2\\1\leq a\leq d}}  \vecnorm{ \ve_j^\tran\tV_a}{2}^2 \leq  \frac{\mu_1r}{n},
\end{align}
it follows immediately that the average case incoherence property holds with $\mu_0=\mu_1$. However, the reverse direction is not true. As an example, consider the special matrix \kw{$\mX^\natural$} in \eqref{eq:specialX}. Multiplying \kw{$\mX^\natural$} by $\mF$ from the left yields 
\begin{align*}
\mF\kw{\mX^\natural} = \begin{bmatrix}
\sqrt{d} & 0& \cdots &0\\
0 &0 & \cdots &0\\
\vdots & \vdots &\ddots &\vdots\\
0 &0 & \cdots &0\\
\end{bmatrix}.
\end{align*}
Consequently,
\kw{
\begin{align*}
\tZ^\natural = \begin{bmatrix}
\calH(\ve_1^\tran\mF\mX^\natural)&&\\
&\ddots&\\
&&\calH(\ve_d^\tran\mF\mX^\natural)\\
\end{bmatrix}=
\begin{bmatrix}
\sqrt{d}\ve_1\ve_1^\tran&&&\\
& \bm{0}&&\\
&&\ddots&&\\
&&&\bm{0}
\end{bmatrix}.
\end{align*}
}
Therefore  we have 
\begin{align*}
\max_{1\leq i\leq (n+1)/2}\frac{1}{d} \sum_{a=1}^{d}  \vecnorm{ \ve_i^\tran\tU_a}{2}^2 =\frac{1}{d}\quad\mbox{and}\quad \max_{\substack{1\leq i\leq (n+1)/2\\1\leq a\leq d}} \vecnorm{ \ve_i^\tran\tU_a}{2}^2=1.
\end{align*}
It follows that the worst case incoherence property cannot be satisfied unless $\mu_1=\calO(n)$, but the average incoherence property may hold with $\mu_0=\calO(1)$ if $d$ is proportional to $n$.

The following lemma is a direct consequence of the average incoherence property, and  the short proof is provided in   Appendix~\ref{proof lemma incoherence 1}.
\begin{lemma}
	\label{lemma incoherence 1}
	\kw{Let $\tZ^\natural$ be a $d$-block diagonal matrix with the compact SVD of its $a$-th diagonal block being given by $\tZ_a^\natural = \tU_a\tSigma_a\tV^\tranH_a$. Suppose $\tZ^\natural$ satisfies the average case incoherence condition with parameter $\mu_0$.}
	Then
	\begin{align}
	\label{ineq 1}
	\max_{k} \frac{1}{d}\sum_{a=1}^{d}\fronorm{\mG_k^\tranH\tU_a}^2 \leq \frac{\mu_0 r}{n}\quad\text{and}\quad\max_{k} \frac{1}{d}\sum_{a=1}^{d}\fronorm{\mG_k\tV_a}^2 \leq \frac{\mu_0r}{n}.
	\end{align}
	Furthermore, let $\tT$ be the induced tangent space of $\tZ^\natural$. Then for 
	any $1\leq j\leq d, 1\leq k\leq n$, we have
	\begin{align}
	\label{ineq 2}
	\fronorm{\left(\calP_{\tT}\tcalG\right)(\ve_j\ve_k^\tran)}^2 \leq  \frac{2\mu_0r}{n}.
	\end{align}
\end{lemma}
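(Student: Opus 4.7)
The plan is to prove the two inequalities in order, since the second bound reduces to the first via the block-diagonal structure of $\tmG_{j,k}$ and the unitarity of $\mF$.

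First, I would unpack $\mG_k$ explicitly. Since $\mG_k=\calG(\ve_k)=\calH(\ve_k)/\sqrt{w_k}$, the matrix $\mG_k\in\C^{n_1\times n_2}$ has entry $1/\sqrt{w_k}$ at positions $(i,j)$ with $i+j=k+1$ and zero elsewhere. Consequently, for any matrix $\mM\in\C^{n_1\times r}$, a direct computation gives
\begin{align*}
\|\mG_k^{\tranH}\mM\|_F^2=\frac{1}{w_k}\sum_{(i,j):\,i+j=k+1}\|\ve_i^{\tran}\mM\|_2^2,
\end{align*}
where the number of terms in the sum is exactly $w_k$. Applying this with $\mM=\tU_a$ and swapping sums yields
\begin{align*}
\frac{1}{d}\sum_{a=1}^{d}\|\mG_k^{\tranH}\tU_a\|_F^2=\frac{1}{w_k}\sum_{(i,j):\,i+j=k+1}\frac{1}{d}\sum_{a=1}^{d}\|\ve_i^{\tran}\tU_a\|_2^2\le\frac{1}{w_k}\cdot w_k\cdot\frac{\mu_0 r}{n}=\frac{\mu_0 r}{n},
\end{align*}
where the inequality is the average-case incoherence in Definition~2.1. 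The same argument applied to $\mG_k\tV_a$ (expressing $\|\mG_k\mN\|_F^2$ analogously as an average of $w_k$ row norms of $\mN$) gives the matching bound for $\tV_a$. This settles \eqref{ineq 1}.

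For \eqref{ineq 2}, I would exploit the block-diagonal structure of $\tmG_{j,k}$ from \eqref{eq: Gjk}. Because $\calP_{\tT}$ acts blockwise, one has
\begin{align*}
\|\calP_{\tT}\tcalG(\ve_j\ve_k^{\tran})\|_F^2=\sum_{a=1}^{d}|\ve_a^{\tran}\mF\ve_j|^2\,\|\calP_{\tT_a}(\mG_k)\|_F^2=\frac{1}{d}\sum_{a=1}^{d}\|\calP_{\tT_a}(\mG_k)\|_F^2,
\end{align*}
using $|\ve_a^{\tran}\mF\ve_j|^2=1/d$ for the unitary DFT. Next I would use the standard orthogonal decomposition $\calP_{\tT_a}(\mM)=\tU_a\tU_a^{\tranH}\mM+(\mI-\tU_a\tU_a^{\tranH})\mM\tV_a\tV_a^{\tranH}$, whose two summands are orthogonal in the Frobenius inner product, to get
\begin{align*}
\|\calP_{\tT_a}(\mG_k)\|_F^2\le\|\tU_a^{\tranH}\mG_k\|_F^2+\|\mG_k\tV_a\|_F^2=\|\mG_k^{\tranH}\tU_a\|_F^2+\|\mG_k\tV_a\|_F^2.
\end{align*}
Summing over $a$ and applying \eqref{ineq 1} to each term yields $\|\calP_{\tT}\tcalG(\ve_j\ve_k^{\tran})\|_F^2\le 2\mu_0 r/n$, as claimed.

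The arguments are largely bookkeeping; the only place that needs care is verifying the identity $\|\mG_k^{\tranH}\mM\|_F^2=w_k^{-1}\sum_{(i,j):\,i+j=k+1}\|\ve_i^{\tran}\mM\|_2^2$, where one must be attentive to which index ranges over the $w_k$ valid anti-diagonal positions so that every row norm contributing to the sum appears exactly once. No deep tool is needed: the key ingredients are the explicit sparsity pattern of $\mG_k$, $|\ve_a^{\tran}\mF\ve_j|^2=1/d$, and the Pythagorean decomposition of $\calP_{\tT_a}$.
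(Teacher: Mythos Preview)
Your proposal is correct and follows essentially the same route as the paper's proof: both unpack $\mG_k$ on its anti-diagonal support to reduce \eqref{ineq 1} to the average-case incoherence bound, and both handle \eqref{ineq 2} by using $|\ve_a^{\tran}\mF\ve_j|^2=1/d$ blockwise together with the Pythagorean-type bound $\|\calP_{\tT_a}(\mG_k)\|_F^2\le\|\tU_a^{\tranH}\mG_k\|_F^2+\|\mG_k\tV_a\|_F^2$. Your use of the orthogonal splitting $\calP_{\tT_a}(\mM)=\tU_a\tU_a^{\tranH}\mM+(\mI-\tU_a\tU_a^{\tranH})\mM\tV_a\tV_a^{\tranH}$ makes that last inequality slightly more transparent than the paper's expansion, but the argument is the same.
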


We are now in the position to state the main theoretical result, the proof of which occupies a large part of this paper.
\begin{theorem}
\label{main result}
Suppose $\tZ^\natural$ satisfies the average case incoherence property with parameter $\mu_0$ and the index set $\Omega$ obeys the Bernoulli model with parameter $p$ (i.e., each entry of \kw{$\mX^\natural$} is sampled independently with probability $p$). 
Then $\tZ^\natural$ is the unique optimal solution to \eqref{opti: nuclear norm} with high probability provided $p\gtrsim \frac{\mu_0r\log^3(dn)}{n}$.
\end{theorem}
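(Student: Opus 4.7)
The plan is to construct an explicit dual certificate that certifies $\tZ^\natural$ as the unique optimal solution of \eqref{opti: nuclear norm}, following the duality-plus-golfing strategy initiated by Gross \cite{gross2011recovering}. Using the reformulation $\langle\tmG_{j,k},\tZ\rangle=\langle\tmG_{j,k},\tZ^\natural\rangle$ of the sampling constraint, and writing $\tE:=\blkdiag(\tU_1\tV_1^\tranH,\ldots,\tU_d\tV_d^\tranH)$ for the block-diagonal ``sign'' matrix, I will first derive sufficient optimality conditions: it suffices to exhibit $\bLambda$ in the range of $\tcalG\calP_\Omega\tcalG^\ast$ such that $\calP_{\tT}(\bLambda)$ is close to $\tE$ in Frobenius norm, $\|\calP_{\tT^\perp}(\bLambda)\|<1-\varepsilon$, together with a local injectivity property of $\tcalG\calP_\Omega\tcalG^\ast$ on $\tT$. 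A quick check shows $\tcalG^\ast\tcalG=\calI$ and $\tcalG\tcalG^\ast$ is the orthogonal projection onto block-diagonal Hankel matrices, so $\calP_{\tT}\tcalG\tcalG^\ast\calP_{\tT}=\calP_{\tT}$.

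The key technical step is to prove the restricted near-isometry
\begin{align*}
\left\|p^{-1}\calP_{\tT}\tcalG\calP_\Omega\tcalG^\ast\calP_{\tT}-\calP_{\tT}\right\|\leq 1/2
\end{align*}
under the Bernoulli model with $p\gtrsim \mu_0 r\log(dn)/n$. Expanding $\calP_\Omega=\sum_{j,k}\delta_{j,k}\ve_j\ve_k^\tran\langle\ve_j\ve_k^\tran,\cdot\rangle$ as a sum of independent Bernoulli-weighted rank-one operators, each centered summand has operator norm bounded by $\|\calP_{\tT}\tcalG(\ve_j\ve_k^\tran)\|_F^2\leq 2\mu_0 r/n$ via \eqref{ineq 2}, together with a matching variance estimate; matrix Bernstein then closes the argument. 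This inequality simultaneously delivers the required injectivity on $\tT$.

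Next, invoke Gross's golfing scheme: partition $\Omega=\Omega_1\cup\cdots\cup\Omega_L$ into $L\asymp\log(dn)$ independent Bernoulli samples with per-batch rate $q\gtrsim \mu_0 r\log^2(dn)/n$. Set $\tY_0=\bzero$, define the residual $\tW_\ell:=\tE-\calP_{\tT}(\tY_\ell)$, and update $\tY_{\ell+1}:=\tY_\ell+q^{-1}\tcalG\calP_{\Omega_{\ell+1}}\tcalG^\ast(\tW_\ell)$. The restricted near-isometry applied at each batch forces $\|\tW_\ell\|_F\leq 2^{-\ell}\|\tE\|_F$, so after $L$ steps the final residual is negligibly small. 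Setting $\bLambda:=\tY_L$, the off-tangent bound $\|\calP_{\tT^\perp}(\bLambda)\|<1/2$ follows from $L$ applications of matrix Bernstein to the increments $q^{-1}\calP_{\tT^\perp}\tcalG\calP_{\Omega_\ell}\tcalG^\ast(\tW_{\ell-1})$: the $B$ parameter uses Lemma~\ref{lem:property 2} (giving $\|\tmG_{j,k}\|\leq 1/\sqrt{dw_k}$) and the variance uses \eqref{ineq 1}, while the geometric decay of $\|\tW_\ell\|_F$ ensures the sum over $\ell$ telescopes.

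The central technical difficulty is that each measurement $\tmG_{j,k}$ couples all $d$ diagonal blocks through the DFT rather than being supported on a single block, so the standard worst-case incoherence analysis for low-rank matrix completion breaks down; this is precisely why the authors introduce the average-case incoherence of Definition~\ref{def: incoherence}. The $1/d$ averaging factor in \eqref{def: incoherent} must precisely cancel the $d$ appearing in $\|\tmG_{j,k}\|^2\leq 1/(dw_k)$ so that the effective Bernstein variance scales like $\mu_0 r/n$ rather than $d\mu_0 r/n$. Careful bookkeeping of these compensating factors in both the tangent-space concentration and the off-tangent bound is what lets the sample complexity avoid a spurious factor of $d$; in particular, verifying that the cross-correlations among the $\tmG_{j,k}$'s coupled across different $j$'s (via the entries of $\mF$) and different $k$'s (via the weights $w_k$) remain benign under only the average-case incoherence is the main obstacle.
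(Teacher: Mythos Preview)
Your overall architecture (duality + golfing + matrix Bernstein) matches the paper's, but there are two genuine gaps that would cause the argument to fail as written.

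\textbf{First gap: the claim $\calP_{\tT}\tcalG\tcalG^\ast\calP_{\tT}=\calP_{\tT}$ is false.} The tangent space $\tT$ is \emph{not} contained in the range of $\tcalG\tcalG^\ast$: a matrix of the form $\tU_i\tA_i^\tranH$ is generically not Hankel. Consequently the expectation of $q^{-1}\calP_{\tT}\tcalG\calP_{\Omega_\ell}\tcalG^\ast\calP_{\tT}$ is $\calP_{\tT}\tcalG\tcalG^\ast\calP_{\tT}$, not $\calP_{\tT}$, and your golfing recursion $\tW_\ell=(\calP_{\tT}-q^{-1}\calP_{\tT}\tcalG\calP_{\Omega_\ell}\tcalG^\ast\calP_{\tT})\tW_{\ell-1}$ does not contract, because the deterministic bias $\calP_{\tT}(\calI-\tcalG\tcalG^\ast)\calP_{\tT}$ has order-one norm. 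The paper fixes this by augmenting each golfing step with the extra projector $(\calI-\tcalG\tcalG^\ast)$, i.e.\ it updates by $\bigl(q^{-1}\tcalG\calP_{\Omega_\ell}\tcalG^\ast+(\calI-\tcalG\tcalG^\ast)\bigr)\tW_{\ell-1}$, which makes the residual recursion read $\tW_\ell=\calP_{\tT}\bigl(\tcalG\tcalG^\ast-q^{-1}\tcalG\calP_{\Omega_\ell}\tcalG^\ast\bigr)\calP_{\tT}\tW_{\ell-1}$ and hence contracts by the concentration bound you state. Correspondingly the sufficient certificate condition is weakened from ``$\bLambda$ in the range of $\tcalG\calP_\Omega\tcalG^\ast$'' to $\tcalG\calP_\Omega\tcalG^\ast(\bLambda)=\tcalG\tcalG^\ast(\bLambda)$, which the augmented iterate automatically satisfies.

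\textbf{Second gap: Frobenius-norm decay of the residuals is not enough for the off-tangent bound.} Writing each increment as $\sum_{j,k}(q^{-1}\delta_{j,k}-1)\langle\tmG_{j,k},\tW_{\ell-1}\rangle\,\tmG_{j,k}$, the Bernstein $B$-parameter is $q^{-1}\max_{j,k}|\langle\tmG_{j,k},\tW_{\ell-1}\rangle|/\sqrt{dw_k}$ and the variance is $q^{-1}\sum_{j,k}|\langle\tmG_{j,k},\tW_{\ell-1}\rangle|^2/(dw_k)$. If you only know $\|\tW_{\ell-1}\|_F\le 2^{-(\ell-1)}\sqrt{dr}$, the best generic bound on the variance is $q^{-1}r$ and on $B$ is $q^{-1}\sqrt{r}$, so Bernstein gives $\|\calP_{\tT^\perp}(\bLambda)\|\lesssim\sqrt{r\log(dn)/q}+\sqrt{r}\log(dn)/q$, which cannot be made $\le 1/2$ for any probability $q\le 1$ once $r\log(dn)\ge 1$. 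The paper resolves this by introducing two weighted norms
\[
\|\tZ\|_{\tcalG,\mathsf{F}}^2=\sum_{j,k}\frac{|\langle\tmG_{j,k},\tZ\rangle|^2}{dw_k},\qquad
\|\tZ\|_{\tcalG,\infty}=\max_{j,k}\frac{|\langle\tmG_{j,k},\tZ\rangle|}{\sqrt{dw_k}},
\]
and proving (i) the increments in operator norm are controlled by these norms of $\tW_{\ell-1}$ (Lemma~\ref{lem:key02}); (ii) these norms themselves contract geometrically along the golfing iterates (Lemmas~\ref{lem:key03} and \ref{lem:key04}); and (iii) they are already small at $\tW_0=\tU\tV^\tranH$, namely $\|\tU\tV^\tranH\|_{\tcalG,\mathsf{F}}\lesssim\sqrt{\mu_0 r\log(dn)/n}$ and $\|\tU\tV^\tranH\|_{\tcalG,\infty}\lesssim\mu_0 r/n$ (Lemma~\ref{lem:key05}). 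Steps (ii) and (iii) are exactly where the average-case incoherence is exploited and where the DFT cross-correlations you correctly flag as the main obstacle are handled (via Lemma~\ref{lem:key_obs01} and Lemma~\ref{lem:key_obs02}); this refinement, not just the RIP on $\tT$, is what produces the $\mu_0 r\log^3(dn)/n$ threshold rather than a vacuous one. Your proposal omits this entire layer.
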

\kw{
\begin{remark}\normalfont Here and in the sequel, the notation $\gtrsim$ means that its lefthand side is greater than an absolute positive constant  times the righthand side.
	By high probability, we mean with probability at least $1-c_1(dn)^{-c_2}$ for some absolute constants $c_1,c_2>0$.
\end{remark}
}
\begin{remark}\normalfont
Although Theorem~\ref{main result} has been established under the Bernoulli sampling model, the sampling complexity can be translated to other sampling models, such as the sampling with replacement model, with a change in the constant factor. For conciseness, here we assume that all $\tZ_i^\natural$ have the same rank. It is worth noting that the analysis can be easily extended to the setting where each $\tZ_i^\natural$ has a different rank. Since the average case incoherence property holds for the special matrix in \eqref{eq:specialX}, Theorem~\ref{main result} also justifies the successful completion of the matrix, as we have observed from the simulation.
\end{remark}

\subsection{Extension to higher dimension}
Though we have mainly focused on the \kw{two-dimensional} (2D) matrix completion problem, the model and analysis are also applicable for higher dimensional array recovery problem. For ease of exposition, we give a brief discussion of the \kw{three-dimensional} (3D) case.

For a 3D array $\mX\in\C^{n\times s \times d}$, we denote by $\mX_{j,k,i}$ the $(j,k,i)$-th entry of $\mX$ and use  $\mX(j,:,:), \mX(:,k,:), \mX(:,:,i)$ to denote the $j$-th horizontal, $k$-th lateral and $i$-th frontal slices, respectively. For simplicity, the frontal slice $\mX(:,:,i)$ is also denoted by $\mX_{i}\in\C^{n\times s}$. The $(j,k)$-th tube of $\mX$ is given by $\mX(j,k,:)\in\C^{d}$. Let $\tX^\natural$ be \kw{an array}
 obtained by applying the DFT to each of its tubes, i.e., $\tX^\natural(j,k,:) = \mF\mX^\natural(j,k,:)$ for all $(j,k)\in[n]\times [s]$.

Let $(L_1, K_1)$ and $(L_2, K_2)$ be two pairs of positive numbers satisfying $L_1 + K_1 = n+1$ and $L_2 + K_2 = s+1$. For each frontal slice $\tX^\natural_{i}$ of $\tX^\natural$, we can associate a  two-level block Hankel matrix with it as follows:
\begin{align}
\label{eq high dim hankelization}
\calH\tX^\natural_{i} := \begin{bmatrix}
\calH\tX^\natural_{i}(0,:) & \calH\tX^\natural_{i}(1,:) &\cdots &\calH\tX^\natural_{i}(K_1-1,:)\\
\calH\tX^\natural_{i}(1,:) & \calH\tX^\natural_{i}(2,:) &\cdots &\calH\tX^\natural_{i}(K_1,:)\\
					   \vdots & \vdots						  &\ddots &\vdots\\
\calH\tX^\natural_{i}(L_1-1,:) & \calH\tX^\natural_{i}(L_1,:) &\cdots &\calH\tX^\natural_{i}(n-1,:)\\
\end{bmatrix}\in\C^{L_1L_2\times K_1K_2},
\end{align}
where \kw{$\calH\tX^\natural_{i}(j,:)\in\C^{L_2\times K_2}$ is also a  Hankel matrix corresponding to the $j$-th row of $\tX^\natural_i$}. 

Assuming 
\begin{align*}
\rank(\calH\tX^\natural_{i}) = r\ll \min(L_1L_2, K_1K_2)\text{ for all } i=1,\cdots, n,
\end{align*}
we may reconstruct \kw{$\mX^\natural$} from its partial revealed entries by solving the following convex  program,
\begin{align*}
\minimize_{\mX\in\C^{n\times s\times d}}~\sum_{i=1}^{d} \nucnorm{\calH\tX_{i}}~\text{subject to }\begin{cases}\calP_{\Omega}(\mX) = \calP_{\Omega}(\mX^\natural)\\
\tX(j,k,:) = \mF\mX(j,k,:) \mbox{ and }\tX_i = \tX(:,:,i).\numberthis\label{eq:convex_higher}
\end{cases}
\end{align*}
Letting $\tZ^\natural_i = \calH\tX^\natural_i$ for $i=1,\cdots,d$ and 
\begin{align*}
\tZ^\natural = \begin{bmatrix}
\tZ^\natural_1 &&\\
&\ddots&\\
&&\tZ^\natural_d\\
\end{bmatrix}\in\C^{dL_1L_2\times dK_1K_2},
\end{align*}
the following recovery guarantee can be established for \eqref{eq:convex_higher}. The proof details are overall similar to that for Theorem~\ref{main result}, and thus are omitted.
\begin{theorem}
	\label{main result 2}
	Suppose $\tZ^\natural\in\C^{dL_1L_2\times dK_1K_2}$ satisfies the \kw{incoherence}
	condition \eqref{def: incoherent} with parameter $\mu_0$ and the index set $\Omega\subset[n]\times [s]\times [d]$ obeys the Bernoulli model with parameter $p$. If 
	$
	p\gtrsim \frac{\mu_0r\log^4(dns)}{ns},
	$
	then $\tZ^\natural$ is the unique optimal solution to \eqref{opti: nuclear norm} with high probability.
\end{theorem}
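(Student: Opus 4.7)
The plan is to mirror the argument used for Theorem~\ref{main result}, with appropriate bookkeeping for the two-level Hankel structure and the enlarged ambient dimension. First I would \emph{reformulate} \eqref{eq:convex_higher} as a block diagonal low rank matrix recovery problem in the Fourier domain, exactly as was done to pass from \eqref{opt: our nuclear norm} to \eqref{opti: nuclear norm}. Define the two-level weight operator $\calD^2 = \calH^\ast\calH$ restricted to the array $\tX_i$, whose diagonal entries are the counts $w_{j,k} = \#\{(j_1,j_2,k_1,k_2) : j_1+k_1 = j+1, j_2+k_2 = k+1\}$ of positions on each two-level anti-diagonal, let $\calG = \calH\calD^{-1}$ so that $\calG^\ast\calG = \calI$, and define $\tcalG$ to map an array $\mX\in\C^{n\times s\times d}$ to the $d$-block diagonal matrix whose $i$-th block is $\calG(\tX_i\mD)$. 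Then \eqref{eq:convex_higher} is equivalent to minimizing $\|\tZ\|_*$ subject to $(\calI - \tcalG\tcalG^\ast)(\tZ) = 0$ and $\calP_{\Omega}\tcalG^\ast(\tZ) = \calP_{\Omega}\tcalG^\ast(\tZ^\natural)$, which is structurally identical to \eqref{opti: nuclear norm}.

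Next I would \emph{collect the analogs of the structural lemmas}. The analog of Lemma~\ref{lem:property 2} becomes $\|\tmG_{j,k,i}\| \leq 1/\sqrt{d w_{j,k}}$, with $w_{j,k}$ defined above; this follows from the tensor-product form of the two-level Hankel basis together with $\|\mF\|=1$. The analog of Lemma~\ref{lemma incoherence 1} then gives, under the incoherence condition with parameter $\mu_0$, that $\|\calP_{\tT}\tcalG(\ve_j\ve_k\ve_i^\tran)\|_F^2 \leq 2\mu_0 r/(ns)$ for every triple $(j,k,i)$; the proof is identical in spirit to Lemma~\ref{lemma incoherence 1}, using $\sum_{(j,k)} w_{j,k} = L_1L_2K_1K_2$ and the product structure $\mG_k^\tranH\tU_a$.

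With these ingredients in hand I would then carry over the \emph{dual certificate construction} from the proof of Theorem~\ref{main result}. This has two parts: (i) show that $\calP_{\Omega}\tcalG^\ast$ is bounded below on the tangent space $\tT$, i.e. that $p^{-1}\calP_{\tT}\tcalG\calP_{\Omega}\tcalG^\ast\calP_{\tT} - \calP_{\tT}\tcalG\tcalG^\ast\calP_{\tT}$ has small operator norm on $\tT$; and (ii) construct $\tY$ in the range of $\tcalG\calP_{\Omega}\tcalG^\ast$ with $\|\calP_{\tT}\tY - \tU\tV^\ast\|_F$ small and $\|\calP_{\tT^\perp}\tY\| < 1$. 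Both are obtained via the golfing scheme: partition $\Omega$ into $L = \lceil c\log(dns)\rceil$ independent Bernoulli batches $\{\Omega_\ell\}$ with rate $p/L$, set $\tW_0 = \tU\tV^\ast$ and iterate $\tW_\ell = (\calP_{\tT} - p_\ell^{-1}\calP_{\tT}\tcalG\calP_{\Omega_\ell}\tcalG^\ast\calP_{\tT})\tW_{\ell-1}$, then take $\tY = \sum_\ell p_\ell^{-1}\tcalG\calP_{\Omega_\ell}\tcalG^\ast\tW_{\ell-1}$. Bernstein's inequality (applied as in \eqref{bernstein}) combined with the two lemmas above controls each iterate.

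The \emph{main obstacle} is the careful tracking of dimension factors. Because the two-level Hankel matrix has dimensions $dL_1L_2 \times dK_1K_2$ of order $dns$, each invocation of Bernstein produces $\log(dns)$ rather than $\log(dn)$, and the golfing scheme itself requires $\Theta(\log(dns))$ batches; three such log factors enter through the Bernstein bound on $\|(\calI - p^{-1}\calP_{\Omega})\|$ on $\tT$, the incoherence-propagation step at each golfing iteration, and the $\ell_{\infty,2}$ control on $\calP_{\tT^\perp}\tY$, yielding the $\log^4(dns)$ factor in the sample complexity (one more than in the 2D case). The two-level weight $w_{j,k}$ and the product structure of the two-level Hankel basis must be handled consistently in every moment computation, but no new conceptual difficulty arises beyond replacing $n$ with $ns$ in the denominators of all the relevant variance bounds.
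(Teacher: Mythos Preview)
Your proposal is correct and aligns with the paper's own treatment: the paper omits the proof of Theorem~\ref{main result 2} entirely, remarking only that the details are overall similar to those for Theorem~\ref{main result}. One minor sharpening worth noting concerns the precise origin of the extra logarithm: it is not an additional Bernstein invocation but rather the two-level analog of Lemma~\ref{lemma : general}, where the dyadic splitting now runs over both Hankel indices and therefore contributes a factor $\log(L_1)\cdot\log(L_2)$ in place of the single $\log(n_1)$ appearing in the 2D case.
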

\subsection{Numerical experiments}
\label{sec: numerical experiments}
In this section we empirically evaluate the recovery performance of \eqref{opt: our nuclear norm}. Simulations have been performed for both the 2D and 3D data, and the results are presented in Section \ref{sec: 2dim case} and  Section \ref{sec: 3dim}, respectively.
\subsubsection{Phase transitions for matrix completion}
\label{sec: 2dim case}
The experiment setup is the same as that for the special example in the introduction. Two different types of matrices are tested, one of which is constructed in the following two steps:
\begin{itemize}
	\item Generate a random matrix $\tX^\natural\in\C^{16\times 47}$, where each row of $\tX^\natural$ is a random spectrally sparse signal of length $47$ with $r\in[1,24]$ frequency components. Specifically, 
	\begin{align*}
	\tX^\natural(i,:) = \begin{bmatrix}
	x(0) & x(1) & \cdots & x(46)
	\end{bmatrix},\numberthis\label{2D_test_signal}
	\end{align*}
	where $x(t) = \sum_{k=1}^{r}d_k \exp(2\pi i f_k t)$ for $t=0,\cdots, 46$. Here, $f_k$ is uniformly sampled from $[0,1)$, and the complex weight is generated via $d_k=(1+10^{0.5c_k})e^{i\psi_k}$ with $\psi_k$ being uniformly sampled from $[0,2\pi)$  and $c_k$ being uniformly sampled from $[0,1]$.
	\item Apply the Inverse Discrete Fourier Transform (IDFT) to each column of $\tX^\natural$ to get $\mX^\natural$, which gives the test matrix.	
\end{itemize}  
Note that based on the Vandermonde decomposition of  $\tX^\natural(i,:)$, one can easily see that the matrix constructed in this way satisfies the low rank Hankel property \eqref{def: hankel low rank structures}. The phase transition diagram for this type of matrix is presented in the left plot of Figure \ref{fig:2D}, where the horizontal axis denotes the sampling ratio $p=m/(dn)$ and the vertical axis denotes the rank $r$. The color of each cell reflects the empirical success rate. In particular, white color means that for a fixed pair of $(p,r)$ all of the $50$ random matrices can be successfully recovered by \eqref{opt: our nuclear norm}, while black color indicates that \eqref{opt: our nuclear norm} fails to recover any of the $50$ matrices. The plot shows a nearly linear scaling between the sampling ratio and the largest rank that  \eqref{opt: our nuclear norm} can achieve a successful recovery with high probability.

\begin{figure}[ht!] \centering
	\subfigure{
		\label{fig a}
		\includegraphics[width=0.45\columnwidth]{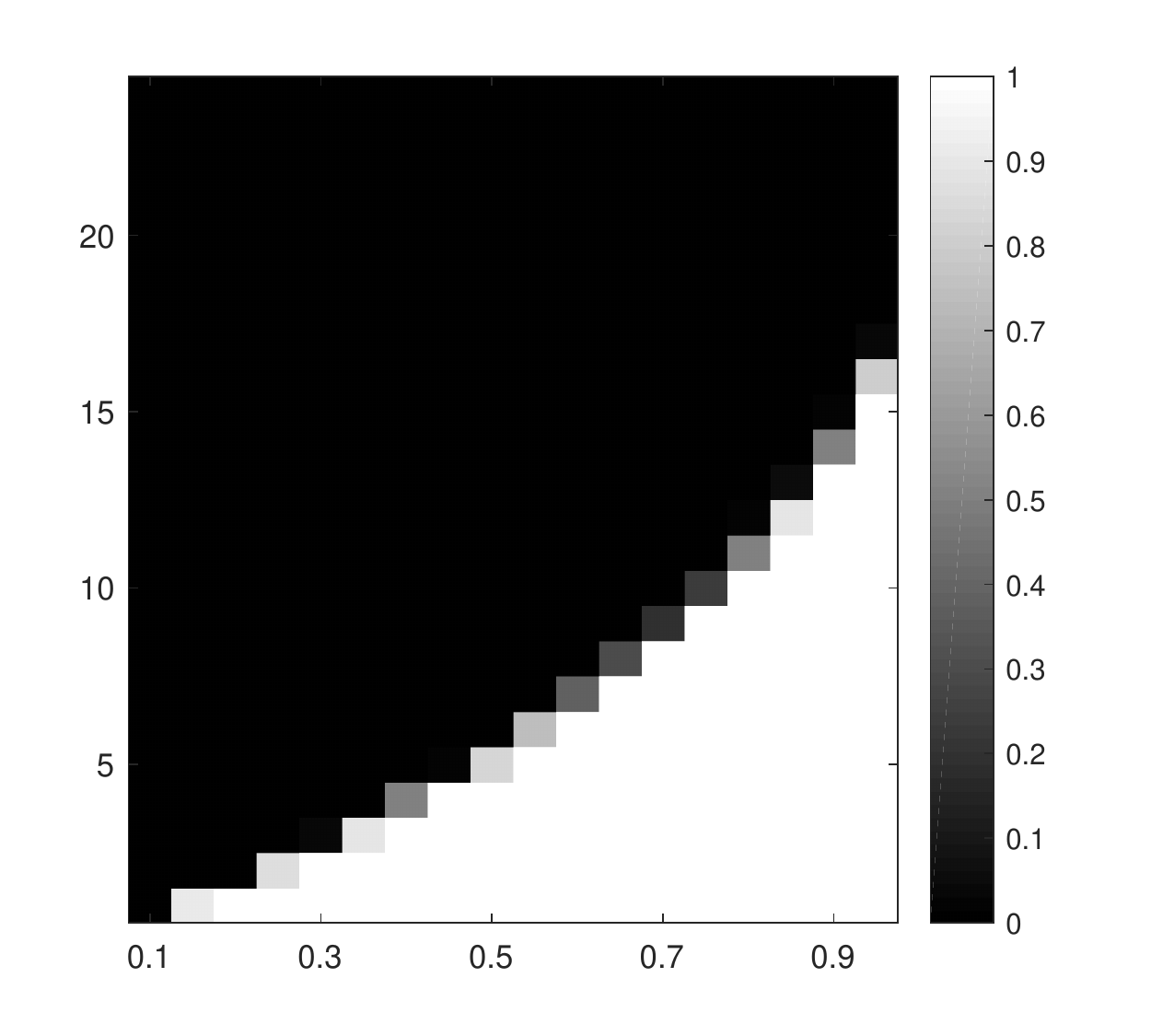}
	}
	\subfigure{
		\label{fig b}
		\includegraphics[width=0.45\columnwidth]{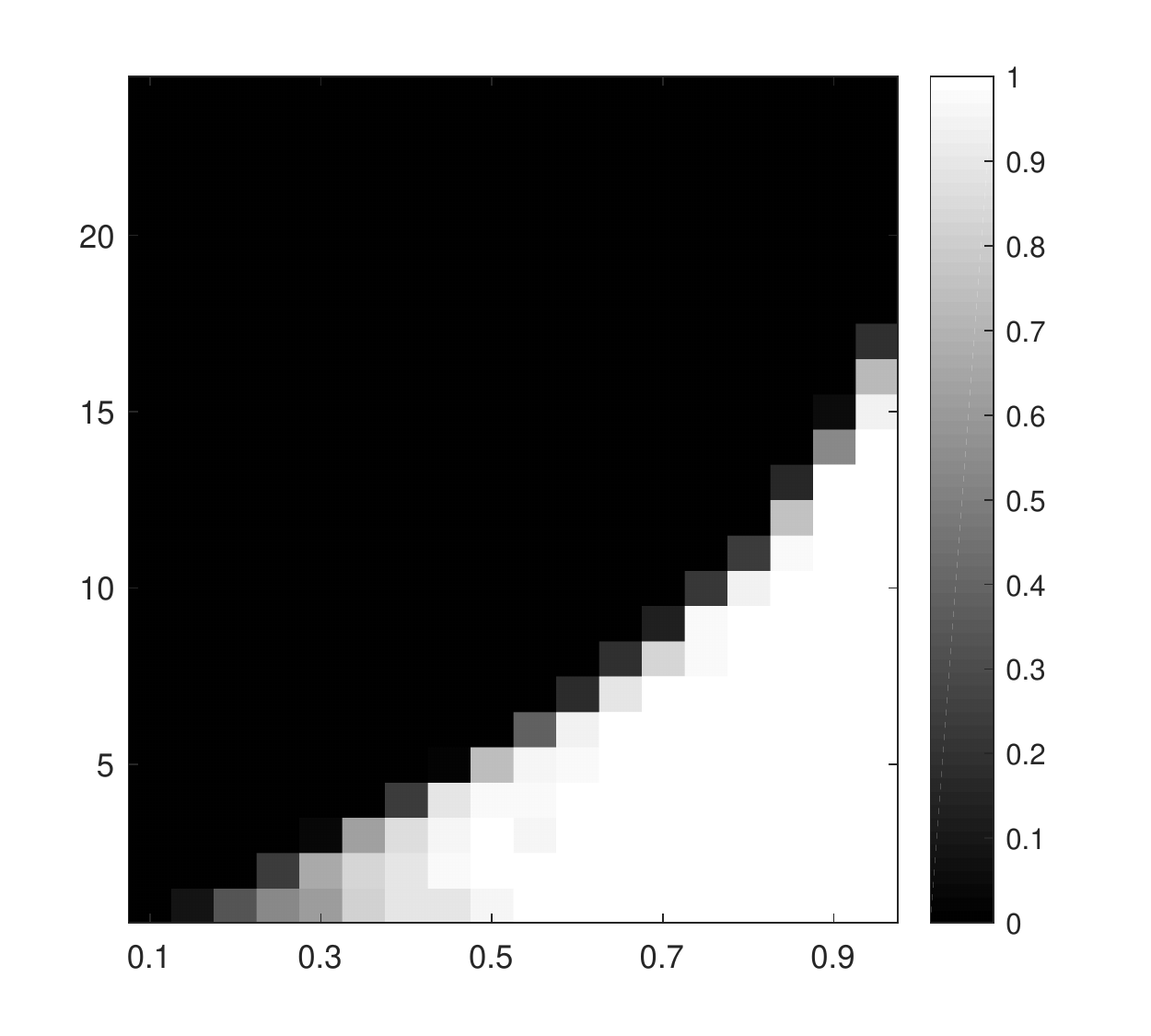}
	}
	
	\caption{Phase transition diagram for matrix completion. The horizontal axis is  $p=m/dn$ and the vertical axis is $r$. Left: test matrix obeys the worst case incoherent condition; Right: test matrix obeys the average case incoherent condition but does not obey the worst case incoherence condition.}
	\label{fig:2D}
\end{figure}

The test matrix constructed as above satisfies the worst case incoherence condition \eqref{def worst incoh} \cite{liao2016music}, and thus also satisfies the average case incoherence condition \eqref{def: incoherent}. To further examine the average case incoherence assumption in Theorem~\ref{main result}, we randomly replace two rows of $\tX^\natural$ in \eqref{2D_test_signal} with a random vector  $\vx\in\R^{47}$ generated in the following way. The entries of $\vx$ are zeros everywhere except the ones in positions $1 \leq j,k\leq 47$. The position $k$ is chosen uniformly at random  from the set $\{r, 47-r+1\}$. If $k=r$, the position $j$ is selected uniformly at random from the set $\{1,\cdots,r-1\}$; otherwise the position $j$ is selected uniformly from the set $\{47-r+2,\cdots, \cdots, 47\}$. By this construction, $\calH(\vx)$ is a rank $r$ matrix. Moreover, it can be expressed as 
\begin{align*}
\calH(\vx) = \sqrt{w_j}\mG_j + \sqrt{w_k}\mG_k,
\end{align*}
where $w_j$ and $w_k$ are defined in \eqref{eq:wi}. Therefore, $\calH(\vx)$ is incoherent with the $k$-th Hankel basis, and only the average case incoherence property can be satisfied. The phase transition diagram for this case is presented in the right plot of Figure~\ref{fig:2D}, which shows that the phase transition curve  is still substantially high.

\subsubsection{Phase transitions for 3D array completion}
\label{sec: 3dim}
The 3D array  that obeys the worst case incoherence condition is constructed as follows:
\begin{itemize}
	\item Generate a random 3D array $\tX^\natural\in\C^{9\times 9\times 16}$, where each frontal slice $\{ \tX_{\ell}^\natural \}_{\ell=1}^{16}$  of $\tX^\natural$ is a 2D random spectrally sparse signal. That is,
	\begin{align*}
	\tX_{\ell}^\natural(j,k) = \sum_{s=1}^{r}d_s w_s^j z_s^k,\quad (j,k)\in[9]\times [9],\numberthis\label{3D_test_signal}
	\end{align*} 
	where $w_s = e^{2\pi i f_{1 s}}$ and $z_s = e^{2\pi i f_{2 s}}$. Here, $f_{1s}, f_{2s}$ are uniformly sampled from $[0,1)$, and the complex coefficient is generated via $d_k=(1+10^{0.5c_k})e^{i\psi_k}$ with $\psi_k$ being uniformly sampled from $[0,2\pi)$  and $c_k$ being uniformly distributed on $[0,1]$.
	\item Apply the IDFT  to each tube of $\tX^\natural$ to get $\mX^\natural$, which gives the 3D test array.
	\end{itemize}  
To construct the 3D test array which only obeys the average case inherence condition,  two frontal slices of $\tX^\natural$ in \eqref{3D_test_signal} are replaced by the matrix $\calH(\vx)$, where $\vx\in\R^{49}$ is a vector with zero entries everywhere except for the two  positions $1\leq j,k\leq 49$. The positions of $j$ and $k$ are selected in the same way as that for the 2D case.  The phase transition diagrams corresponding to the two different types of 3D arrays are presented in Figure~\ref{fig:3D}, which exhibit a similar phenomenon to the 2D case.

\begin{figure}[ht!] \centering
	\subfigure{
		\label{fig c}
		\includegraphics[width=0.48\columnwidth]{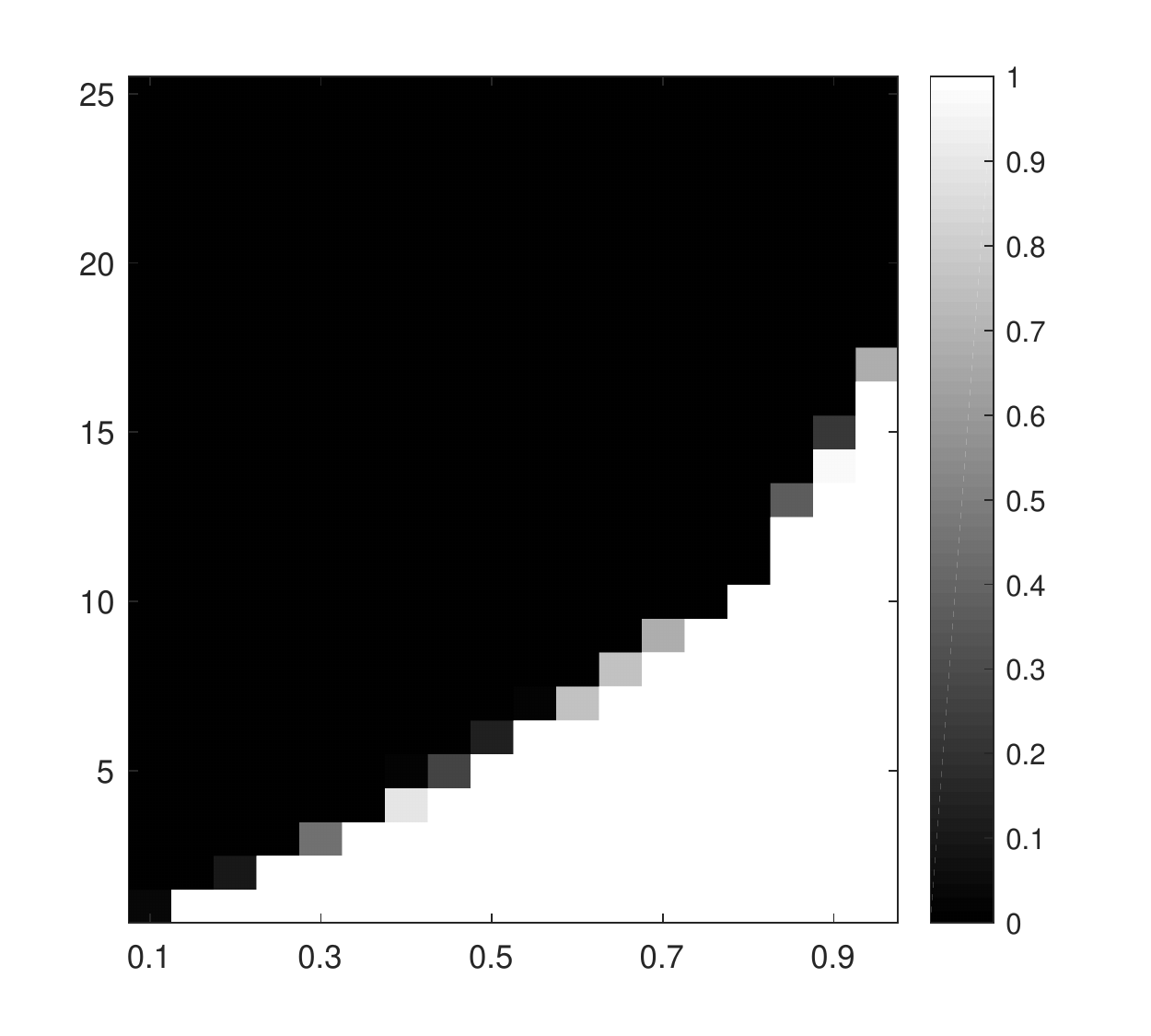}
	}
	\subfigure{
		\label{fig d}
		\includegraphics[width=0.48\columnwidth]{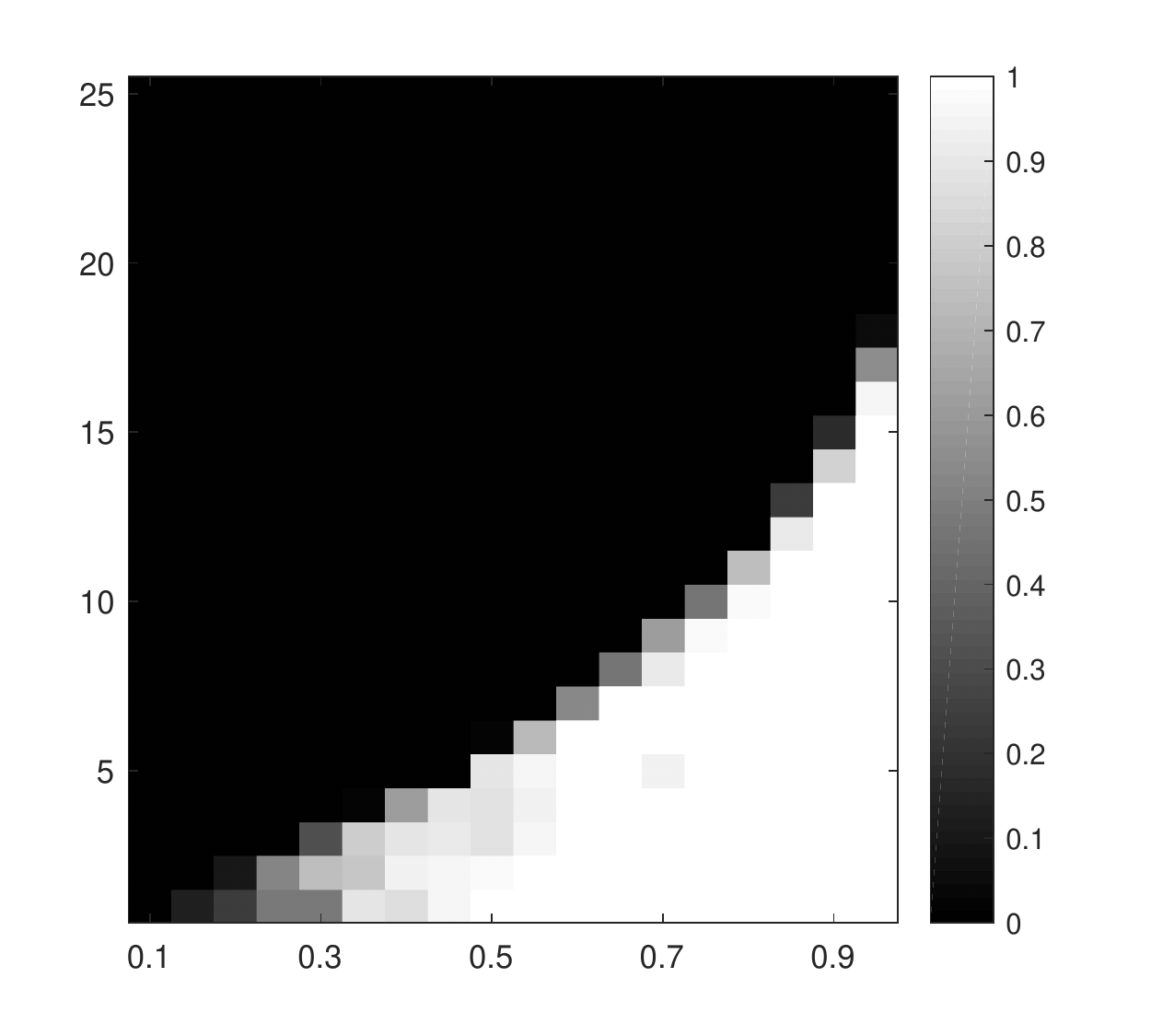}
	}
	
	\caption{Phase transition diagram for 3D array completion. The horizontal axis is  $p=m/dn$ and the vertical axis is $r$. Left: test array obeys the worst case incoherence condition; Right: test array obeys the average case incoherence condition but does not obey the worst case incoherence condition.}
	\label{fig:3D}
\end{figure}

\section{Proof outline of Theorem \ref{main result}}
\label{sec:proof1}
This section is devoted to the proof architecture  of Theorem \ref{main result}, while the proofs of the  intermediate results will be deferred to later sections. We follow a  route that has been well-established in the proofs of nuclear norm minimization  for various low rank matrix recovery problems \cite{candes2009exact,chen2015incoherence,chen2014robust}. To some extend,  the proof is an extension of the one for the $d=1$ case \cite{chen2014robust}. That being said, there are  two impediments of its own for the general $d$ case. Firstly, the properties of the DFT matrix play an important role in the analysis, which should be sufficiently utilized. Secondly, the analysis should be carried out very carefully to avoid the use of the worst case incoherence condition.
\subsection{Sufficient conditions for the dual certificate}
Similar to the  proof of Theorem \ref{main results: special}, in order to show that $\tZ^\natural$
is the optimal solution to \eqref{opti: nuclear norm}, we do not need to compare it with all the other feasible points, but only need to show the existence of  a dual certificate to certify the optimality of $\tZ^\natural$. The following lemma provides a set of  sufficient conditions for a valid dual certificate. 
\begin{lemma}
\label{lem:pf1_dual}
Suppose  $1\geq p\geq \frac{8}{n}$  and
	\begin{align}
	\label{ineq: RIP}
	\opnorm{\frac{1}{p}\calP_{\tT}\tcalG\calP_{\Omega}\tcalGT\calP_{\tT} -\calP_{\tT}\tcalG\tcalGT\calP_{\tT} } \leq\frac{1}{2}.
	\end{align}
If there exists a dual certificate $\bLambda=\diag(\bLambda_1,\cdots,\bLambda_d)\in\C^{dn_1\times dn_2}$ which obeys 
\begin{align}
	\label{eq: restricted on Omega}
	&\tcalG\calP_{\Omega}\tcalG^\ast(\bLambda)= \tcalG\tcalG^\ast(\bLambda),\\
	\label{ineq: F norm}
	&\fronorm{\calP_{\tT}(\bLambda) - \tU\tV^\tranH}\leq \frac{1}{n}, \\
	\label{ineq: perp operator norm}
	&\opnorm{\calP_{\tT^\perp}(\bLambda)} \leq \frac{1}{2},
	\end{align}
then the $d$-block diagonal matrix $\tZ^\natural$ is the unique optimal solution to \eqref{opti: nuclear norm}.
\end{lemma}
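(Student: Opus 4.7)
My plan is to follow the standard dual-certificate template: I take an arbitrary feasible perturbation $\mH$ (meaning $\tZ^\natural + \mH$ is feasible in \eqref{opti: nuclear norm}) and show $\|\tZ^\natural + \mH\|_* > \|\tZ^\natural\|_*$ whenever $\mH \neq \bzero$. The feasibility constraints translate into two properties of $\mH$: first, $\mH = \tcalG\tcalG^\ast(\mH)$ (so that $\mH$ is block-diagonal with Hankel diagonal blocks), and second, $\calP_\Omega\tcalG^\ast(\mH) = \bzero$. Combining these two properties with the self-adjointness of $\tcalG\tcalG^\ast$ and the dual certificate identity \eqref{eq: restricted on Omega}, I immediately obtain the orthogonality relation
\[
\langle \bLambda,\mH\rangle \;=\; \langle \tcalG\tcalG^\ast\bLambda, \mH\rangle \;=\; \langle \tcalG\calP_\Omega\tcalG^\ast\bLambda,\mH\rangle \;=\; \langle \tcalG^\ast\bLambda,\calP_\Omega\tcalG^\ast\mH\rangle \;=\; 0.
\]

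Next I pick $\mW\in\tT^\perp$ with $\|\mW\|\leq 1$ achieving $\langle \mW,\calP_{\tT^\perp}\mH\rangle = \|\calP_{\tT^\perp}\mH\|_*$, so that the subgradient inequality for the nuclear norm yields $\|\tZ^\natural+\mH\|_* \geq \|\tZ^\natural\|_* + \langle \tU\tV^\tranH,\mH\rangle + \|\calP_{\tT^\perp}\mH\|_*$. Splitting $\tU\tV^\tranH = \calP_{\tT}\bLambda + (\tU\tV^\tranH - \calP_{\tT}\bLambda)$ and using the identity $\langle \calP_{\tT}\bLambda,\calP_{\tT}\mH\rangle = -\langle \calP_{\tT^\perp}\bLambda,\calP_{\tT^\perp}\mH\rangle$ (which follows from $\langle \bLambda,\mH\rangle = 0$ after expanding by the tangent-space splitting), together with the hypotheses \eqref{ineq: F norm}--\eqref{ineq: perp operator norm} and the standard dualities $|\langle A,B\rangle|\leq\|A\|\|B\|_*$ and $|\langle A,B\rangle|\leq\|A\|_F\|B\|_F$, I will obtain
\[
\|\tZ^\natural+\mH\|_* - \|\tZ^\natural\|_* \;\geq\; \tfrac{1}{2}\|\calP_{\tT^\perp}\mH\|_* - \tfrac{1}{n}\|\calP_{\tT}\mH\|_F.
\]

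The crucial remaining step is to convert the RIP-type condition \eqref{ineq: RIP} into a bound of the form $\|\calP_{\tT}\mH\|_F \leq C\|\calP_{\tT^\perp}\mH\|_F$. Starting from the decomposition $\calP_{\tT}(\mH) = \calP_{\tT}\tcalG\tcalG^\ast\calP_{\tT}(\mH) + \calP_{\tT}\tcalG\tcalG^\ast\calP_{\tT^\perp}(\mH)$, which is valid only because $\mH = \tcalG\tcalG^\ast(\mH)$, the RIP \eqref{ineq: RIP} lets me replace the first term by $\tfrac{1}{p}\calP_{\tT}\tcalG\calP_\Omega\tcalG^\ast\calP_{\tT}(\mH)$ at an error of at most $\tfrac{1}{2}\|\calP_{\tT}\mH\|_F$. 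The feasibility $\calP_\Omega\tcalG^\ast\mH = \bzero$ then converts this term into $-\tfrac{1}{p}\calP_{\tT}\tcalG\calP_\Omega\tcalG^\ast\calP_{\tT^\perp}(\mH)$, which, by $\tcalG^\ast\tcalG = \calI$ together with the fact that $\calP_\Omega$, $\calP_{\tT}$, and $\tcalG\tcalG^\ast$ are all orthogonal projections (hence Frobenius-norm contractions), is bounded by $\tfrac{1}{p}\|\calP_{\tT^\perp}\mH\|_F$. Collecting terms gives $\|\calP_{\tT}\mH\|_F \leq (2/p + 2)\|\calP_{\tT^\perp}\mH\|_F$. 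Substituting into the previous display, invoking $p \geq 8/n$, and using $\|\calP_{\tT^\perp}\mH\|_F \leq \|\calP_{\tT^\perp}\mH\|_*$ produces
\[
\|\tZ^\natural+\mH\|_* - \|\tZ^\natural\|_* \;\geq\; \Bigl(\tfrac{1}{2} - \tfrac{2}{np} - \tfrac{2}{n}\Bigr)\|\calP_{\tT^\perp}\mH\|_*,
\]
which is strictly positive for $n$ not too small. The edge case $\calP_{\tT^\perp}\mH = \bzero$ is handled directly by the Frobenius bound on $\calP_{\tT}\mH$, which forces $\mH = \bzero$; this gives uniqueness.

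The main obstacle is the RIP step, because the tangent space $\tT$ is \emph{not} contained in the range of $\tcalG\tcalG^\ast$ (projecting onto $\tT$ destroys the Hankel structure within each diagonal block), so one cannot simply assert $\calP_{\tT}\tcalG\tcalG^\ast\calP_{\tT} = \calP_{\tT}$ on $\tT$. The key structural move is to invoke $\mH = \tcalG\tcalG^\ast(\mH)$ \emph{before} projecting onto $\tT$; once this is done, the remainder of the argument is careful bookkeeping of the isometry $\tcalG^\ast\tcalG = \calI$ combined with contractivity of $\calP_\Omega$, $\calP_{\tT}$, and $\tcalG\tcalG^\ast$.
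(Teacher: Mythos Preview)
Your proposal is correct and mirrors the paper's proof: the subgradient inequality, the orthogonality $\langle\bLambda,\mH\rangle=0$, and the reduction to an estimate of the form $\|\calP_{\tT}\mH\|_F\lesssim p^{-1}\|\calP_{\tT^\perp}\mH\|_F$ are all handled the same way. The only minor difference is in deriving that last estimate---the paper packages it as an auxiliary lemma based on the operator $\tfrac{1}{p}\tcalG\calP_\Omega\tcalG^\ast+(\calI-\tcalG\tcalG^\ast)$ annihilating $\mH$, whereas you argue directly from $\calP_{\tT}\mH=\calP_{\tT}\tcalG\tcalG^\ast\mH$; your resulting constant $2/p+2$ is slightly looser than the paper's $2\sqrt{2}/p$, so the borderline case $n=8$ just fails to give a strictly positive coefficient, but this is inconsequential.
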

\begin{proof}
Consider any perturbation $\tZ^\natural + \tW$, where $\tW=\diag(\tW_1,\cdots,\tW_d)$ satisfies 
\begin{align*}
(\calI-\tcalG\tcalG^\ast)(\tW) = \bzero\quad\mbox{and}\quad\calP_{\Omega}\tcalG^\ast(\tW) = \bzero.
\end{align*}
Note that for any $\tW_i$, there exists  $\mS_i\in\tT_i^\perp$ such that
\begin{align*}
\la\mS_i, \calP_{\tT_i^\perp}{(\tW_i)} \ra = \nucnorm{\calP_{\tT_i^\perp}(\tW_i)} \quad\text{and}\quad\opnorm{\mS_i}\leq 1,
\end{align*}
which implies that $\tU_i\tV^\tranH_i + \mS_i\in\partial\nucnorm{\tZ^\natural_i}$. Thus we have
\begin{align*}
\sum_{i=1}^{d}\nucnorm{\tZ_i^\natural + \tW_i} &{\geq} \sum_{i=1}^{d}\left(\nucnorm{\tZ_i^\natural} + \left\la\tU_i\tV^\tranH_i + \mS_i, \tW_i \right\ra \right)\\
&=\sum_{i=1}^{d}\nucnorm{\tZ_i^\natural} + \sum_{i=1}^{d}\nucnorm{\calP_{\tT_i^\perp}(\tW_i)}  + \sum_{i=1}^{d} \left\la\mU_i\mV^\tranH_i, \tW_i \right\ra,
\end{align*}
where the last term can be split into two terms:
\begin{align*}
\sum_{i=1}^{d} \left\la\tU_i\tV^\tranH_i, \tW_i \right\ra &= \left\la \tU\tV^\tranH, \tW\right\ra =  \left\la \tU\tV^\tranH - \bLambda , \tW\right\ra + \left\la \bLambda , \tW\right\ra.\numberthis\label{eq:kweq01}
\end{align*}

The application of  \eqref{ineq: F norm} and \eqref{ineq: perp operator norm}
yields that
\begin{align*}
	\left\la \tU\tV^\tranH - \bLambda , \tW\right\ra &= \left\la \tU\tV^\tranH - \calP_{\tT}(\bLambda) , \tW\right\ra - \la \calP_{\tT^\perp}(\bLambda) ,\tW\ra  \\
	&\geq - \fronorm{\tU\tV^\tranH - \calP_{\tT}(\bLambda) }\cdot \fronorm{\calP_{\tT}(\tW)} - \opnorm{\calP_{\tT^\perp}(\bLambda)}\cdot \nucnorm{\calP_{\tT^\perp}(\tW)}\\
	&{\geq} - \frac{1}{n}\cdot \fronorm{\calP_{\tT}(\tW)} - \frac{1}{2}\cdot \nucnorm{\calP_{\tT^\perp}(\tW)}.\numberthis\label{eq:kweq02}
	\end{align*}
Additionally, the properties of $\tW$ imply that 
	\begin{align*}
	\left\la \bLambda , \tW\right\ra &=  \left\la (\calI-\tcalG\tcalG^\ast)\bLambda + \tcalG\tcalG^\ast(\bLambda) , \tW\right\ra \\
	&=\left\la (\calI-\tcalG\tcalG^\ast)\bLambda + \tcalG\calP_{\Omega}\tcalG^\ast(\bLambda) , \tW\right\ra\\
	&= 0.\numberthis\label{eq:kweq03}
	\end{align*}
Thus, after substituting \eqref{eq:kweq02} and \eqref{eq:kweq03} into \eqref{eq:kweq01}, we obtain
\begin{align*}
\sum_{i=1}^{d}\nucnorm{\tZ_i^\natural + \tW_i} &\geq \sum_{i=1}^{d}\nucnorm{\tZ_i^\natural} + \sum_{i=1}^{d}\nucnorm{\calP_{\tT_i^\perp}(\tW_i)} - \frac{1}{n}\cdot \fronorm{\calP_{\tT}(\tW)} - \frac{1}{2}\cdot \nucnorm{\calP_{\tT^\perp}(\tW)}\\
&=\sum_{i=1}^{d}\nucnorm{\tZ_i^\natural} + \frac{1}{2}\sum_{i=1}^{d}\nucnorm{\calP_{\tT_i^\perp}(\tW_i)} -\frac{1}{n}\cdot \fronorm{\calP_{\tT}(\tW)} \\
&\geq \sum_{i=1}^{d}\nucnorm{\tZ_i^\natural} + \frac{1}{2}\fronorm{\calP_{\tT^\perp}(\tW)} -\frac{1}{n}\cdot \fronorm{\calP_{\tT}(\tW)}.\end{align*}
Moreover, Lemma \ref{lem:app01} in Appendix \ref{sec Auxiliary} shows that under the condition \eqref{ineq: RIP} there holds 
	\begin{align*}
	\fronorm{\calP_{\tT}(\tW)} \leq \frac{2\sqrt{2}}{p}\fronorm{\calP_{\tT^\perp}(\tW)}.
	\end{align*}
Therefore we have 
\begin{align*}
\sum_{i=1}^{d}\nucnorm{\tZ_i^\natural + \tW_i} &\geq \sum_{i=1}^{d}\nucnorm{\tZ_i^\natural} + \frac{1}{2}\fronorm{\calP_{\tT^\perp}(\tW)} -\frac{1}{n}\cdot \frac{2\sqrt{2}}{p}\fronorm{\calP_{\tT^\perp}(\tW)}\\
&{\geq} \sum_{i=1}^{d}\nucnorm{\tZ_i^\natural},
\end{align*}
which certifies the optimality of \kw{$\tZ^\natural$}.

To show the uniqueness of \kw{$\tZ^\natural$}, note that the equality holds only when $\fronorm{\calP_{\tT^\perp}(\tW)}=0$, which implies $\tW = \calP_{\tT}(\tW)$. It follows that 
\begin{align*}
\fronorm{\calP_{\tT}(\tW)}^2 
&=  \la \calP_{\tT}(\tW), \tcalG\tcalG^\ast(\tW) \ra= \la \tW, \calP_{\tT}\tcalG\tcalG^\ast\calP_{\tT}(\tW) \ra\\
&= \la \tW, \left(\calP_{\tT}\tcalG\tcalG^\ast\calP_{\tT} - \frac{1}{p}\calP_{\tT}\tcalG\calP_{\Omega}\tcalGT\calP_{\tT} \right)(\tW) \ra + \la \tW, \left(\frac{1}{p}\calP_{\tT}\tcalG\calP_{\Omega}\tcalGT\calP_{\tT} \right)(\tW) \ra\\
&= \la \tW, \left(\calP_{\tT}\tcalG\tcalG^\ast\calP_{\tT} - \frac{1}{p}\calP_{\tT}\tcalG\calP_{\Omega}\tcalGT\calP_{\tT} \right)(\tW) \ra + \la \tW, \left(\frac{1}{p}\calP_{\tT}\tcalG\calP_{\Omega}\tcalGT\right)(\tW) \ra\\
&= \la \tW, \left(\calP_{\tT}\tcalG\tcalG^\ast\calP_{\tT} - \frac{1}{p}\calP_{\tT}\tcalG\calP_{\Omega}\tcalGT\calP_{\tT} \right)(\tW) \ra\\
&\leq \opnorm{\frac{1}{p}\calP_{\tT}\tcalG\calP_{\Omega}\tcalGT\calP_{\tT} -\calP_{\tT}\tcalG\tcalGT\calP_{\tT} }\cdot \fronorm{\tW}^2\\
&\leq \frac{1}{2}\fronorm{\calP_{\tT}(\tW)}^2,
\end{align*}
where the first line follows from $(\calI - \tcalG\tcalG^\ast)\tW = \bzero$, the forth line follows from $\calP_{\Omega}\tcalG^\ast(\tW) = \bzero$, and the last line follows from \eqref{ineq: RIP}. This implies that $\calP_{\tT}(\tW)=0$, so \kw{$\tZ^\natural$} is the unique minimizer.
\end{proof}
\subsection{Constructing  the dual certificate}
We will apply the golfing scheme to construct the dual certificate. The scheme was first proposed in \cite{candes2006robust} and then has become an indispensable tool in the analysis of convex relaxation methods for low complexity data recovery problems \cite{candes2009exact,recht2011simpler,gross2011recovering,chen2014robust,chen2015incoherence,vaiter2018model}. In a nutshell, the golfing scheme is a projected gradient iteration but using fresh measurements in each iteration.

To motivate the golfing scheme for constructing $\bLambda$, consider the following constrained least squares problem:
\begin{align*}
\minimize_{\tZ}~\fronorm{\calP_{\tT}(\tZ) - \tU\tV^\tranH}^2~\mbox{subject to}~\tZ= \tcalG\calP_{\Omega}\tcalG^\ast(\tZ)+(\calI-\tcalG\tcalG^\ast)(\tZ),
\end{align*}
which is formed from the conditions \eqref{ineq: F norm} and \eqref{eq: restricted on Omega}. Here we choose to neglect \eqref{ineq: perp operator norm} because if $\bLambda$ is sufficiently close to $\tU\tV^\tranH$, it is natural to expect \eqref{ineq: perp operator norm} holds simultaneously. Noting that
\begin{align*}
(\tcalG\calP_{\Omega}\tcalG^\ast)^2=\tcalG\calP_{\Omega}\tcalG^\ast, \quad
(\calI-\tcalG\tcalG^\ast)^2=\calI-\tcalG\tcalG^\ast,\quad\mbox{and }(\tcalG\calP_{\Omega}\tcalG^\ast)(\calI-\tcalG\tcalG^\ast)={0},
\end{align*}
a projected gradient method for the above optimization problem is given by
\kw{
\begin{align*}
\tZ^{k} &= \tZ^{k-1} + \left( \frac{1}{p}\tcalG\calP_{\Omega}\tcalG^\ast + (\calI- \tcalG\tcalG^\ast) \right)\calP_{\tT}\left(\tU\tV^\tranH - \calP_{\tT}(\tZ^{k-1})\right),
\end{align*}
}where $1/p$ is a rescaling constant for the first projection.  Due to the statistical dependence among the iterates, the convergence analysis of the above iteration is not easy. The golfing scheme proposes to break the statistical dependence by splitting $\Omega$ into a number of independent subsets and then using them sequentially. 

More precisely,  the golfing scheme for constructing a dual certificate satisfying \eqref{eq: restricted on Omega}--\eqref{ineq: perp operator norm} is given as follows:
\kw{
\begin{align*}
\tZ^0 &= \bzero\in\C^{dn_1\times dn_2},\\
\tZ^k &= \tZ^{k-1} + \left( \frac{1}{q}\tcalG\calP_{\Omega_k}\tcalG^\ast + (\calI- \tcalG\tcalG^\ast) \right)\calP_{\tT}(\tU\tV^\tranH - \calP_{\tT}(\tZ^{k-1})),\quad\text{ for }k=1,\cdots, k_0,\numberthis\label{eq:construct_dual}\\
\bLambda:&=\tZ^{k_0},
\end{align*}
}where $1/q$ is a rescaling constant whose value will become clear immediately. 
Inspired by the work in \cite{chen2015incoherence}, $\{\Omega_k\}_{k=1}^{k_0}$ can be constructed in the following way: Set $k_0 = \lceil2\log (dn)\rceil$, and then each $\Omega_k$ is sampled independently from each other according to
\begin{align*}
\Pr{(i,j)\in\Omega_k} = 1-(1-p)^{1/k_0}=:q.
\end{align*}
A simple calculation can show that $\Pr{(i,j)\in\bigcup_{k=1}^{k_0}\Omega_k}=p$, meaning that $\Omega$ and $\bigcup_{k=1}^{k_0}\Omega_k$ are identically distributed and we can instead consider the recovery problem with samples from $\bigcup_{k=1}^{k_0}\Omega_k$.
Therefore, when constructing the dual certificate via \eqref{eq:construct_dual}, different $\Omega_k$ can be used in each iteration. 
\subsection{Validating the dual certificate and completing the proof}
To complete the proof of Theorem \ref{main result}, we need to show that the dual certificate constructed from \eqref{eq:construct_dual} obeys the assumptions of Lemma \ref{lem:pf1_dual}. Towards this end, we first list several useful lemmas whose proofs will be presented in Sections \ref{proof lemma key01} and  \ref{proof lemma key02to05}.
\begin{lemma}\label{lem:key01}
Suppose $\Omega$ is sampled according to the Bernoulli model.	If the sample complexity satisfies $p\gtrsim\frac{\mu_0r\log (dn)}{n}$, then
	\begin{align*}
	\opnorm{\frac{1}{p}\calP_{\tT}\tcalG\calP_{\Omega}\tcalGT\calP_{\tT} -\calP_{\tT}\tcalG\tcalGT\calP_{\tT}} \leq \frac{1}{2}
	\end{align*}
	holds with high probabilty.
	\end{lemma}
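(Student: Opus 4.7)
The plan is to write the operator difference as a sum of independent mean-zero random operators indexed by $(j,k)\in[d]\times[n]$ and then apply the matrix Bernstein inequality. First, since $\calP_{\Omega}=\sum_{j,k}\delta_{jk}\calP_{jk}$ with $\delta_{jk}\sim\mathrm{Bernoulli}(p)$ independent and $\calP_{jk}(\mX)=\la\ve_j\ve_k^\tran,\mX\ra \ve_j\ve_k^\tran$, the identity $[\tcalGT(\tZ)]_{j,k}=\la\tmG_{j,k},\tZ\ra$ derived in the excerpt gives $\tcalG\calP_{jk}\tcalGT(\tZ)=\la\tmG_{j,k},\tZ\ra\tmG_{j,k}$. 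Consequently,
\begin{align*}
\frac{1}{p}\calP_{\tT}\tcalG\calP_{\Omega}\tcalGT\calP_{\tT}-\calP_{\tT}\tcalG\tcalGT\calP_{\tT} = \sum_{j=1}^{d}\sum_{k=1}^{n}\calS_{jk},\quad \calS_{jk}:=\left(\frac{\delta_{jk}}{p}-1\right)\calP_{\tT}(\tmG_{j,k})\otimes\calP_{\tT}(\tmG_{j,k}),
\end{align*}
where $A\otimes A$ denotes the rank-one self-adjoint operator $\tZ\mapsto\la A,\tZ\ra A$. Each $\calS_{jk}$ is independent and mean zero.

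Next I would compute the two ingredients for Bernstein. For the uniform bound, $|\delta_{jk}/p-1|\le 1/p$ together with Lemma \ref{lemma incoherence 1} (applied through $\calP_{\tT}(\tmG_{j,k})=(\calP_{\tT}\tcalG)(\ve_j\ve_k^\tran)$) yields
\begin{align*}
\opnorm{\calS_{jk}}\le\frac{1}{p}\fronorm{\calP_{\tT}(\tmG_{j,k})}^{2}\le \frac{2\mu_{0}r}{pn}=:B.
\end{align*}
For the variance, a direct computation gives $\calS_{jk}^{2}=(\delta_{jk}/p-1)^{2}\fronorm{\calP_{\tT}(\tmG_{j,k})}^{2}\,\calP_{\tT}(\tmG_{j,k})\otimes\calP_{\tT}(\tmG_{j,k})$, so $\E{\calS_{jk}^{2}}=\frac{1-p}{p}\fronorm{\calP_{\tT}(\tmG_{j,k})}^{2}\,\calP_{\tT}(\tmG_{j,k})\otimes\calP_{\tT}(\tmG_{j,k})$. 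Summing over $(j,k)$ and using the key identity $\sum_{j,k}\tmG_{j,k}\otimes\tmG_{j,k}=\tcalG\tcalGT$ (which follows from $\tcalG(\ve_j\ve_k^\tran)=\tmG_{j,k}$), together with the fact that $\tcalGT\tcalG=\calI$ (so $\tcalG\tcalGT$ is an orthogonal projection), we conclude
\begin{align*}
\opnorm{\sum_{j,k}\E{\calS_{jk}^{2}}}\le\frac{1-p}{p}\max_{j,k}\fronorm{\calP_{\tT}(\tmG_{j,k})}^{2}\,\opnorm{\calP_{\tT}\tcalG\tcalGT\calP_{\tT}}\le\frac{2\mu_{0}r}{pn}=:\sigma^{2}.
\end{align*}

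Finally I would invoke the matrix Bernstein inequality, viewing each $\calS_{jk}$ as a self-adjoint operator on block-diagonal matrices in $\C^{dn_{1}\times dn_{2}}$ (effective dimension $O(dn)$, producing only a $\log(dn)$ factor). This gives
\begin{align*}
\opnorm{\frac{1}{p}\calP_{\tT}\tcalG\calP_{\Omega}\tcalGT\calP_{\tT}-\calP_{\tT}\tcalG\tcalGT\calP_{\tT}}\lesssim\sqrt{\frac{\mu_{0}r\log(dn)}{pn}}+\frac{\mu_{0}r\log(dn)}{pn},
\end{align*}
with high probability, and both terms are bounded by $1/2$ once $p\gtrsim \mu_{0}r\log(dn)/n$. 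The main technical point is the clean identification of $\sum_{j,k}\calP_{\tT}(\tmG_{j,k})\otimes\calP_{\tT}(\tmG_{j,k})$ with $\calP_{\tT}\tcalG\tcalGT\calP_{\tT}$ and then with a subprojection of norm at most one; without the isometry $\tcalGT\tcalG=\calI$ one would pick up an extra $d$ factor in $\sigma^{2}$, breaking the near-optimal sampling rate. Apart from this, the dependence on the average-case incoherence parameter $\mu_0$ enters entirely through the pointwise bound in Lemma \ref{lemma incoherence 1}, which is why the weaker average-case hypothesis suffices.
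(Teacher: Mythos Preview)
Your proof is correct and follows essentially the same route as the paper: both decompose the operator as a sum of independent mean-zero self-adjoint rank-one operators $(\delta_{jk}/p-1)\,\calP_{\tT}(\tmG_{j,k})\otimes\calP_{\tT}(\tmG_{j,k})$, bound the individual operator norms via Lemma~\ref{lemma incoherence 1}, control the variance by factoring out $\max_{j,k}\fronorm{\calP_{\tT}(\tmG_{j,k})}^2$ and recognizing the remaining sum as $\calP_{\tT}\tcalG\tcalGT\calP_{\tT}$ (a contraction since $\tcalGT\tcalG=\calI$), and finish with the matrix Bernstein inequality. The only cosmetic difference is that the paper writes out the action of $\calZ_{j,k}^2$ on a test matrix explicitly rather than using your tensor notation.
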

\begin{lemma}\label{lem:key02}
Let $\tZ\in\C^{dn_1\times dn_2}$ be a fixed $d$-block diagonal matrix. Then 
\begin{align*}
	\opnorm{\left(\frac{1}{p}\tcalG\calP_{\Omega}\tcalG^\ast - \tcalG\tcalG^\ast\right)\tZ} \lesssim \sqrt{\frac{\log(dn)}{p}}\left\|\tZ\right\|_{\tcalG,\mathsf{F}} + \frac{\log(dn)}{p}\left\| \tZ \right\|_{\tcalG,\infty} 
	\end{align*}
holds with high probability. Here and throughout the paper  $\left\| \tZ \right\|_{\tcalG,\infty}$ and $\left\|\tZ\right\|_{\tcalG,\mathsf{F}}$  are defined as
\begin{align}
\label{def: G F norm}
&\left\| \tZ \right\|_{\tcalG,\mathsf{F}} := \sqrt{\sum_{(j,k)\in[d]\times [n]} \frac{1}{dw_k} \left\la\tZ, \tmG_{j,k} \right\ra^2 },\\
\label{def: G inf  norm}
&\left\| \tZ \right\|_{\tcalG,\infty} := \max_{(j,k)\in[d]\times [n]} \left| \frac{1}{\sqrt{d w_k}} \left\la\tZ, \tmG_{j,k} \right\ra  \right|.
\end{align}
\end{lemma}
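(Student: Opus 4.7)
The plan is to express the operator $p^{-1}\tcalG\calP_\Omega\tcalG^\ast - \tcalG\tcalG^\ast$ applied to the fixed $\tZ$ as a sum of independent, mean-zero random matrices indexed by $[d]\times[n]$, and then apply the matrix Bernstein bound~\eqref{bernstein}. To set this up, I would first use the Bernoulli sampling model to write $\calP_\Omega = \sum_{(j,k)}\delta_{j,k}\langle\cdot,\ve_j\ve_k^\tran\rangle\ve_j\ve_k^\tran$ with independent $\delta_{j,k}\sim\mathrm{Bernoulli}(p)$, and combine this with the identity $[\tcalGT(\tW)]_{j,k}=\langle\tW,\tmG_{j,k}\rangle$ (established just before the statement of this lemma) to obtain
\[
\bigl(\tfrac{1}{p}\tcalG\calP_\Omega\tcalG^\ast-\tcalG\tcalG^\ast\bigr)\tZ \;=\; \sum_{(j,k)\in[d]\times[n]} \mX_{j,k},\qquad \mX_{j,k}:=\bigl(\tfrac{\delta_{j,k}}{p}-1\bigr)\langle\tZ,\tmG_{j,k}\rangle\,\tmG_{j,k}.
\]
The summands are independent with $\E{\mX_{j,k}}=\bzero$, so Bernstein is the natural tool.

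With this decomposition in hand, the remaining task is to estimate the two Bernstein inputs: the uniform bound $B$ and the variance proxy $\sigma^{2}$. The uniform bound is essentially free from $|\delta_{j,k}/p-1|\le 1/p$ combined with Lemma~\ref{lem:property 2}, giving $\|\mX_{j,k}\|\le p^{-1}\|\tZ\|_{\tcalG,\infty}$. The variance estimate is where I expect the only real obstacle. The key observation will be that $\tmG_{j,k}\tmG_{j,k}^\tranH$ is block diagonal, with $i$-th block equal to $|\ve_i^\tran\mF\ve_j|^{2}\mG_k\mG_k^\tranH$, and I would exploit two special structural features simultaneously: \emph{(i)} unitarity of $\mF$ produces $|\ve_i^\tran\mF\ve_j|^{2}=1/d$ uniformly in $i,j$; and \emph{(ii)} the antidiagonal support of the Hankel basis makes $\mG_k\mG_k^\tranH$ a diagonal matrix whose nonzero entries equal $1/w_k$. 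Combining \emph{(i)} and \emph{(ii)}, every diagonal entry of $\sum_{j,k}|\langle\tZ,\tmG_{j,k}\rangle|^{2}\tmG_{j,k}\tmG_{j,k}^\tranH$ can be upper bounded by $\tfrac{1}{d}\sum_{j,k}|\langle\tZ,\tmG_{j,k}\rangle|^{2}/w_k = \|\tZ\|_{\tcalG,\mathsf{F}}^{2}$. A parallel argument would handle $\sum_{j,k}|\langle\tZ,\tmG_{j,k}\rangle|^{2}\tmG_{j,k}^\tranH\tmG_{j,k}$, and since $\E{(\delta_{j,k}/p-1)^{2}}=(1-p)/p\le 1/p$, this delivers $\sigma^{2}\le p^{-1}\|\tZ\|_{\tcalG,\mathsf{F}}^{2}$.

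Once $B$ and $\sigma^{2}$ are in place, plugging into \eqref{bernstein} with ambient dimensions $d_1=dn_1$, $d_2=dn_2$ (so that $\log(d_1+d_2)\lesssim\log(dn)$) yields the claimed high-probability bound. The main subtlety, and the step I would flag as the crux, is the variance estimate: without the cancellation granted by the DFT uniformity $|\ve_i^\tran\mF\ve_j|^{2}=1/d$, the factor of $1/d$ baked into $\|\cdot\|_{\tcalG,\mathsf{F}}^{2}$ could not be recovered, and one would be forced into a worst-case-style bound that is incompatible with the average-case incoherence regime driving the rest of the paper. Everything else is routine once that cancellation is properly exploited.
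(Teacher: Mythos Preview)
Your proposal is correct and matches the paper's proof essentially line for line: the same decomposition into independent mean-zero summands $(\delta_{j,k}/p-1)\langle\tZ,\tmG_{j,k}\rangle\tmG_{j,k}$, the same uniform bound via Lemma~\ref{lem:property 2}, and the same variance bound $\sigma^2\le p^{-1}\|\tZ\|_{\tcalG,\mathsf{F}}^2$ followed by Bernstein. The only cosmetic difference is that for the variance the paper simply pulls the operator norm through the sum and invokes $\|\tmG_{j,k}\|^2\le 1/(dw_k)$, whereas you exploit the diagonal structure of $\tmG_{j,k}\tmG_{j,k}^\tranH$ directly; both routes give the identical estimate.
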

\begin{lemma}\label{lem:key03}
Let $\tZ\in\C^{dn_1\times dn_2}$ be a fixed $d$-block diagonal matrix.
Under the incoherence assumption \eqref{def: incoherent}, 
\begin{align*}
\left\| \left(\frac{1}{p}\calP_{\tT}\tcalG\calP_{\Omega}\tcalG^\ast - \calP_{\tT}\tcalG\tcalG^\ast \right)\tZ \right\|_{\tcalG,\mathsf{F}} \lesssim\sqrt{\frac{\mu_0\log(dn) r}{n}}\left( \sqrt{\frac{\log(dn)}{p} }\left\|\tZ\right\|_{\tcalG,\mathsf{F}}+\frac{\log(dn)}{p}  \left\| \tZ\right\|_{\tcalG, \infty} \right)
\end{align*}
holds with high probability.
\end{lemma}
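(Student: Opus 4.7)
The strategy is to express the random operator as a sum of independent centered random matrices indexed by the Bernoulli indicators $\chi_{j,k}$, identify $\|\cdot\|_{\tcalG,\mathsf F}$ with the $\ell_2$-norm of a random vector on $\C^{dn}$, and then apply a vector-valued Bernstein inequality whose variance and max-norm parameters are controlled using (a) unitarity of the DFT, (b) the average-case incoherence of Lemma~\ref{lemma incoherence 1}, and (c) the antidiagonal structure of $\mG_k=\mH_k/\sqrt{w_k}$.

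With $\chi_{j,k}\overset{\text{iid}}{\sim}\mathrm{Bern}(p)$, $\epsilon_{j,k}=\chi_{j,k}/p-1$ (so $|\epsilon|\le 1/p$ and $\E\epsilon^2\le 1/p$), and $\va=\tcalG^\ast\tZ$, the first step rewrites
\[
\mathcal W:=\bigl(\tfrac{1}{p}\calP_{\tT}\tcalG\calP_{\Omega}\tcalG^\ast-\calP_{\tT}\tcalG\tcalG^\ast\bigr)\tZ=\sum_{(j',k')\in[d]\times[n]}\epsilon_{j',k'}\va_{j',k'}\calP_{\tT}\tmG_{j',k'}.
\]
Because $\la \mathcal W,\tmG_{j,k}\ra = \la\tcalG^\ast\mathcal W,\ve_j\ve_k^\tran\ra$, the definition of $\|\cdot\|_{\tcalG,\mathsf F}$ gives $\|\mathcal W\|_{\tcalG,\mathsf F}=\|\mathbf y\|_2$ for the random vector $\mathbf y=\sum_{(j',k')}\vz_{j',k'}\in\C^{dn}$ with
\[
\vz_{j',k'}(j,k)=\tfrac{\epsilon_{j',k'}\va_{j',k'}}{\sqrt{dw_k}}\la\calP_{\tT}\tmG_{j',k'},\tmG_{j,k}\ra.
\]
The $\vz_{j',k'}$ are independent and centered, so vector Bernstein yields $\|\mathbf y\|_2\lesssim\sqrt{V\log(dn)}+B\log(dn)$ with high probability, where $V=\|\sum\E[\vz_{j',k'}\vz_{j',k'}^\ast]\|_{\mathrm{op}}$ and $B=\max\|\vz_{j',k'}\|_2$.

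The key ingredient for both parameters is the bound $\|\calP_{\tT}\tmG_{j',k'}\|_{\tcalG,\mathsf F}^2\lesssim \mu_0 r/(dn)$. This follows by expanding $\calP_{\tT}\tmG_{j',k'}$ blockwise as $\bigl((\ve_i^\tran\mF\ve_{j'})\calP_{\tT_i}\mG_{k'}\bigr)_i$, using the DFT identity $|\ve_i^\tran\mF\ve_{j'}|^2=1/d$ to collapse the $j'$-dependence, and then averaging the resulting $\|\calP_{\tT_i}\mG_{k'}\|_{\mathsf F}^2$ over $i$ via Lemma~\ref{lemma incoherence 1}. The max-norm bound $B\lesssim p^{-1}\sqrt{\mu_0 r\log(dn)/n}\,\|\tZ\|_{\tcalG,\infty}$ then follows from $|\va_{j',k'}|\le\sqrt{dw_{k'}}\|\tZ\|_{\tcalG,\infty}$ together with a weighted bound on $\|\calP_{\tT}\tmG_{j',k'}\|_{\tcalG,\mathsf F}$ that transfers the $\sqrt{dw_{k'}}$ factor appropriately. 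For the variance, after writing $\la\vu,\vz_{j',k'}\ra=\la\calP_{\tT}\mM_\vu,\tmG_{j',k'}\ra$ with $\mM_\vu=\sum_{j,k}\vu_{(j,k)}(dw_k)^{-1/2}\tmG_{j,k}$, the required estimate $V\lesssim \mu_0 r\log(dn)\|\tZ\|_{\tcalG,\mathsf F}^2/(np)$ is obtained by decomposing $\calP_{\tT_i}\mG_{k'}=\tU_i\tU_i^\ast\mG_{k'}+\mG_{k'}\tV_i\tV_i^\ast-\tU_i\tU_i^\ast\mG_{k'}\tV_i\tV_i^\ast$, expanding via $\mG_k=w_k^{-1/2}\sum_{a+b=k+1}\ve_a\ve_b^\tran$, and bounding the resulting weighted sums by the average-case incoherence (noting that the harmonic-type sum $\sum_k w_k^{-1}\asymp\log n$ supplies the logarithmic factor). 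Substituting $V$ and $B$ into the Bernstein bound delivers the stated inequality.

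\textbf{Main obstacle.} The principal difficulty is the sharp variance bound. A direct Cauchy-Schwarz argument using $\|\calP_{\tT}\tmG_{j',k'}\|_{\tcalG,\mathsf F}^2\lesssim\mu_0 r/(dn)$ together with the trivial estimate $\sum_{(j',k')}|\va_{j',k'}|^2\le\max(dw_{k'})\|\tZ\|_{\tcalG,\mathsf F}^2$ loses a factor of order $dn$ and is useless. To close this gap one must retain the $(dw_{k'})^{-1}$ weights throughout the variance computation and exploit \emph{both} the $1/d$ gain coming from DFT unitarity and the antidiagonal structure of $\mG_k$, in order to convert weighted sums of bilinear forms $|\la\calP_{\tT_i}\mG_k,\mG_{k'}\ra|^2/w_k$ into quantities controlled by the average-case incoherence. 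This weighted accounting is precisely what makes the average-case incoherence suffice; without the $1/d$ gain from the DFT, one would be forced to assume worst-case incoherence, as in the single-block ($d=1$) analysis of \cite{chen2014robust}.
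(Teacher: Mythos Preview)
Your high-level architecture matches the paper exactly: rewrite the quantity as $\|\sum_{\alpha,\beta}\vz_{\alpha,\beta}\|_2$ for independent zero-mean vectors $\vz_{\alpha,\beta}\in\C^{dn}$ and apply the vector Bernstein inequality. The gap is in what you call the ``key ingredient.'' The bound $\|\calP_{\tT}\tmG_{j',k'}\|_{\tcalG,\mathsf F}^2\lesssim \mu_0 r/(dn)$ is correct (it follows from $\|\cdot\|_{\tcalG,\mathsf F}^2\le d^{-1}\|\cdot\|_{\mathsf F}^2$ together with Lemma~\ref{lemma incoherence 1}), but it is the \emph{wrong} bound for this proof: it is uniform in $k'$, whereas $|\va_{j',k'}|=|\langle\tcalG^\ast\tZ,\ve_{j'}\ve_{k'}^\tran\rangle|$ is controlled only after division by $\sqrt{dw_{k'}}$. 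Combining your uniform bound with $|\va_{j',k'}|\le\sqrt{dw_{k'}}\|\tZ\|_{\tcalG,\infty}$ leaves an uncontrolled $\sqrt{w_{k'}}$ in the max-norm term, and the variance loses the factor of $n$ you yourself flag in the last paragraph.

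What is actually needed---and what the paper isolates as Lemma~\ref{lem:key_obs01a}---is the \emph{weighted} estimate
\[
dw_{k'}\,\bigl\|\calP_{\tT}\tmG_{j',k'}\bigr\|_{\tcalG,\mathsf F}^2\;\lesssim\;\frac{\mu_0 r\log(dn)}{n},
\]
which carries the $1/w_{k'}$ decay that cancels the weights in both $\|\tZ\|_{\tcalG,\infty}$ and $\|\tZ\|_{\tcalG,\mathsf F}$. With this single bound, both Bernstein parameters fall out in two lines: $B\le p^{-1}\frac{|\va_{\alpha,\beta}|}{\sqrt{dw_\beta}}\sqrt{\mu_0 r\log(dn)/n}\le p^{-1}\sqrt{\mu_0 r\log(dn)/n}\,\|\tZ\|_{\tcalG,\infty}$, and $V\le p^{-1}\sum_{\alpha,\beta}\frac{|\va_{\alpha,\beta}|^2}{dw_\beta}\cdot\frac{\mu_0 r\log(dn)}{n}=p^{-1}\frac{\mu_0 r\log(dn)}{n}\|\tZ\|_{\tcalG,\mathsf F}^2$. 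The proof of the weighted estimate does \emph{not} come from Lemma~\ref{lemma incoherence 1} directly; one first bounds the averaged row norms $\frac{1}{d}\sum_i\|\ve_p^\tran\calP_{\tT_i}(\sqrt{dw_{k'}}\,\ve_i^\tran\mF\ve_{j'}\mG_{k'})\|_2^2\le 9\mu_0 r/n$ uniformly in $p$, and then invokes the dyadic Lemma~\ref{lemma : general} to convert this into a $\|\cdot\|_{\tcalG,\mathsf F}$ bound, which is where the $\log$ factor enters. Your ``weighted bound that transfers the $\sqrt{dw_{k'}}$ factor'' is exactly this lemma; make it the key ingredient rather than an afterthought, and drop the $\mM_{\vu}$ detour for the variance, which is unnecessary once the weighted bound is in hand.
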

\begin{lemma}\label{lem:key04}
Let $\tZ\in\C^{dn_1\times dn_2}$ be a fixed $d$-block diagonal matrix.
Under the incoherence assumption \eqref{def: incoherent}, 
\begin{align*}
	\left\| \left(\frac{1}{p}\calP_{\tT}\tcalG\calP_{\Omega}\tcalG^\ast - \calP_{\tT}\tcalG\tcalG^\ast\right)\tZ\right\|_{\tcalG, \infty} \lesssim  \frac{\mu_0r}{n} \left( \sqrt{\frac{\log (dn)}{p}}\vecnorm{\tZ}{\tcalG,2} + \frac{\log (dn)}{p}\vecnorm{\tZ}{\tcalG,\infty} \right)
	\end{align*}
holds with high probability.
\end{lemma}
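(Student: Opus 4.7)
\emph{Proof plan.} The plan is to apply the scalar Bernstein inequality to the scalar random variable $\la\Delta\tZ,\tmG_{j,k}\ra$ for each fixed $(j,k)\in[d]\times[n]$, where $\Delta:=\frac{1}{p}\calP_{\tT}\tcalG\calP_{\Omega}\tcalG^\ast-\calP_{\tT}\tcalG\tcalG^\ast$, and then take a union bound over the $dn$ choices of $(j,k)$. Expanding $\calP_{\Omega}=\sum_{(a,b)}\delta_{ab}\la\ve_a\ve_b^\tran,\cdot\ra\ve_a\ve_b^\tran$ with independent $\delta_{ab}\sim\mathrm{Bernoulli}(p)$ and using $\tcalG(\ve_a\ve_b^\tran)=\tmG_{a,b}$ together with the self-adjointness of $\calP_{\tT}$ yields the decomposition
\begin{align*}
\la\Delta\tZ,\tmG_{j,k}\ra=\sum_{(a,b)\in[d]\times[n]}X_{ab},\qquad X_{ab}:=\left(\frac{\delta_{ab}}{p}-1\right)\la\tmG_{a,b},\tZ\ra\,\la\calP_{\tT}\tmG_{a,b},\tmG_{j,k}\ra,
\end{align*}
a sum of independent, mean-zero scalar variables to which Bernstein directly applies.

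For the two ingredients needed by Bernstein, namely $B=\max_{a,b}|X_{ab}|$ and $\sigma^2=\sum_{a,b}\mathbb{E}[X_{ab}^2]$, I would combine the following bounds. The factor $|\la\tmG_{a,b},\tZ\ra|$ is controlled in terms of $\sqrt{dw_b}\,\vecnorm{\tZ}{\tcalG,\infty}$ directly from the definition of $\vecnorm{\cdot}{\tcalG,\infty}$, while $|\la\calP_{\tT}\tmG_{a,b},\tmG_{j,k}\ra|$ is bounded by $\fronorm{\calP_{\tT}\tmG_{a,b}}\fronorm{\calP_{\tT}\tmG_{j,k}}\leq 2\mu_0 r/n$ via Cauchy--Schwarz and Lemma~\ref{lemma incoherence 1}. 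For the variance, the crucial observation is a Parseval-type identity coming from the unitarity of the DFT: since the $i$-th block of $\tmG_{a,b}$ is $(\ve_i^\tran\mF\ve_a)\mG_b$ and $\sum_a|\ve_i^\tran\mF\ve_a|^2=1$, one has $\sum_a\fronorm{\calP_{\tT}\tmG_{a,b}}^2=\sum_i\fronorm{\calP_{\tT_i}\mG_b}^2$, and the right-hand side is in turn controlled by $d\cdot(2\mu_0 r/n)$ via Lemma~\ref{lemma incoherence 1}. Feeding this identity together with $|\la\calP_{\tT}\tmG_{a,b},\tmG_{j,k}\ra|^2\leq(2\mu_0 r/n)\fronorm{\calP_{\tT}\tmG_{a,b}}^2$ into the variance sum, and bounding the remaining $\la\tmG_{a,b},\tZ\ra$ factors by $\vecnorm{\tZ}{\tcalG,2}$, produces a variance whose leading factor is $(\mu_0 r/n)^2p^{-1}$.

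The scalar Bernstein inequality then gives $|\sum_{(a,b)}X_{ab}|\lesssim\sqrt{\sigma^2\log(dn)}+B\log(dn)$ with probability at least $1-(dn)^{-c}$. Dividing by $\sqrt{dw_k}$ converts the bound into the $\vecnorm{\cdot}{\tcalG,\infty}$ normalization for the fixed pair $(j,k)$, and a union bound over the $dn$ pairs (whose cardinality is absorbed into the $\log(dn)$ factor inside the Bernstein tail) yields the stated estimate.

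The main obstacle is arranging the $B$ and $\sigma^2$ estimates so that the prefactor is exactly $\mu_0 r/n$ rather than the looser $\sqrt{\mu_0 r/n}$: this requires using the incoherence bound \emph{twice} --- once on the ``test'' basis element $\calP_{\tT}\tmG_{j,k}$ and once, in an averaged form, on $\calP_{\tT_i}\mG_b$ via the DFT Parseval identity. A single application of the worst-case bound $\fronorm{\calP_{\tT}\tmG_{a,b}}^2\leq 2\mu_0 r/n$ followed by Cauchy--Schwarz would produce only half of the desired factor; it is the averaging over the Fourier index $a$ (the extra structure compared with the $d=1$ Hankel completion setting of~\cite{chen2014robust}) that supplies the second factor.
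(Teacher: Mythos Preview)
Your decomposition and Bernstein strategy match the paper's, but there is a genuine gap in the two key estimates. The Cauchy--Schwarz bound
\[
\big|\la\calP_{\tT}\tmG_{a,b},\tmG_{j,k}\ra\big|\leq \fronorm{\calP_{\tT}\tmG_{a,b}}\fronorm{\calP_{\tT}\tmG_{j,k}}\leq \frac{2\mu_0 r}{n}
\]
is correct but is not sharp enough: it carries no information about the Hankel weights $w_b,w_k$. When you combine it with $|\la\tmG_{a,b},\tZ\ra|\leq\sqrt{dw_b}\,\vecnorm{\tZ}{\tcalG,\infty}$ and then divide by $\sqrt{dw_k}$ at the end, the uniform term picks up an uncontrolled factor $\sqrt{w_b/w_k}$, which can be of order $\sqrt{n}$ (take $w_b\sim n/2$ and $w_k=1$). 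The same defect appears in the variance: your Parseval identity $\sum_a\fronorm{\calP_{\tT}\tmG_{a,b}}^2=\sum_i\fronorm{\calP_{\tT_i}\mG_b}^2$ is true, but it cannot be decoupled from the factor $|\la\tmG_{a,b},\tZ\ra|^2$, and after rewriting $|\la\tmG_{a,b},\tZ\ra|^2=(dw_b)\cdot\frac{|\la\tmG_{a,b},\tZ\ra|^2}{dw_b}$ you again face an extra $w_b/w_k$ after the final $1/(dw_k)$ normalization. The result is a polynomial-in-$n$ loss, not an absolute constant.

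What the paper does instead is prove a \emph{weighted} pointwise estimate (Lemma~\ref{lem:key_obs02}):
\[
\sqrt{\frac{w_b}{w_k}}\,\big|\la\calP_{\tT}\tmG_{a,b},\tmG_{j,k}\ra\big|\leq \frac{3\mu_0 r}{n},
\]
obtained by expanding the inner product through the explicit Hankel basis $\mG_k=\frac{1}{\sqrt{w_k}}\sum_{p+q=k+1}\ve_p\ve_q^\tran$ and invoking the average incoherence on the resulting anti-diagonal sums. This is exactly the refinement that supplies the missing $\sqrt{w_k/w_b}$ factor: with it, $\sqrt{dw_b}\cdot\sqrt{w_k/w_b}/\sqrt{dw_k}=1$ and both the $B$ and $\sigma^2$ bounds come out with the clean $(\mu_0 r/n)$ and $(\mu_0 r/n)^2$ prefactors. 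Your Cauchy--Schwarz step should therefore be replaced by this entrywise Hankel computation; the Parseval identity you highlight is not needed here.
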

\begin{lemma}\label{lem:key05}
Under the incoherence assumption \eqref{def: incoherent}, we have 
\begin{align*}
\left\|\tU\tV^\tranH\right\|_{\tcalG,\mathsf{F}}\lesssim \sqrt{\frac{\mu_0 r\log (dn)}{n}}
\quad\mbox{and}\quad \left\| \tU\tV^\tranH\right\|_{\tcalG, \infty}\lesssim \frac{\mu_0r}{n}.
\end{align*}
\end{lemma}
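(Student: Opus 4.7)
The strategy is to exploit the block-diagonal structure together with the unitarity of the DFT. Since $\tU\tV^\tranH=\diag(\tU_a\tV_a^\tranH)_{a=1}^{d}$ and $\tmG_{j,k}=\diag\lb(\ve_a^\tran\mF\ve_j)\mG_k\rb_{a=1}^{d}$, a direct computation yields
\begin{align*}
\la \tU\tV^\tranH,\tmG_{j,k}\ra \;=\; \sum_{a=1}^{d}\overline{\ve_a^\tran\mF\ve_j}\,\la \tU_a\tV_a^\tranH,\mG_k\ra.
\end{align*}
Both bounds therefore reduce to estimating the scalars $\la \tU_a\tV_a^\tranH,\mG_k\ra$, with the two elementary identities $|\ve_a^\tran\mF\ve_j|=1/\sqrt{d}$ and $\sum_{j}\overline{\ve_a^\tran\mF\ve_j}\ve_{a'}^\tran\mF\ve_j=\delta_{a,a'}$ available as tools.

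\textbf{Infinity-type bound.} Since $\mG_k$ equals $1/\sqrt{w_k}$ on the $k$-th anti-diagonal and vanishes elsewhere,
\begin{align*}
\la \tU_a\tV_a^\tranH,\mG_k\ra \;=\; \frac{1}{\sqrt{w_k}}\sum_{\ell+i=k+1}(\tU_a\tV_a^\tranH)_{\ell,i}.
\end{align*}
Combined with $|\ve_a^\tran\mF\ve_j|=1/\sqrt{d}$ and the entry-wise bound $|(\tU_a\tV_a^\tranH)_{\ell,i}|\le \vecnorm{\tU_a(\ell,:)}{2}\vecnorm{\tV_a(i,:)}{2}$, the quantity of interest is at most
\begin{align*}
\frac{1}{\sqrt{dw_k}}\,|\la \tU\tV^\tranH,\tmG_{j,k}\ra| \;\le\; \frac{1}{dw_k}\sum_{\ell+i=k+1}\sum_{a=1}^{d}\vecnorm{\tU_a(\ell,:)}{2}\vecnorm{\tV_a(i,:)}{2}.
\end{align*}
A second Cauchy--Schwarz across $a$, together with the average incoherence of Definition~\ref{def: incoherence}, gives $\sum_{a}\vecnorm{\tU_a(\ell,:)}{2}\vecnorm{\tV_a(i,:)}{2}\le d\mu_0 r/n$. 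Since the anti-diagonal has $w_k$ entries, the final bound is $\mu_0 r/n$.

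\textbf{Frobenius-type bound.} By orthogonality of the columns of $\mF$,
\begin{align*}
\sum_{j=1}^{d}|\la \tU\tV^\tranH,\tmG_{j,k}\ra|^2 \;=\; \sum_{a=1}^{d}|\la \tU_a\tV_a^\tranH,\mG_k\ra|^2,
\end{align*}
so it suffices to bound $\sum_{k,a}\frac{1}{dw_k}|\la \tU_a\tV_a^\tranH,\mG_k\ra|^2$. Writing $\la \tU_a\tV_a^\tranH,\mG_k\ra=\la \tU_a,\mG_k\tV_a\ra$ and applying Cauchy--Schwarz with $\fronorm{\tU_a}^2=r$ gives $|\la \tU_a\tV_a^\tranH,\mG_k\ra|^2\le r\fronorm{\mG_k\tV_a}^2$. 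Summing over $a$ and invoking Lemma~\ref{lemma incoherence 1} yields $\sum_{a}|\la \tU_a\tV_a^\tranH,\mG_k\ra|^2\lesssim d\mu_0 r^2/n$. Summing over $k$ with weight $1/(dw_k)$, the harmonic-sum estimate $\sum_{k}1/w_k\lesssim \log n$ (which holds because $w_k=\min(k,n+1-k)$ once $n_1=n_2$) completes the argument, producing the stated $\log(dn)$ factor.

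\textbf{Main obstacle.} The chief difficulty is that only the \emph{average-case} incoherence is available: the individual row norms $\vecnorm{\tU_a(\ell,:)}{2}$ cannot be controlled for each $a$ separately. Every Cauchy--Schwarz step must therefore be ordered so that the summation over $a$ occurs \emph{before} the incoherence is invoked; doing otherwise would force the worst-case inequality~\eqref{def worst incoh} on us. A secondary subtlety is the non-uniform weight $1/w_k$, which blows up near the corner anti-diagonals, but this is tamed by the logarithmic growth of $\sum_{k}1/w_k$ that is absorbed into the $\log(dn)$ factor of the final bound.
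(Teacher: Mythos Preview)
Your treatment of the $\|\cdot\|_{\tcalG,\infty}$ bound is correct and essentially the same as the paper's: both reduce to the anti-diagonal sum, apply Cauchy--Schwarz over $a$, and finish with the average incoherence. The paper packages the intermediate step as $\frac{1}{\sqrt{w_k}}|\la\tU_a\tV_a^\tranH,\mG_k\ra|\le\fronorm{\tU_a^\tranH\mG_k}\fronorm{\mG_k\tV_a}$ and then invokes Lemma~\ref{lemma incoherence 1}, but this is equivalent to your entrywise argument.

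For the $\|\cdot\|_{\tcalG,\mathsf{F}}$ bound, however, there is a genuine gap that costs you a factor of $\sqrt{r}$. Your Cauchy--Schwarz step
\[
|\la\tU_a\tV_a^\tranH,\mG_k\ra|^2\;=\;|\la\tU_a,\mG_k\tV_a\ra|^2\;\le\;\fronorm{\tU_a}^2\,\fronorm{\mG_k\tV_a}^2\;=\;r\,\fronorm{\mG_k\tV_a}^2
\]
throws away all row-wise information about $\tU_a$ and replaces it by the coarse identity $\fronorm{\tU_a}^2=r$. After summing over $a$ and $k$ your argument yields $\|\tU\tV^\tranH\|_{\tcalG,\mathsf{F}}^2\lesssim \mu_0 r^2\log(n)/n$, i.e.\ an extra $r$ in the bound, not the claimed $\mu_0 r\log(dn)/n$.

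The paper avoids this loss by not bounding each $|\la\tU_a\tV_a^\tranH,\mG_k\ra|^2$ individually in terms of $\fronorm{\mG_k\tV_a}^2$. Instead (Lemma~\ref{lemma : general}) it first applies Cauchy--Schwarz \emph{on the anti-diagonal} to get $\frac{1}{w_k}|\la\tZ_a,\mG_k\ra|^2\le\frac{1}{w_k}\sum_{p+q=k+1}|\ve_p^\tran\tZ_a\ve_q|^2$ with $\tZ_a=\tU_a\tV_a^\tranH$, and then uses a dyadic decomposition of the index $k$ so that the sum over $k$ with weight $1/w_k$ is controlled by the row norms $\|\ve_p^\tran\tU_a\tV_a^\tranH\|_2^2=\|\ve_p^\tran\tU_a\|_2^2$. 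Only at this point is the average incoherence $\frac{1}{d}\sum_a\|\ve_p^\tran\tU_a\|_2^2\le\mu_0 r/n$ invoked, and no global $\fronorm{\tU_a}^2=r$ ever enters. The logarithm then comes from the number of dyadic scales, not from the crude harmonic sum $\sum_k 1/w_k$. To repair your argument you need exactly this refinement: keep the row-wise structure of $\tU_a$ through the estimate rather than collapsing it to $r$ at the outset.
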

We are now in the position to prove Theorem~\ref{main result}.
\begin{proof}[Proof of Theorem \ref{main result}] 
We only need to validate the assumptions in Lemma \ref{lem:pf1_dual}. Note that \eqref{ineq: RIP} follows from Lemma \ref{lem:key01}, and it is not hard to see that \eqref{eq: restricted on Omega} holds by the construction process \eqref{eq:construct_dual}. Thus it only remains to show \eqref{ineq: F norm} and \eqref{ineq: perp operator norm}. 

\paragraph{Validating \eqref{ineq: F norm}} Let \kw{$\tE_k = \tU\tV^\tranH - \calP_{\tT}(\tZ^{k})$}. Then a simple calculation yields that
\begin{align*}
\tE_k = \calP_{\tT}\left(\tcalG\tcalG^\ast  - \frac{1}{q}\tcalG\calP_{\Omega_k}\tcalG^\ast \right)\calP_{\tT}(\tE_{k-1}).
\end{align*}
If follows that 
\begin{align*}
\fronorm{\tE_k} 
&\leq \opnorm{\calP_{\tT}\left(\tcalG\tcalG^\ast  - \frac{1}{q}\tcalG\calP_{\Omega_k}\tcalG^\ast \right)\calP_{\tT}}\cdot \fronorm{\tE_{k-1}}\\
&\leq \frac{1}{2} \fronorm{\tE_{k-1}},
\end{align*}
where we have used Lemma \ref{lem:key01} in the last inequality by noting that $\Omega_k$ is independent of $\tE_{k-1}$ and $q=1-(1-p)^{1/k_0}\geq p/k_0\gtrsim \frac{\mu_0r\log (dn)}{n}$. Thus we have 
\begin{align*}
\fronorm{\calP_{\tT}(\bLambda) - \tU\tV^\tran} &= \fronorm{\tE_{k_0}}\leq \left(\frac{1}{2}\right)^{k_0}\fronorm{\tU\tV^\tran}\leq\frac{1}{(dn)^{2}}\fronorm{\tU\tV^\tran}= \frac{\sqrt{dr}}{(dn)^{2}}\leq \frac{1}{n}.
\end{align*}

\paragraph{Validating \eqref{ineq: perp operator norm}} Because 
\begin{align*}
\bLambda = \sum_{k=1}^{k_0}\left( \frac{1}{q}\tcalG\calP_{\Omega_k}\tcalG^\ast + (\calI - \tcalG\tcalG^\ast) \right)\tE_{k-1},
\end{align*}
we have 
\begin{align*}
\opnorm{\calP_{\tT^\perp}(\bLambda)} 
&= \opnorm{\sum_{k=1}^{k_0}\calP_{\tT^\perp}\left( \frac{1}{q}\tcalG\calP_{\Omega_k}\tcalG^\ast + (\calI - \tcalG\tcalG^\ast) \right)\tE_{k-1}}\\
&= \opnorm{\sum_{k=1}^{k_0}\calP_{\tT^\perp}\left( \frac{1}{q}\tcalG\calP_{\Omega_k}\tcalG^\ast  - \tcalG\tcalG^\ast \right)\tE_{k-1}}\\
&\leq \sum_{k=1}^{k_0}\opnorm{\calP_{\tT^\perp}\left( \frac{1}{q}\tcalG\calP_{\Omega_k}\tcalG^\ast  - \tcalG\tcalG^\ast \right)\tE_{k-1}},
\end{align*}
where the second line follows from the fact $\tE_{k-1}\in\tT$. 

Noticing $\tE_{k-1}$ is independent of $\Omega_k$, the application of Lemma \ref{lem:key02} gives that
\begin{align*}
\opnorm{\calP_{\tT^\perp}\left( \frac{1}{q}\tcalG\calP_{\Omega_k}\tcalG^\ast  - \tcalG\tcalG^\ast \right)\calP_{\tT}(\tE_{k-1})} &\leq \opnorm{\left( \frac{1}{q}\tcalG\calP_{\Omega_k}\tcalG^\ast  - \tcalG\tcalG^\ast \right)(\tE_{k-1})}\\
&\lesssim \sqrt{\frac{\log(dn)}{q}}\left\|\tE_{k-1}\right\|_{\tcalG,\mathsf{F}} + \frac{\log(dn)}{q}\left\| \tE_{k-1} \right\|_{\tcalG,\infty}.
\end{align*}
Moreover, by Lemmas \ref{lem:key03} and \ref{lem:key04} we have 
\begin{align*}
&\sqrt{\frac{\log(dn)}{q}}  \left\|\tE_{k-1}\right\|_{\tcalG,\mathsf{F}} + \frac{\log(dn)}{q} \left\| \tE_{k-1} \right\|_{\tcalG,\infty} \\
=& \sqrt{\frac{\log(dn)}{q}}  \left\| \calP_{\tT}\left(\tcalG\tcalG^\ast  - \frac{1}{q}\tcalG\calP_{\Omega_{k-1}}\tcalG^\ast \right)(\tE_{k-2}) \right\|_{\tcalG,\mathsf{F}} + \frac{\log(dn)}{q} \left\| \calP_{\tT}\left(\tcalG\tcalG^\ast  - \frac{1}{q}\tcalG\calP_{\Omega_{k-1}}\tcalG^\ast \right)(\tE_{k-2}) \right\|_{\tcalG,\infty}\\
\lesssim & \left(\sqrt{\frac{\mu_0 r\log^2(dn)}{qn}}+\frac{\mu_0r\log(dn)}{qn}\right) \left( \sqrt{\frac{\log(dn)}{q} }\left\|\tE_{k-2}\right\|_{\tcalG,\mathsf{F}}+\frac{\log(dn)}{q}  \left\| \tE_{k-2} \right\|_{\tcalG, \infty} \right)\\
\leq &\frac{1}{2}\left( \sqrt{\frac{\log(dn)}{q} }\left\|\tE_{k-2}\right\|_{\tcalG,\mathsf{F}}+\frac{\log(dn)}{q}  \left\| \tE_{k-2} \right\|_{\tcalG, \infty} \right),
\end{align*} 
where we have used the assumption $q\geq p/k_0\gtrsim \frac{\mu_0r\log^2 (dn)}{n}$ for sufficiently large constant in the last line. 
Applying this relation recursively yields that 
\begin{align*}
\opnorm{\calP_{\tT^\perp}\left( \frac{1}{q}\tcalG\calP_{\Omega_k}\tcalG^\ast  - \tcalG\tcalG^\ast \right)\tE_{k-1}}\leq \left(\frac{1}{2}\right)^{k-1}\left( \sqrt{\frac{\log(dn)}{q} }\left\|\tE_{0}\right\|_{\tcalG,\mathsf{F}}+\frac{\log(dn)}{q}  \left\| \tE_{0}\right\|_{\tcalG, \infty} \right).
\end{align*}
Finally we have 
\begin{align*}
\opnorm{\calP_{\tT^\perp}(\bLambda)} &\leq \sum_{k=1}^{k_0}\left(\frac{1}{2}\right)^{k-1}\left( \sqrt{\frac{\log(dn)}{q} }\left\|\tE_{0}\right\|_{\tcalG,\mathsf{F}}+\frac{\log(dn)}{q}  \left\| \tE_{0}\right\|_{\tcalG, \infty} \right)\\
&\lesssim \frac{1}{2} \left( \sqrt{\frac{\mu_0r\log(dn)}{qn} } +\frac{\mu_0r\log(dn)}{qn}  \right)\\
&\leq \frac{1}{2},
\end{align*}
where we have used Lemma \ref{lem:key05} in the second inequality.
\end{proof}

\section{Proof of Lemma~\ref{lem:key01}}
\label{proof lemma key01}
Lemma~\ref{lem:key01} not only appears in the assumptions of Lemma \ref{lem:pf1_dual}, but will also be used in the proof of \eqref{ineq: F norm}. This section presents a proof of this lemma.
\begin{proof}[Proof of Lemma~\ref{lem:key01}]
For any $d$-block diagonal matrix $\tW\in\C^{dn_1\times dn_2}$, we have 
\begin{align*}
\left(\frac{1}{p}\calP_{\tT}\tcalG\calP_{\Omega}\tcalGT\calP_{\tT} \right)\tW&=\sum_{j,k}\delta_{j,k}\left\la \tcalG^\ast\calP_{\tT}(\tW), \ve_j\ve_k^\tran \right\ra\left(\frac{1}{p}\calP_{\tT}\tcalG\right)(\ve_j\ve_k^\tran).
\end{align*}
Similarly, there holds 
\begin{align*}
\left(\calP_{\tT}\tcalG\tcalGT\calP_{\tT} \right)\tW = 
\sum_{j,k}\left\la \tcalG^\ast\calP_{\tT}(\tW), \ve_j\ve_k^\tran \right\ra\left(\calP_{\tT}\tcalG\right)(\ve_j\ve_k^\tran).
\end{align*}
Therefore we can rewrite $\left(\frac{1}{p}\calP_{\tT}\tcalG\calP_{\Omega}\tcalGT\calP_{\tT} -\calP_{\tT}\tcalG\tcalGT\calP_{\tT} \right)(\tW)$ as 
\begin{align*}
\left(\frac{1}{p}\calP_{\tT}\tcalG\calP_{\Omega}\tcalGT\calP_{\tT} -\calP_{\tT}\tcalG\tcalGT\calP_{\tT} \right)(\tW) = \sum_{j,k} \calZ_{j,k}(\tW),
\end{align*}
where 
\begin{align*}
\calZ_{j,k}:~\tW\rightarrow \left(\frac{1}{p}\delta_{j,k}-1\right)\left\la \tcalG^\ast\calP_{\tT}(\tW), \ve_j\ve_k^\tran \right\ra\left(\calP_{\tT}\tcalG\right)(\ve_j\ve_k^\tran) 
\end{align*}
is a self-adjoint operator since 
\begin{align*}
\left\la \calZ_{j,k}(\tW), \tZ \right\ra &= \left(\frac{1}{p}\delta_{j,k}-1\right)\left\la \tcalG^\ast\calP_{\tT}(\tW), \ve_j\ve_k^\tran \right\ra\cdot \left\la \left(\calP_{\tT}\tcalG\right)(\ve_j\ve_k^\tran)  , \tZ \right\ra\\
&=\left(\frac{1}{p}\delta_{j,k}-1\right)\left\la \tW , \calP_{\tT}\tcalG(\ve_j\ve_k^\tran) \right\ra\cdot \left\la \ve_j\ve_k^\tran , \tcalG^\ast\calP_{\tT}(\tZ) \right\ra\\
&= \left\la\tW, \calZ_{j,k}(\tZ) \right\ra.
\end{align*}
Therefore, we have
\begin{align*}
\opnorm{\frac{1}{p}\calP_{\tT}\tcalG\calP_{\Omega}\tcalGT\calP_{\tT} -\calP_{\tT}\tcalG\tcalGT\calP_{\tT} } = \opnorm{\sum_{j,k}\calZ_{j,k}}
\end{align*}
In order to apply the Bernstein inequality \eqref{bernstein} to bound the spectral norm, we need to bound $\opnorm{\calZ_{j,k}}$ and $\opnorm{\E{\sum_{j,k}\calZ_{j,k}^2}}$.

For the upper bound of $\opnorm{\calZ_{j,k}}$, a direct calculation yields that
\begin{align*}
	\opnorm{\calZ_{j,k}}  &= \sup_{ \fronorm{\tW}=1}\fronorm{ \calZ_{j,k}(\tW) }\\
	&\leq \frac{1}{p}\sup_{\fronorm{\tW}=1}\left|\left\la \tcalG^\ast\calP_{\tT}(\tW), \ve_j\ve_k^\tran \right\ra \right|\cdot \fronorm{\left(\calP_{\tT}\tcalG\right)(\ve_j\ve_k^\tran) }\\
	&= \frac{1}{p}\sup_{\fronorm{\tW}=1}\left|\left\la \tW, \calP_{\tT}\tcalG(\ve_j\ve_k^\tran) \right\ra \right|\cdot \fronorm{\left(\calP_{\tT}\tcalG\right)(\ve_j\ve_k^\tran) }\\
	&\leq \frac{1}{p} \fronorm{\left(\calP_{\tT}\tcalG\right)(\ve_j\ve_k^\tran) }^2\\
	&\leq \frac{1}{p}\frac{2\mu_0c_sr}{n},
	\end{align*}
	where the last inequality is due to \eqref{ineq 2}.

In order to bound $\opnorm{\E{\sum_{j,k}\calZ_{j,k}^2}}$, first note that
\begin{align*}
	\calZ_{j,k}^2(\tW) &= \calZ_{j,k}\left( \left(\frac{1}{p}\delta_{j,k}-1\right)\left\la \tcalG^\ast\calP_{\tT}(\tW), \ve_j\ve_k^\tran \right\ra \calP_{\tT}\tcalG (\ve_j\ve_k^\tran) \right)\\
	&=   \left(\frac{1}{p}\delta_{j,k}-1\right)\left\la \tcalG^\ast\calP_{\tT}(\tW), \ve_j\ve_k^\tran \right\ra \cdot \calZ_{j,k}\left( \calP_{\tT}\tcalG (\ve_j\ve_k^\tran) \right)\\
	&= \left(\frac{1}{p}\delta_{j,k}-1\right)^2\left\la \tcalG^\ast\calP_{\tT}(\tW), \ve_j\ve_k^\tran \right\ra \cdot \left\la\tcalG^\ast\calP_{\tT}\tcalG(\ve_j\ve_k^\tran), \ve_j\ve_k^\tran \right\ra \calP_{\tT}\tcalG (\ve_j\ve_k^\tran)\\
	&= \left(\frac{1}{p}\delta_{j,k}-1\right)^2\left\la \tcalG^\ast\calP_{\tT}(\tW), \ve_j\ve_k^\tran \right\ra \cdot \left\la\calP_{\tT}\tcalG(\ve_j\ve_k^\tran), \calP_{\tT}\tcalG(\ve_j\ve_k^\tran) \right\ra \calP_{\tT}\tcalG (\ve_j\ve_k^\tran)\\
	&= \left(\frac{1}{p}\delta_{j,k}-1\right)^2\left\la \tcalG^\ast\calP_{\tT}(\tW), \ve_j\ve_k^\tran \right\ra \cdot \fronorm{\calP_{\tT}\tcalG(\ve_j\ve_k^\tran)}^2\cdot \calP_{\tT}\tcalG (\ve_j\ve_k^\tran).
	\end{align*}
Hence,
	\begin{align*}
	\opnorm{\E{\sum_{j,k}\calZ_{j,k}^2}} &= \sup_{\fronorm{\tW}=1} \fronorm{\E{\sum_{j,k}\calZ_{j,k}^2(\tW)}}\\
	&\leq \frac{1}{p}\sup_{\fronorm{\tW}=1} \fronorm{\sum_{j,k}\left\la \tcalG^\ast(\tW_{\tT}), \ve_j\ve_k^\tran \right\ra \cdot \fronorm{\calP_{\tT}\tcalG(\ve_j\ve_k^\tran)}^2\cdot \calP_{\tT}\tcalG (\ve_j\ve_k^\tran) }\\
	&\leq \frac{1}{p}\max_{j,k}\fronorm{\calP_{\tT}\tcalG(\ve_j\ve_k^\tran)}^2 \cdot \sup_{\fronorm{\tW}=1} \fronorm{\sum_{j,k}\left\la \tcalG^\ast(\tW_{\tT}), \ve_j\ve_k^\tran \right\ra \cdot \calP_{\tT}\tcalG (\ve_j\ve_k^\tran)}\\
	&= \frac{1}{p}\max_{j,k}\fronorm{\calP_{\tT}\tcalG(\ve_j\ve_k^\tran)}^2 \cdot \sup_{ \fronorm{\tW}=1} \fronorm{ \calP_{\tT}\tcalG \left(\sum_{j,k}\left\la \tcalG^\ast(\tW_{\tT}), \ve_j\ve_k^\tran \right\ra  \ve_j\ve_k^\tran \right)}\\
	&\leq \frac{1}{p}\max_{j,k}\fronorm{\calP_{\tT}\tcalG(\ve_j\ve_k^\tran)}^2 \cdot \sup_{\fronorm{\tW}=1} \fronorm{  \sum_{j,k}\left\la \tcalG^\ast(\tW_{\tT}), \ve_j\ve_k^\tran \right\ra  \ve_j\ve_k^\tran }\\
	&= \frac{1}{p}\max_{j,k}\fronorm{\calP_{\tT}\tcalG(\ve_j\ve_k^\tran)}^2 \cdot \sup_{ \fronorm{\tW}=1} \fronorm{ \tcalG^\ast\calP_{\tT}(\tW) }\\
	&\leq \frac{1}{p}\max_{j,k}\fronorm{\calP_{\tT}\tcalG(\ve_j\ve_k^\tran)}^2 \\
	&\leq \frac{1}{p}\frac{2\mu_0c_sr}{n}.
	\end{align*}
	
Based on the above two bounds,  applying  the Bernstein inequality to  
$\opnorm{\sum_{j,k}\calZ_{j,k}}$ completes the proof of Lemma \ref{proof lemma key01}.
\end{proof}
\section{Proofs of Lemmas~\ref{lem:key02} to \ref{lem:key05}}
\label{proof lemma key02to05}
In this section we present the proofs for Lemmas~\ref{lem:key02} to \ref{lem:key05}. These lemmas have been used when establishing the inequality \eqref{ineq: perp operator norm}.
\subsection{Proof of Lemma~\ref{lem:key02}}
\begin{proof}[Proof of Lemma~\ref{lem:key02}]
Notice that
\begin{align*}
	\left(\frac{1}{p}\tcalG\calP_{\Omega}\tcalG^\ast - \tcalG\tcalG^\ast\right)\tZ &= \sum_{j,k}\left( \frac{1}{p}\delta_{j,k}-1\right)\la\tcalG^\ast\tZ, \ve_j\ve_k^\tran\ra\tcalG(\ve_j\ve_k^\tran)\\
	&=:\sum_{j,k}\tZ_{j,k},
	\end{align*}
	where $\tZ_{j,k}\in\C^{dn_1\times dn_2}$ are independent $d$-block diagonal matrices with zero mean. In order to prove Lemma~\ref{lem:key02} we only need to show that
\begin{align*}
\opnorm{\tZ_{j,k}}\leq \frac{1}{p}\left\| \tZ \right\|_{\tcalG,\infty}\quad\mbox{and}\quad
\max\left\{\opnorm{\E{\sum_{j,k}\tZ_{i,j}\tZ_{j,k}^\tranH}},\opnorm{\E{\sum_{j,k}\tZ_{i,j}^\tranH\tZ_{j,k}}}\right\}\leq \frac{1}{p}\left\|\tX\right\|_{\tcalG,\mathsf{F}}^2
\end{align*}
since the lemma then follows immediately from the Bernstein inequality \eqref{bernstein}. 

The operator norm of $\tZ_{j,k}$ can be bounded as follows
	\begin{align*}
	\opnorm{\tZ_{j,k}} &\leq \frac{1}{p}\left| \la\tcalG^\ast(\tZ), \ve_j\ve_k^\tran\ra \right|\cdot \opnorm{\tcalG(\ve_j\ve_k^\tran)}\\
	&\leq \frac{1}{p}\frac{1}{\sqrt{dw_k}} \left| \la\tcalG^\ast(\tZ), \ve_j\ve_k^\tran\ra \right|\\
	&\leq \frac{1}{p}\max_{(j,k)\in [d]\times[n]}\frac{1}{\sqrt{dw_k}} \left| \la\tcalG^\ast(\tZ), \ve_j\ve_k^\tran\ra \right|\\
	&= \frac{1}{p}\left\| \tZ \right\|_{\tcalG,\infty},
	\end{align*}
	where the second line follows from Lemma~\ref{lem:property 2}.

On the other hand, 
\begin{align*}
	\opnorm{\E{\sum_{j,k}\tZ_{i,j}\tZ_{j,k}^\tranH}} & \leq \frac{1}{p}\opnorm{\sum_{j,k} \left|\la\tcalG^\ast(\tZ), \ve_j\ve_k^\tran\ra \right|^2 \cdot \tcalG(\ve_j\ve_k^\tran)\left(\tcalG(\ve_j\ve_k^\tran) \right)^\tranH}\\
	&\leq \frac{1}{p}\sum_{j,k}  \left|\la\tcalG^\ast(\tZ), \ve_j\ve_k^\tran\ra \right|^2\cdot \opnorm{ \tcalG(\ve_j\ve_k^\tran)\left(\tcalG(\ve_j\ve_k^\tran) \right)^\tranH}\\
	&\leq \frac{1}{p}\sum_{j,k} \frac{1}{w_kd} \left|\la\tcalG^\ast(\tZ), \ve_j\ve_k^\tran\ra \right|^2\\
	&= \frac{1}{p}\left\|\tZ\right\|_{\tcalG,\mathsf{F}}^2,
	\end{align*}
	where the third line follows from Lemma~\ref{lem:property 2}.
	Similarly, we can obtain that $\opnorm{\E{\sum_{j,k}\tZ_{j,k}^\tranH\tZ_{i,j}}} \leq \frac{1}{p}\left\|\tZ\right\|_{\tcalG,\mathsf{F}} $, which completes the proof of Lemma~\ref{lem:key02}.
\end{proof}
\subsection{Proof of Lemma~\ref{lem:key03}}
The following observation plays a vital role in the proof of Lemma~\ref{lem:key03} as well as in the proof of Lemma~\ref{lem:key05}. 
\begin{lemma}
\label{lem:key_obs01} 
For any $d$-block diagonal matrix $\tZ$ we have 
$$\left\|\tZ\right\|_{\tcalG,\mathsf{F}}^2=\frac{1}{d}\sum_{i=1}^{d}\sum_{k=1}^{n}\frac{1}{w_k}\left|\la \tZ_i, \mG_{k} \ra \right|^2.$$
\end{lemma}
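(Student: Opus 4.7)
The plan is to directly expand the inner product $\la \tZ, \tmG_{j,k}\ra$ using the block-diagonal structure of both $\tZ$ and $\tmG_{j,k}$, and then use the unitarity of $\mF$ to collapse the sum over $j$.

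First I would use the definition of $\tmG_{j,k}$ in \eqref{eq: Gjk} together with the block-diagonal form $\tZ = \diag(\tZ_1,\ldots,\tZ_d)$ to write
\begin{align*}
\la \tZ, \tmG_{j,k}\ra = \sum_{i=1}^{d} \la \tZ_i,\, (\ve_i^\tran \mF \ve_j)\,\mG_k \ra = \sum_{i=1}^{d} \overline{\ve_i^\tran \mF \ve_j}\,\la \tZ_i,\mG_k \ra,
\end{align*}
since $\ve_i^\tran \mF \ve_j = F_{ij}$ is a scalar and the inner product is conjugate-linear in the second argument.

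Next, fix $k$ and set $a_i := \la \tZ_i,\mG_k\ra$ for $i=1,\ldots,d$. The previous display says that $\la\tZ,\tmG_{j,k}\ra$ is precisely the $j$-th entry of $\mF^{\tranH}\va$, where $\va=(a_1,\ldots,a_d)^\tran$. Since $\mF$ is unitary, so is $\mF^{\tranH}$, and therefore
\begin{align*}
\sum_{j=1}^{d} \left|\la \tZ,\tmG_{j,k}\ra\right|^2 = \vecnorm{\mF^{\tranH}\va}{2}^2 = \vecnorm{\va}{2}^2 = \sum_{i=1}^{d}\left|\la\tZ_i,\mG_k\ra\right|^2.
\end{align*}

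Finally I would substitute this identity into the definition \eqref{def: G F norm}, interchanging the order of summation:
\begin{align*}
\left\|\tZ\right\|_{\tcalG,\mathsf{F}}^2 = \sum_{k=1}^{n}\frac{1}{d w_k} \sum_{j=1}^{d} \left|\la \tZ,\tmG_{j,k}\ra\right|^2 = \frac{1}{d}\sum_{i=1}^{d}\sum_{k=1}^{n}\frac{1}{w_k}\left|\la \tZ_i,\mG_k\ra\right|^2,
\end{align*}
which is the claimed identity. There is no real obstacle here; the proof hinges only on the block structure of $\tmG_{j,k}$ and the column-orthonormality of $\mF$, both of which are immediate from the setup in Section~\ref{sec: preliminary}.
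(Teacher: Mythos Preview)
Your proof is correct and follows essentially the same approach as the paper: both expand $\la\tZ,\tmG_{j,k}\ra$ block-wise and then use the unitarity of $\mF$ to collapse the sum over $j$. Your presentation is slightly cleaner, recognizing $\sum_j|\la\tZ,\tmG_{j,k}\ra|^2=\|\mF^{\tranH}\va\|_2^2=\|\va\|_2^2$ in one step, whereas the paper arrives at the same conclusion by explicitly splitting the square into diagonal and off-diagonal terms and killing the cross terms via row-orthogonality of $\mF$.
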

\begin{proof}
This lemma follows from a direct calculation:
\begin{align*}
\left\| \tZ \right\|_{\tcalG,\mathsf{F}}&=\sum_{j=1}^{d}\sum_{k=1}^{n}\frac{1}{dw_k}\left| \la \tZ, \tmG_{j,k} \ra \right|^2\\
&=\sum_{k=1}^{n}\frac{1}{dw_k}\sum_{j=1}^{d}\left| \sum_{i=1}^{d}\la \tZ_i, (\ve_i^\tran\mF\ve_j)\mG_{k} \ra \right|^2\\
&=\sum_{k=1}^{n}\frac{1}{dw_k}\sum_{j=1}^{d}\sum_{i=1}^d\left|\la \tZ_i, (\ve_i^\tran\mF\ve_j)\mG_{k} \ra \right|^2+\sum_{k=1}^{n}\frac{1}{dw_k}\sum_{j=1}^{d}\sum_{i\neq \ell}\la \tZ_i, (\ve_i^\tran\mF\ve_j)\mG_{k} \ra\overline{\la \tZ_\ell, (\ve_\ell^\tran\mF\ve_j)\mG_{k} \ra}\\
&=\sum_{k=1}^{n}\frac{1}{dw_k}\sum_{j=1}^{d}\sum_{i=1}^d\left|\la \tZ_i, (\ve_i^\tran\mF\ve_j)\mG_{k} \ra \right|^2+\sum_{k=1}^{n}\frac{1}{dw_k}\sum_{i\neq \ell}\left(\sum_{j=1}^{d}\overline{(\ve_i^\tran\mF\ve_j)}(\ve_\ell^\tran\mF\ve_j)\right)\la \tZ_i, \mG_{k} \ra\overline{\la \tZ_\ell, \mG_{k} \ra}\\
&=\sum_{k=1}^{n}\frac{1}{dw_k}\sum_{j=1}^{d}\sum_{i=1}^d\left|\la \tZ_i, (\ve_i^\tran\mF\ve_j)\mG_{k} \ra \right|^2\\
&=\sum_{k=1}^{n}\frac{1}{dw_k}\sum_{i=1}^d\left|\la \tZ_i, \mG_{k} \ra \right|^2,\end{align*}
where in the second to  last line we have used the fact that $\sum_{j=1}^{d}\overline{(\ve_i^\tran\mF\ve_j)}(\ve_\ell^\tran\mF\ve_j)=0$ since $\mF$ is a unitary matrix, and the last line follows from the fact $|\ve_i^\tran\mF\ve_j|^2=1/d$.
\end{proof}
\begin{lemma}\label{lem:key_obs01a}
For any pair of $(\alpha,\beta)\in[d]\times[n]$ we have
\begin{align*}
\vecnorm{ \calP_{\tT}\left( \sqrt{dw_\beta}\tmG_{\alpha,\beta} \right) }{\tcalG,\mathsf{F}}^2\lesssim
 \sqrt{\frac{\mu_0 r\log(dn)}{n}}
\end{align*}
\begin{proof}
Noting that the $i$th block of $\calP_{\tT}\left( \sqrt{dw_\beta}\tmG_{\alpha,\beta} \right)$ is $\calP_{T_i}\left( \sqrt{dw_\beta}\ve_i^\tran\mF\ve_\alpha\mG_{\beta} \right)$,
it follows from Lemma~\ref{lem:key_obs01a} that
\begin{align*}
\vecnorm{ \calP_{\tT}\left( \sqrt{dw_\beta}\tmG_{\alpha,\beta} \right) }{\tcalG,\mathsf{F}}^2=\frac{1}{d}\sum_{i=1}^{d}\sum_{k=1}^{n}\frac{1}{w_k}\left|\la \calP_{\tT_i}\left( \sqrt{dw_\beta}\ve_i^\tran\mF\ve_\alpha\mG_{\beta} \right), \mG_{k} \ra \right|^2.
\end{align*}
For any $k$, a direct calculation yields that 
\begin{align*}
&\quad \frac{1}{d}\sum_{i=1}^{d}\vecnorm{\ve_k^\tran\calP_{\tT_i}\left(\sqrt{dw_\beta}\ve_i^\tran\mF\ve_\alpha\mG_{\beta} \right) }{2}^2 \\
&= \frac{w_\beta}{d}\sum_{i=1}^{d}\vecnorm{\ve_k^\tran\left( \tU_i\tU_i^\tranH\mG_{\beta} + \mG_{\beta}\tV_i\tV_i^\tranH - \tU_i\tU_i^\tranH\mG_{\beta}\tV_i\tV_i^\tranH \right) }{2}^2 \\
&\leq 3\frac{w_\beta}{d}\sum_{i=1}^{d}\vecnorm{\ve_k^\tran\tU_i\tU_i^\tranH\mG_{\beta}}{2}^2
	 + 3\frac{w_\beta}{d}\sum_{i=1}^{d}\vecnorm{\ve_k^\tran\mG_{\beta}\tV_i\tV_i^\tranH }{2}^2
	 + 3\frac{w_\beta}{d}\sum_{i=1}^{d}\vecnorm{\ve_k^\tran\tU_i\tU_i^\tranH\mG_{\beta}\tV_i\tV_i^\tranH }{2}^2\\
&\leq \frac{9\mu_0r}{n},
\end{align*}
where the last line follows from the fact $\|\mG_\beta\|\leq 1/\sqrt{w_\beta}$, the inequality 
\begin{align*}
	\frac{w_\beta}{d}\sum_{i=1}^{d}\vecnorm{\ve_k^\tran\mG_{\beta}\tV_i\tV_i^\tranH }{2}^2 &= \frac{w_\beta}{d}\sum_{i=1}^{d}\vecnorm{\frac{1}{\sqrt{w_\beta}}\sum_{p+q=\beta}\ve_k^\tran\ve_p\ve_q^\tran\tV_i\tV_i^\tranH }{2}^2\\
	& = \frac{1}{d}\sum_{i=1}^{d}\vecnorm{\ve_{\beta-k}^\tran\tV_i\tV_i^\tranH }{2}^2  \quad\mbox{or}\quad 0\\
	&\leq \frac{\mu_0r}{n},
	\end{align*}
and the incoherence condition.  Then the application of Lemma \ref{lemma : general} concludes the proof.
\end{proof}
\end{lemma}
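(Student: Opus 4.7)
The plan is to combine the block-diagonal structure of $\tmG_{\alpha,\beta}$ and $\calP_{\tT}$ with the per-row incoherence estimates from Lemma~\ref{lemma incoherence 1}, and then aggregate over the Hankel anti-diagonal weights $1/w_k$ to produce the logarithmic factor.

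First, I will use the definition \eqref{eq: Gjk} to read off the block structure: the $i$th diagonal block of $\tmG_{\alpha,\beta}$ is $(\ve_i^\tran\mF\ve_\alpha)\mG_\beta$. Because $\calP_{\tT}$ acts block-diagonally via $\calP_{\tT_i}$ on each diagonal block, applying Lemma~\ref{lem:key_obs01} reduces the target quantity to
\begin{align*}
\vecnorm{\calP_{\tT}(\sqrt{dw_\beta}\tmG_{\alpha,\beta})}{\tcalG,\mathsf{F}}^2 = \frac{1}{d}\sum_{i=1}^d\sum_{k=1}^n \frac{1}{w_k}\left|\la \calP_{\tT_i}(\sqrt{dw_\beta}(\ve_i^\tran\mF\ve_\alpha)\mG_\beta),\,\mG_k\ra\right|^2.
\end{align*}

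Next, I will expand each inner product using the explicit tangent-space projection formula $\calP_{\tT_i}(W) = \tU_i\tU_i^\tranH W + W\tV_i\tV_i^\tranH - \tU_i\tU_i^\tranH W\tV_i\tV_i^\tranH$, invoking $|\ve_i^\tran\mF\ve_\alpha|^2 = 1/d$ to simplify the scalar factor and $\|\mG_\beta\|_{\mathrm{op}}\leq 1/\sqrt{w_\beta}$ to control operator norms. A triangle-inequality split produces three summands, each containing either $\tU_i\tU_i^\tranH\mG_\beta$, $\mG_\beta\tV_i\tV_i^\tranH$, or their composition. The $i$-averages of these are controlled via the average-case incoherence inequalities in \eqref{ineq 1}, combined with the Hankel identity $\mG_\beta\tV_i\tV_i^\tranH = \tfrac{1}{\sqrt{w_\beta}}\sum_{p+q=\beta+1}\ve_p\ve_q^\tran\tV_i\tV_i^\tranH$ to turn row-based incoherence on $\tV_i$ into control of shifted entries, and symmetrically for $\tU_i$. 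This should yield a uniform per-$k$ estimate $\frac{1}{d}\sum_{i=1}^d |\la\calP_{\tT_i}(\cdots),\mG_k\ra|^2 \lesssim \mu_0 r /n$.

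Finally, I will sum the per-$k$ estimates weighted by $1/w_k$. Since $n_1=n_2=(n+1)/2$, the anti-diagonal weight satisfies $w_k = \min(k, n-k+1)$, whence $\sum_k 1/w_k = O(\log n)$, and this produces the overall bound of order $\mu_0 r \log(dn)/n$, matching the right-hand side of the claim up to the placement of the square.

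The main obstacle will be the bookkeeping that connects inner products with $\mG_k$ to row-level incoherence on $\tU_i$ and $\tV_i$: the Hankel basis element $\mG_k$ mixes entries across an entire anti-diagonal, so a Cauchy--Schwarz step is needed for the weight $w_k$ to cancel correctly, and the index shifts intrinsic to the Hankel structure must be tracked carefully when applying \eqref{ineq 1}. The cross term $\tU_i\tU_i^\tranH\mG_\beta\tV_i\tV_i^\tranH$ will be the most delicate, requiring both the $\tU$- and $\tV$-side incoherence to be combined through a Cauchy--Schwarz argument rather than directly from \eqref{ineq 1}.
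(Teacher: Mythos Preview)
Your overall plan---reduce via Lemma~\ref{lem:key_obs01}, expand the tangent projection block by block, and invoke average-case incoherence---matches the paper's. The divergence is in the intermediate quantity you target. You claim a uniform-in-$k$ estimate
\[
\frac{1}{d}\sum_{i=1}^d\bigl|\bigl\langle \calP_{\tT_i}\bigl(\sqrt{dw_\beta}\,\ve_i^\tran\mF\ve_\alpha\,\mG_\beta\bigr),\,\mG_k\bigr\rangle\bigr|^2 \lesssim \frac{\mu_0 r}{n},
\]
after which $\sum_k 1/w_k=O(\log n)$ would finish. But this bound does not follow from \eqref{ineq 1} under only \emph{average}-case incoherence: for a typical summand such as $\sqrt{w_\beta}\,\tU_i\tU_i^\tranH\mG_\beta$, one has $|\langle\tU_i\tU_i^\tranH\mG_\beta,\mG_k\rangle|=|\langle\tU_i^\tranH\mG_\beta,\tU_i^\tranH\mG_k\rangle|\le\|\tU_i^\tranH\mG_\beta\|_F\,\|\tU_i^\tranH\mG_k\|_F$, and you are left with $\frac{1}{d}\sum_i A_iB_i$ where each factor is controlled only on average over $i$. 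No Cauchy--Schwarz step recovers $\mu_0 r/n$ here without either worst-case incoherence or an extra factor of $r$ or $w_k$; the latter would destroy the final $\sum_k 1/w_k$ gain.

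The paper avoids this by bounding a different intermediate: the averaged \emph{row} norms
\[
\max_{p}\ \frac{1}{d}\sum_{i=1}^d\bigl\|\ve_p^\tran\calP_{\tT_i}\bigl(\sqrt{dw_\beta}\,\ve_i^\tran\mF\ve_\alpha\,\mG_\beta\bigr)\bigr\|_2^2\le \frac{9\mu_0 r}{n}.
\]
This works cleanly because the Hankel identity you mention makes each row $\ve_p^\tran\calP_{\tT_i}(\sqrt{w_\beta}\mG_\beta)$ depend on a \emph{single} shifted row of $\tU_i$ or $\tV_i$, so the $i$-average lands directly on \eqref{def: incoherent} with no product of two $i$-dependent factors. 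The passage from this row-norm bound to the desired $\|\cdot\|_{\tcalG,\mathsf{F}}^2$ estimate is then Lemma~\ref{lemma : general}, whose dyadic grouping of the indices $k$ by $w_k\in[2^{a-1},2^a)$ is what properly organizes the anti-diagonal sums and produces the $\log$ factor. Replace your per-$k$ inner-product estimate with the per-$p$ row-norm estimate and invoke Lemma~\ref{lemma : general}; the rest of your outline then goes through.
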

\begin{proof}[Proof of Lemma~\ref{lem:key03}] 
Define 
\begin{align*}
	z^{j,k} &= \frac{1}{\sqrt{dw_k}} \left\la\left(\frac{1}{p}\calP_{\tT}\tcalG\calP_{\Omega}\tcalG^\ast - \calP_{\tT}\tcalG\tcalG^\ast \right)(\tZ), \tcalG(\ve_j\ve_k^\tran) \right\ra\\
	&= \frac{1}{\sqrt{dw_k}}\sum_{\alpha=1}^{d}\sum_{\beta=1}^{n}\left(\frac{1}{p}\delta_{\alpha,\beta} - 1\right) \left\la \tcalG^\ast(\tZ), \ve_\alpha\ve_{\beta}^\tran \right\ra\cdot \left\la \calP_{\tT}\tcalG(\ve_\alpha\ve_\beta^\tran), \tcalG(\ve_j\ve_k^\tran) \right\ra, 
	\end{align*}
and let $\vz=[z^{1,1},\cdots,z^{d,n}]^\tran\in\C^{dn}$. Then one can easily see that
\begin{align*}
	\left\| \left(\frac{1}{p}\calP_{\tT}\tcalG\calP_{\Omega}\tcalG^\ast - \calP_{\tT}\tcalG\tcalG^\ast \right)(\tZ) \right\|_{\tcalG,\mathsf{F}} &=\sqrt{\sum_{j,k} \frac{1}{(dw_k)}\left|\left\la \left(\frac{1}{p}\calP_{\tT}\tcalG\calP_{\Omega}\tcalG^\ast -  \calP_{\tT}\tcalG\tcalG^\ast\right)(\tZ), \tcalG(\ve_j\ve_k^\tran)\right\ra \right|^2 }=\vecnorm{\vz}{}.
	\end{align*}
Moreover, if we define $\vz_{\alpha,\beta}\in\C^{dn}$ as
\begin{align*}
\vz_{\alpha,\beta} &= \left(\frac{1}{p}\delta_{\alpha,\beta} - 1\right) \left\la \tcalG^\ast(\tZ), \ve_\alpha\ve_{\beta}^\tran \right\ra\cdot \begin{bmatrix}
	\frac{1}{\sqrt{dw_1}}\left\la \calP_{\tT}\tcalG(\ve_\alpha\ve_\beta^\tran), \tcalG(\ve_1\ve_1^\tran) \right\ra \\
	\vdots\\
	\frac{1}{\sqrt{dw_k}}\left\la \calP_{\tT}\tcalG(\ve_\alpha\ve_\beta^\tran), \tcalG(\ve_j\ve_k^\tran) \right\ra\\
	\vdots\\
	\frac{1}{\sqrt{dw_n}}\left\la \calP_{\tT}\tcalG(\ve_\alpha\ve_\beta^\tran), \tcalG(\ve_d\ve_n^\tran) \right\ra 
	\end{bmatrix}=:\left(\frac{1}{p}\delta_{\alpha,\beta} - 1\right)\vs_{\alpha,\beta},
\end{align*}
then it follows that
\begin{align*}
	\left\| \left(\frac{1}{p}\calP_{\tT}\tcalG\calP_{\Omega}\tcalG^\ast - \calP_{\tT}\tcalG\tcalG^\ast \right)(\tZ) \right\|_{\tcalG,\mathsf{F}} &= \vecnorm{\sum_{\alpha,\beta}\vz_{\alpha,\beta} }{}.
	\end{align*}

Firstly, $\opnorm{\vz_{\alpha,\beta}}$ can be bounded as follows
\begin{align*}
\opnorm{\vz_{\alpha,\beta}}& \leq \frac{1}{p}\left| \left\la \tcalG^\ast(\tZ), \ve_\alpha\ve_{\beta}^\tran \right\ra \right|  \cdot \sqrt{ \sum_{j,k} \left( \frac{1}{\sqrt{dw_k}} \left| \left\la \calP_{\tT}\tcalG(\ve_\alpha\ve_\beta^\tran), \tcalG(\ve_j\ve_k^\tran) \right\ra \right| \right)^2 }\\
	&= \frac{1}{p}{\left| \left\la \tcalG^\ast(\tZ), \ve_\alpha\ve_{\beta}^\tran \right\ra \right| }\left\|\calP_{\tT}\tcalG(\ve_\alpha\ve_\beta^\tran) \right\|_{\tcalG,\mathsf{F}}\\
	&\lesssim \frac{1}{p}\frac{\left| \left\la \tcalG^\ast(\tZ), \ve_\alpha\ve_{\beta}^\tran \right\ra \right| }{\sqrt{dw_{\beta}}} \cdot \sqrt{\frac{\mu_0 r\log(dn)}{n}}\\
	&\lesssim \frac{1}{p} \sqrt{\frac{\mu_0 r\log(dn)}{n}} \left\| \tZ\right\|_{\tcalG, \infty},
	\end{align*}
	where the third line follows from Lemma~\ref{lem:key_obs01a}.
	
Secondly, we have 
\begin{align*}
\opnorm{\E{\sum_{\alpha,\beta}\vz_{\alpha,\beta}\vz_{\alpha,\beta}^\tranH}} &\leq \frac{1}{p}\sum_{\alpha,\beta} \vecnorm{\vs_{\alpha,\beta}}{2}^2 = \frac{1}{p}\sum_{\alpha,\beta}\left| \left\la \tcalG^\ast(\tZ), \ve_\alpha\ve_{\beta}^\tran \right\ra \right|^2  \cdot \left\|\calP_{\tT}\tcalG(\ve_\alpha\ve_\beta^\tran) \right\|_{\tcalG,\mathsf{F}}^2\\
&= \frac{1}{p}\sum_{\alpha,\beta}{\left| \left\la \tcalG^\ast(\tZ), \ve_\alpha\ve_{\beta}^\tran \right\ra \right|^2} \left(\frac{1}{d}\sum_{i=1}^{d}\sum_{k=1}^{n}\frac{1}{w_k}\left|\la [\calP_{\tT}\tcalG(\ve_\alpha\ve_\beta^\tran)]_i, \mG_{k} \ra \right|^2\right)\\
	&\lesssim \frac{\mu_0 r\log(dn)}{pn}\sum_{\alpha,\beta}\frac{\left| \left\la \tcalG^\ast(\tZ), \ve_\alpha\ve_{\beta}^\tran \right\ra \right|^2}{dw_{\beta}} \\
	&= \frac{\mu_0 r\log(dn)}{pn}\cdot \left\|\tZ\right\|_{\tcalG,\mathsf{F}}^2.
	\end{align*}
	Similarly, there holds $\opnorm{\E{\sum_{\alpha,\beta}\vz_{\alpha,\beta}^\tranH\vz_{\alpha,\beta}}}\lesssim \frac{\mu_0 r\log(dn)}{pn}\cdot \left\|\tZ\right\|_{\tcalG,\mathsf{F}}^2$.
	
	Finally, applying the Bernstein inequality to $\vecnorm{\sum_{\alpha,\beta}\vz_{\alpha,\beta}}{}$ completes the proof.
\end{proof}
\subsection{Proof of Lemma~\ref{lem:key04}}
The following lemma will be used in the proof of Lemma~\ref{lem:key04}.
\begin{lemma}\label{lem:key_obs02}
For any two pairs of $(j,k), (\alpha,\beta)\in[d]\times[n]$, there holds
	\begin{align*}
	\sqrt{\frac{w_k}{w_\beta}}\left| \la \calP_{\tT}(\tmG_{j,k}), \tmG_{\alpha,\beta} \ra \right| &\leq \frac{3\mu_0r}{n}.
	\end{align*}
\end{lemma}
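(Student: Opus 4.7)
My strategy is to exploit the block-diagonal structure of $\tmG_{j,k}$, reduce the inner product to a Fourier-weighted sum of block-wise inner products, then expand the tangent-space projection into three explicit summands and bound each via Cauchy--Schwarz combined with the average-case incoherence estimates from Lemma~\ref{lemma incoherence 1}.

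To set up, I would note that the $i$-th diagonal block of $\tmG_{j,k}$ is $(\ve_i^\tran\mF\ve_j)\mG_k$ and that $\calP_{\tT}$ operates block-wise, so
\begin{equation*}
\la \calP_{\tT}(\tmG_{j,k}),\tmG_{\alpha,\beta}\ra
=\sum_{i=1}^{d}\overline{(\ve_i^\tran\mF\ve_j)}\,(\ve_i^\tran\mF\ve_\alpha)\,\la\calP_{\tT_i}(\mG_k),\mG_\beta\ra.
\end{equation*}
Because each Fourier coefficient has modulus $1/\sqrt{d}$, the triangle inequality reduces the task to bounding $\tfrac{1}{d}\sum_{i=1}^{d}|\la\calP_{\tT_i}(\mG_k),\mG_\beta\ra|$.

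For each $i$ I would expand
\begin{equation*}
\calP_{\tT_i}(\mG_k)=\tU_i\tU_i^\tranH\mG_k+\mG_k\tV_i\tV_i^\tranH-\tU_i\tU_i^\tranH\mG_k\tV_i\tV_i^\tranH ,
\end{equation*}
and bound each of the three resulting traces by Cauchy--Schwarz after cyclic permutation, producing upper bounds of the form $\fronorm{\tU_i^\tranH\mG_k}\fronorm{\tU_i^\tranH\mG_\beta}$ for the first term, $\fronorm{\mG_k\tV_i}\fronorm{\mG_\beta\tV_i}$ for the second, and, using $\fronorm{\tU_i^\tranH X\tV_i\tV_i^\tranH}\leq\fronorm{\tU_i^\tranH X}$, either of these two forms for the cross term. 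Averaging over $i$ with weight $1/d$ and a further application of Cauchy--Schwarz in the index $i$ reduces matters to controlling $\tfrac{1}{d}\sum_{i}\fronorm{\tU_i^\tranH\mG_k}^{2}$ and $\tfrac{1}{d}\sum_{i}\fronorm{\mG_k\tV_i}^{2}$, both of which are bounded by $\mu_0 r/n$ via Lemma~\ref{lemma incoherence 1}. Summing the three contributions yields the target bound $3\mu_0 r/n$.

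\textbf{Main obstacle.} The symmetric chain above already gives $|\la\calP_{\tT}(\tmG_{j,k}),\tmG_{\alpha,\beta}\ra|\leq 3\mu_0 r/n$, which matches the asymmetric factor $\sqrt{w_k/w_\beta}$ whenever $w_k\leq w_\beta$. Recovering the sharper estimate in the regime $w_k>w_\beta$ is the delicate step; I expect it requires revisiting one of the Cauchy--Schwarz bounds and trading a Frobenius factor for a trace-H\"older estimate that uses $\opnorm{\mG_k}\leq 1/\sqrt{w_k}$ together with the rank-$r$ nuclear-norm bound $\nucnorm{\tU_i\tU_i^\tranH\mG_\beta}\leq\sqrt{r}\fronorm{\tU_i^\tranH\mG_\beta}$. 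The bookkeeping to arrange the weights $w_k,w_\beta$ into the advertised $\sqrt{w_\beta/w_k}$ factor without degrading the $\mu_0 r/n$ scaling is the only non-routine part of the argument.
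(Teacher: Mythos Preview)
Your symmetric Cauchy--Schwarz argument is fine for the regime $w_k\le w_\beta$, and you correctly identify the regime $w_k>w_\beta$ as the obstruction. However, the H\"older fix you propose does not close the gap. Carrying it out,
\begin{align*}
\left|\la\tU_i\tU_i^\tranH\mG_k,\mG_\beta\ra\right|
=\left|\la\mG_k,\tU_i\tU_i^\tranH\mG_\beta\ra\right|
\le\opnorm{\mG_k}\,\nucnorm{\tU_i\tU_i^\tranH\mG_\beta}
\le\frac{\sqrt{r}}{\sqrt{w_k}}\,\fronorm{\tU_i^\tranH\mG_\beta},
\end{align*}
and after averaging in $i$ and multiplying by $\sqrt{w_k/w_\beta}$ you obtain at best $r\sqrt{\mu_0/(n w_\beta)}$. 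For small $w_\beta$ (e.g.\ $w_\beta=1$) this is $r\sqrt{\mu_0/n}$, which is strictly worse than $\mu_0 r/n$ whenever $\mu_0<n$, so the constant and the $\mu_0$-scaling are both lost.

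The missing idea is structural rather than analytic: one must expand $\mG_k=\tfrac{1}{\sqrt{w_k}}\sum_{p+q=k+1}\ve_p\ve_q^\tran$ and $\mG_\beta$ likewise inside the trace. Because $\la\ve_p\ve_q^\tran,\ve_a\ve_b^\tran\ra$ forces $q=b$, the $w_k$-term sum over the $k$-th anti-diagonal collapses to a single surviving term for each choice on the $\beta$-anti-diagonal. This collapse replaces the normalization $1/\sqrt{w_k w_\beta}$ by $1/w_\beta$ after the prefactor $\sqrt{w_k/w_\beta}$ is absorbed, leaving a sum of at most $w_\beta$ terms of the form $\la\tU_i^\tranH\ve_{p},\tU_i^\tranH\ve_{a}\ra$. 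Cauchy--Schwarz over that sum together with the average row-incoherence then yields exactly $\mu_0 r/n$. The third (cross) term is handled the same way but requires both a $\tU$- and a $\tV$-factor; your proposed simplification $\fronorm{\tU_i^\tranH X\tV_i\tV_i^\tranH}\le\fronorm{\tU_i^\tranH X}$ again loses the needed $w$-dependence, so the entrywise expansion is essential there as well.
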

\begin{proof}
By the definition of $\calP_{\tT}$ there holds
	\begin{align*}
	\sqrt{\frac{w_k}{w_\beta}}\left| \la \calP_{\tT}(\tmG_{j,k}), \tmG_{\alpha,\beta} \ra \right| & = \sqrt{\frac{w_k}{w_\beta}}\left| \la \tU\tU^\tranH\tmG_{j,k}+ \tmG_{j,k}\tV\tV^\tranH -\tU\tU^\tranH\tmG_{j,k}\tV\tV^\tranH , \tmG_{\alpha,\beta} \ra \right| \\
	&\leq \sqrt{\frac{w_k}{w_\beta}}\left| \la \tU\tU^\tranH\tmG_{j,k}, \tmG_{\alpha,\beta}  \ra\right| 
	+ \sqrt{\frac{w_k}{w_\beta}}\left| \la \tmG_{j,k}\tV\tV^\tranH ,\tmG_{\alpha,\beta} \ra \right|\\
	&
	+ \sqrt{\frac{w_k}{w_\beta}}\left| \la \tU\tU^\tranH\tmG_{j,k}\tV\tV^\tranH , \tmG_{\alpha,\beta} \ra \right|.
	\end{align*}
Then it suffices to bound each of the above three terms separately. 

	For the first term we have 
	\begin{align*}
	\sqrt{\frac{w_k}{w_\beta}}\left| \la \tU\tU^\tranH(\tmG_{j,k}), \tmG_{\alpha,\beta} \ra \right|
	&= \sqrt{\frac{w_k}{w_\beta}}\left| \sum_{i=1}^{d}(\ve_i^\tran\mF\ve_j)\cdot \overline{(\ve_i^\tran\mF\ve_{\alpha})}\cdot \la \tU_i{\tU_i}^\tranH\mG_k, \mG_{\beta} \ra \right| \\
	&\leq \frac{1}{d}\sum_{i=1}^{d} \sqrt{\frac{w_k}{w_\beta}}\left| \la \tU_i\tU_i^\tranH\mG_k, \mG_{\beta} \ra \right|\\
	&= \frac{1}{d}\sum_{i=1}^{d} \sqrt{\frac{w_k}{w_\beta}}\left| \frac{1}{\sqrt{w_k\cdot w_{\beta}}}\sum_{p+q=k+1}\sum_{a+b=\beta+1}\la \tU_i\tU_i^\tranH\ve_p\ve_q^\tran, \ve_a\ve_b^\tran \ra \right|\\
	&= \frac{1}{d}\sum_{i=1}^{d} \frac{1}{w_\beta}\left|\sum_{p+q=k+1}\sum_{a+b=\beta+1}\la \tU_i^\tranH\ve_p, \tU_i^\tranH\ve_a\ve_b^\tran\ve_q \ra \right|\\
	&= \frac{1}{d}\sum_{i=1}^{d} \frac{1}{w_\beta}\left| \sum_{a+b=\beta+1,b\leq k}\la \tU_i^\tranH\ve_{k-b}, \tU_i^\tranH\ve_a\ra \right|\\
	& \leq \frac{1}{dw_\beta} \sum_{i=1}^{d}\sum_{a+b=\beta,b\leq k} \vecnorm{\tU_i^\tranH\ve_{k-b} }{2} \cdot \vecnorm{ \tU_i^\tranH\ve_a}{2}\\
	&\leq \sqrt{\frac{1}{dw_\beta} \sum_{i=1}^{d}\sum_{a+b=\beta+1,b\leq k} \vecnorm{\tU_i^\tranH\ve_{a} }{2}^2} \cdot \sqrt{\frac{1}{dw_\beta} \sum_{i=1}^{d}\sum_{a+b=\beta+1,b\leq k} \vecnorm{\tU_i^\tranH\ve_{k-b} }{2}^2}\\
	&= \sqrt{\frac{1}{w_\beta}\sum_{a+b=\beta+1,b\leq k}  \frac{1}{d}\sum_{i=1}^{d}\vecnorm{\tU_i^\tranH\ve_{a} }{2}^2}\sqrt{\frac{1}{w_\beta}\sum_{a+b=\beta+1,b\leq k}  \frac{1}{d}\sum_{i=1}^{d}\vecnorm{\tU_i^\tranH\ve_{k-b} }{2}^2}\\
	&\leq \frac{\mu_0r}{n}.
	\end{align*}
	Additionally, the second term can be bounded in a similar way.
	
	For the third term we have 
	\begin{align*}
	\sqrt{\frac{w_k}{w_{\beta}}} \left| \la \tU\tU^\tranH\tmG_{j,k}, \tmG_{\alpha,\beta}\tV\tV^\tranH \ra \right|&= \sqrt{\frac{w_k}{w_{\beta}}} \left| \sum_{i=1}^{d}(\ve_i^\tran\mF\ve_j)\overline{(\ve_i^\tran\mF\ve_{\alpha})}\la \tU_i\tU_i^\tranH\mG_k, \mG_{\beta}\tV_i\tV_i^\tranH \ra \right|\\
	&\leq \frac{1}{d}\sqrt{\frac{w_k}{w_{\beta}}}\sum_{i=1}^{d} \left| \la \tU_i\tU_i^\tranH\mG_k, \mG_{\beta}\tV_i\tV_i^\tranH \ra \right|\\
	&= \frac{1}{d}\sqrt{\frac{w_k}{w_{\beta}}}\sum_{i=1}^{d} \left| \frac{1}{\sqrt{w_k\cdot w_\beta}}\sum_{p+q=k+1} \sum_{a+b=\beta+1}\la \tU_i\tU_i^\tranH\ve_p\ve_q^\tran, \ve_a\ve_b^\tran\tV_i\tV_i^\tranH \ra \right|\\
	&\leq \frac{1}{d}\sqrt{\frac{w_k}{w_{\beta}}}\sum_{i=1}^{d}\frac{1}{\sqrt{w_k\cdot w_\beta}}\sum_{p+q=k+1} \sum_{a+b=\beta+1} \left| \la \tU_i\tU_i^\tranH\ve_p\ve_q^\tran, \ve_a\ve_b^\tran\tV_i\tV_i^\tranH \ra \right|\\
	&\leq \frac{1}{d w_\beta} \sum_{i=1}^{d} \sum_{p+q=k+1} \sum_{a+b=\beta+1} \left|  \ve_a^\tran\tU_i\tU_i^\tranH\ve_p\right|\cdot \left| \ve_b^\tran\tV_i\tV_i^\tranH\ve_q  \right|\\
	&\leq \frac{1}{d w_\beta}\sqrt{\sum_{i=1}^{d} \sum_{p+q=k+1}\sum_{a+b=\beta+1}\left|  \ve_a^\tran\tU_i\tU_i^\tranH\ve_p\right|^2 }  \sqrt{\sum_{i=1}^{d} \sum_{p+q=k+1}\sum_{a+b=\beta+1}\left|  \ve_b^\tran\tV_i\tV_i^\tranH\ve_q\right|^2 }\\
	&\leq \frac{1}{d w_\beta} \sqrt{\sum_{i=1}^{d} \sum_{p=1}^{n_1}\sum_{a+b=\beta}\left|  \ve_a^\tran\tU_i\tU_i^\tranH\ve_p\right|^2 } \cdot \sqrt{\sum_{i=1}^{d} \sum_{q=1}^{n_2}\sum_{a+b=\beta}\left|  \ve_b^\tran\tV_i\tV_i^\tranH\ve_q\right|^2 }\\
	&= \sqrt{\frac{1}{d w_\beta} \sum_{i=1}^{d}  \sum_{a+b=\beta} \left\| \ve_a^\tran\tU_i\tU_i^\tranH \right\|_2^2}\cdot \sqrt{\frac{1}{d w_\beta} \sum_{i=1}^{d}  \sum_{a+b=\beta}\left\|\ve_b^\tran\tV_i\tV_i^\tranH\right\|_2^2 }\\
	&= \sqrt{\frac{1}{d } \sum_{i=1}^{d}  \left\| \mG_\beta^\tran\tU_i\tU_i^\tranH \right\|_F^2}\cdot \sqrt{\frac{1}{d} \sum_{i=1}^{d} \left\|\mG_{\beta}^\tran\tV_i\tV_i^\tranH\right\|_F^2 }\\
	&\leq \frac{\mu_0r}{n}
	\end{align*}
	
	Combining the above bounds together completes the proof the lemma.
\end{proof}

\begin{proof}[Proof of Lemma~\ref{lem:key04}]
Since 
\begin{align*}
	\left(\frac{1}{p}\calP_{\tT}\tcalG\calP_{\Omega}\tcalG^\ast - \calP_{\tT}\tcalG\tcalG^\ast\right)(\tZ) &= \sum_{j,k}\left(\frac{1}{p}\delta_{j,k} - 1\right) \left\la \tcalG^\ast(\tZ), \ve_j\ve_k^\tran \right\ra \calP_{\tT}\tcalG(\ve_j\ve_k^\tran),
	\end{align*}
	we have 
	\begin{align*}
	\left\| \left(\frac{1}{p}\calP_{\tT}\tcalG\calP_{\Omega}\tcalG^\ast - \calP_{\tT}\tcalG\tcalG^\ast\right)(\tZ)\right\|_{\tcalG, \infty} &= \max_{\alpha,\beta} \left| \frac{1}{\sqrt{dw_{\beta}}}\left\la  \sum_{j,k}\left(\frac{1}{p}\delta_{j,k} - 1\right) \left\la \tcalG^\ast(\tZ), \ve_j\ve_k^\tran \right\ra \calP_{\tT}\tcalG(\ve_j\ve_k^\tran), \tcalG(\ve_{\alpha}\ve_{\beta}^\tran )\right\ra \right|\\
	&=  \max_{\alpha,\beta} \left| \frac{1}{\sqrt{dw_{\beta}}} \sum_{j,k} \left(\frac{1}{p}\delta_{j,k} - 1\right) \left\la \tcalG^\ast(\tZ), \ve_j\ve_k^\tran \right\ra\cdot \left\la   \calP_{\tT}\tcalG(\ve_j\ve_k^\tran), \tcalG(\ve_{\alpha}\ve_{\beta}^\tran )\right\ra \right|\\
	&=: \max_{\alpha,\beta} \left| \sum_{j,k} Z_{j,k}^{\alpha,\beta}\right|.
	\end{align*}
	
	For any fixed pair of $(\alpha,\beta)$, $X_{j,k}^{\alpha,\beta}$ are mean-zero independent random variables with 
	\begin{align*}
	\left| Z_{j,k}^{\alpha,\beta} \right| &\leq \frac{1}{p}\cdot \left|\left\la \tcalG^\ast(\tZ), \ve_j\ve_k^\tran \right\ra \right|\cdot \frac{1}{\sqrt{dw_{\beta}}}  \left|\left\la   \calP_{\tT}\tcalG(\ve_j\ve_k^\tran), \tcalG(\ve_{\alpha}\ve_{\beta}^\tran )\right\ra \right|\\
	& = \frac{1}{p}\cdot \frac{\left|\left\la \tcalG^\ast(\tZ), \ve_j\ve_k^\tran \right\ra \right|}{\sqrt{dw_k}}\cdot \frac{\sqrt{dw_k}}{\sqrt{dw_{\beta}}}  \left|\left\la   \calP_{\tT}\tcalG(\ve_j\ve_k^\tran), \tcalG(\ve_{\alpha}\ve_{\beta}^\tran )\right\ra \right|\\
	&\lesssim \frac{1}{p}\cdot \frac{\left|\left\la \tcalG^\ast(\tZ), \ve_j\ve_k^\tran \right\ra \right|}{\sqrt{dw_k}}\cdot \frac{\mu_0r}{n} \\
	&\lesssim \frac{1}{p}\frac{\mu_0r}{n} \vecnorm{\tZ}{\tcalG,\infty},
	\end{align*}
	where the third line follows from Lemma~\ref{lem:key_obs02}. Moreover,	\begin{align*}
	\E {\sum_{j,k}\left| Z_{j,k}^{\alpha,\beta} \right|^2}&\lesssim \frac{1}{p}\sum_{j,k}\frac{\left|\left\la \tcalG^\ast(\tZ), \ve_j\ve_k^\tran \right\ra \right|^2}{dw_k}\cdot \left(\frac{\mu_0r}{n}\right)^2 \\
	&\lesssim \frac{1}{p}\left(\frac{\mu_0r}{n}\right)^2\cdot \vecnorm{\tZ}{\tcalG,\mathsf{F}}^2.
	\end{align*}
	Thus, by the Bernstein inequality \eqref{bernstein},
	\begin{align*}
	\left| \sum_{j,k} Z_{j,k}^{\alpha,\beta}\right| \lesssim\frac{\mu_0r}{n} \left( \sqrt{\frac{\log (dn)}{p}}\vecnorm{\tZ}{\tcalG,\mathsf{F}} + \frac{\log (dn)}{p}\vecnorm{\tZ}{\tcalG,\infty} \right) 
	\end{align*}
	holds with high probability. 
	
	Since there are only $dn$ pairs of $(\alpha,\beta)$, taking a union bound concludes the proof.
\end{proof}
\subsection{Proof of Lemma~\ref{lem:key05}}
\begin{proof}[Proof of Lemma~\ref{lem:key05}]
By Lemma~\ref{lem:key_obs01} we have 
\begin{align*}
\left\|\tU\tV^\tranH\right\|_{\tcalG,\mathsf{F}}=
\frac{1}{d}\sum_{i=1}^{d}\sum_{k=1}^{n}\frac{1}{w_k}\left|\la \tU_i\tV_i^\tranH, \mG_{k} \ra \right|^2.
\end{align*}
Then the bound for $\left\|\tU\tV^\tranH\right\|_{\tcalG,\mathsf{F}}$ follows immediately from Lemma~\ref{lemma : general} by noting that
\begin{align*}
\max_{k} \frac{1}{d}\sum_{i=1}^{d}\vecnorm{\ve_k^\tran\tU_i\tV_i^\tranH}{2}^2= \max_{k} \frac{1}{d}\sum_{i=1}^{d}\vecnorm{\ve_k^\tran\tU_i}{2}^2 \leq \frac{\mu_0r}{n},
\end{align*}
where the inequality follows from the incoherence assumption. 

The bound for $\left\| \tU\tV^\tranH\right\|_{\tcalG, \infty}$ follows from a direct calculation:
\begin{align*}
 \max_{j,k} \frac{1}{\sqrt{dw_k}} \left| \la \tU\tV^\tranH, \tmG_{j,k} \ra \right|&= \max_{j,k} \frac{1}{\sqrt{dw_k}} \left| \sum_{i=1}^{d}\overline{(\ve_i^\tran\mF\ve_j)}\la\tU_i\tV_i^\tranH,\mG_k \ra \right|\\
	&\leq \max_{j,k} \frac{1}{d} \sum_{i=1}^{d}\left| \frac{1}{\sqrt{w_k}}\la\tU_i\tV_i^\tranH,\mG_k \ra \right|\\
	&\leq \max_{j,k} \frac{1}{d} \sum_{i=1}^{d} \fronorm{\tU_i^\tranH\mG_k}\fronorm{\mG_k\tV_i}\\
	&\leq \max_{j,k} \sqrt{\frac{1}{d} \sum_{i=1}^{d} \fronorm{\tU_i^\tranH\mG_k}^2}\cdot \sqrt{\frac{1}{d} \sum_{i=1}^{d}\fronorm{\mG_k\tV_i}^2}\\
	&\leq \frac{\mu_0r}{n},
	\end{align*}
which completes the proof.
\end{proof}

\appendix
\section*{Appendix}
\section{Proof of Theorem~\ref{main results: special}}
\label{sec:special}
\begin{proof}[Proof of Theorem~\ref{main results: special}]
First note that when $\mX^\natural$ has the special form of \eqref{eq:specialX}, the $2$-th to $n$-th
columns of the solution to \eqref{opt: our nuclear norm} must be zeros.
With a slight abuse of notation, let $\Omega\subset[d]$ be the subset of indices corresponding to the observed entries of $\mX^\natural(:,1)$.  Noting that $\mF\bone=\sqrt{d}\ve_1$, the recovery program \eqref{opt: our nuclear norm} can be reduced to basic pursuit \cite{chen2001atomic}:	
	\begin{align}
	\label{opti: LP}
	\minimize_{\hat{\vx}\in\C^d}~\vecnorm{\hat{\vx}}{1}~\text{subject to}~\mD_{\Omega}\mF^{-1}\hat{\vx} = \mD_{\Omega}\mF^{-1}(\sqrt{d}\ve_1),
	\end{align}
	where $\mD_{\Omega}\in\R^{d\times d}$ is a diagonal matrix of the form
	\begin{align*}
	[\mD_{\Omega}]_{j,k} = \begin{cases}
	1,&j\in\Omega \text{ and }j=k,\\
	0,&j\neq k.
	\end{cases}
	\end{align*}
	
	In order to show that $\sqrt{d}\ve_1$ is the unique optimal solution to \eqref{opti: LP}, by \cite[Lemma 2.1]{candes2006robust} or \cite[Condition 1]{zhang2016one}, we need to 
	check that the submatrix $(\mD_{\Omega}\mF^{-1})({:,1})$ has full column rank and construct a dual certificate $\lambda\in\R^d$ such that 
	\begin{align}
	\label{eq: dual certificate}
	[\mF\mD_{\Omega}\lambda]_1 = 1,\quad \big| [\mF\mD_{\Omega}\lambda]_j \big| < 1, \text{ for }j=2,\cdots,d.
	\end{align}
	Suppose $\Omega$ has at least two entries, denoted $k_1$ and $k_2$. The full column rank property of the submatrix holds since  $(\mD_{\Omega}\mF^{-1})({:,1}) = \frac{1}{\sqrt{d}}\mD_{\Omega}\bone$ is a vector. 
	Moreover, assuming 
	\begin{align}
	\label{eq: condition for special certificate}
	(j-1)|k_1-k_2| \neq qd,~ 2\leq j \leq d
	\end{align}
	for any positive integer  $q$, we claim that $\lambda = \frac{\sqrt{d}}{|\Omega|}\bone_{\Omega}$ satisfies the two conditions listed in \eqref{eq: dual certificate}.  The first condition holds since 
	\begin{align*}
	[\mF\mD_{\Omega}\lambda]_1 = \frac{\sqrt{d}}{|\Omega|}[\mF\mD_{\Omega}\bone_{\Omega}]_1 = \frac{\sqrt{d}}{|\Omega|}[\mF\bone_{\Omega}]_1 = 1.
	\end{align*}
	Letting $\Omega^c = \Omega\backslash\{k_1,k_2\}$, for the second condition, a simple algebra yields that 
 \begin{align*}
 \left|\left(\mF\mD_{\Omega}\lambda\right)[j]\right| =& \frac{\sqrt{d}}{|\Omega|}\left|\mF_{j,k_1} +\mF_{j,k_2} + \mF_{j,\Omega^c} \right|\\
 \leq & \frac{\sqrt{d}}{|\Omega|}\left|\mF_{j,k_1} +\mF_{j,k_2}\right| + \frac{\sqrt{d}}{|\Omega|}\sum_{k\in\Omega^c} \left|\mF_{j,k} \right|\\
 \leq& \frac{\sqrt{d}}{|\Omega|}\left|\mF_{j,k_1} +\mF_{j,k_2}\right| + \frac{|\Omega|-2}{|\Omega|}\\
 =&\frac{1}{|\Omega|}\left| \exp\left( -\frac{2\pi i}{d}(j-1)(k_1-1) \right) + \exp\left( -\frac{2\pi i}{d}(j-1)(k_2-1) \right) \right|+\frac{|\Omega|-2}{|\Omega|}\\
 =&\frac{1}{|\Omega|}\left|1+\exp\left(\frac{2\pi i}{d}(j-1)(k_1-k_2)\right)\right|+\frac{|\Omega|-2}{|\Omega|},
 \end{align*}
	which is strictly less than $1$ unless there exists a positive integer $q$ such that $(j-1)|k_1-k_2|=qd$ for all $2\leq j\leq d$. 
	
	It remains to check when \eqref{eq: condition for special certificate} holds. Noting that when $k_1$ and $k_2$ have different parities, $|k_1-k_2|$ is an odd number, so $|k_1-k_2|$ and $d=2^L$ do not have common factors. Therefore for  any positive integer $q>0$, 
		\begin{align*}
	\frac{q\cdot d}{|k_1-k_2|}=\frac{q\cdot 2^L}{|k_1-k_2|} 
	\end{align*}
	is either a fraction or an integer multiple of $d=2^L$. In both cases  \eqref{eq: condition for special certificate} holds since $j-1\leq d-1$.
 Finally, it is not difficult to show that, under the Bernoulli sampling model, $\Omega$ includes at least two indices with different parities with probability at least $\left(1-(1-p)^d - dp(1-p)^{(d-1)}\right)/2$. Clearly, when $d$ is sufficiently large, this value is approximately equal to $0.5$.
\end{proof}
\section{Proof of Lemma \ref{lemma incoherence 1}}\label{proof lemma incoherence 1}
\subsection{Proof of \eqref{ineq 1} in Lemma \ref{lemma incoherence 1}}
For any fixed $k$, we have
\begin{align*}
\frac{1}{d}\sum_{a=1}^{d}\fronorm{\mG_k^\tranH\tU_a}^2&= \frac{1}{d}\sum_{a=1}^{d}\fronorm{\frac{1}{\sqrt{w_k}}\sum_{i+j=k}\ve_j\ve_i^\tran\tU_a}^2= \frac{1}{d}\sum_{a=1}^{d}\frac{1}{w_k}\sum_{i+j=k}\vecnorm{\ve_i^\tran\tU_a}{2}^2\\
&= \frac{1}{w_k}\sum_{i+j=k}\frac{1}{d}\sum_{a=1}^{d}\vecnorm{\ve_i^\tran\tU_a}{2}^2\\
&\leq \frac{1}{w_k}\sum_{i+j=k}1\cdot \left(\max_i \frac{1}{d}\sum_{a=1}^{d}\vecnorm{\ve_i^\tran\tU_a}{2}^2\right)\leq \frac{\mu_0r}{n},
\end{align*}
where the last inequality is due to the average case incoherence condition. This completes the proof of the first part of  \eqref{ineq 1} and the second part of \eqref{ineq 1} can be proved similarly.
\subsection{Proof of \eqref{ineq 2} in Lemma \ref{lemma incoherence 1}}
Recalling the definition of $\tT$ and $\tcalG$, for each $(j,k)$, we have
\begin{align*}
\fronorm{\calP_{\tT}\tcalG(\ve_j\ve_k^\tran)}^2 &=\sum_{a=1}^{d}\fronorm{ \calP_{\tT_a}\calG(\ve_a^\tran\mF\ve_j\ve_k^\tran) }^2\\
&= \frac{1}{d}\sum_{a=1}^{d}\fronorm{ \calP_{\tU_a}(\mG_k) +\calP_{\tV_a}(\mG_k) - \calP_{\tU_a}(\mG_k)\calP_{\tV_a} }^2\\
&= \frac{1}{d}\sum_{a=1}^{d} \fronorm{ \tU_a\tU_a^\tran\mG_k +\mG_k\tV\tV^\tran - \tU_a\tU_a^\tran\mG_k\tV\tV^\tran }^2\\
&= \frac{1}{d}\sum_{a=1}^{d} \left( \fronorm{ \tU_a\tU_a^\tran\mG_k}^2 + \fronorm{\mG_k\tV\tV^\tran }^2 \right)\\
&\leq \frac{1}{d}\sum_{a=1}^{d} \left( \fronorm{ \tU_a^\tran\mG_k}^2 + \fronorm{\mG_k\tV}^2 \right)\\
&\leq \frac{2\mu_0r}{n}
\end{align*}
where the last inequality is due to \eqref{ineq 1}.
\section{Auxiliary technical lemmas}
\label{sec Auxiliary}
Here we provide two additional technical lemmas which have been used in the proof of the main result. The proofs of  these two lemmas are  straightforward extensions of those for the $d=1$ case \cite{chen2014robust}. We include the proofs to keep the
presentation self-contained.
\begin{lemma}
	\label{lem:app01}
	Under the condition  $\opnorm{\frac{1}{p}\calP_{\tT}\tcalG\calP_{\Omega}\tcalGT\calP_{\tT} -\calP_{\tT}\tcalG\tcalGT\calP_{\tT} } \leq\frac{1}{2}$, 
	\begin{align*}
	\fronorm{\calP_{\tT}(\tW)} \leq \frac{2\sqrt{2}}{p}\fronorm{\calP_{\tT^\perp}(\tW)}
	\end{align*}
	holds for any $d$-block diagonal matrix $\tW$ which obeys
	\begin{align*}
	\tcalG\calP_{\Omega}\tcalGT(\tW) = \bzero\in\R^{dn_1\times dn_2}, \quad\text{ and }(\calI - \tcalG\tcalGT )(\tW) =  \bzero\in\R^{dn_1\times dn_2}.\numberthis\label{eq:tw_p}
	\end{align*}
\end{lemma}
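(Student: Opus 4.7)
The plan is to first simplify the two hypotheses on $\tW$. Since $\tcalG^\ast\tcalG = \calI$ (a fact already used in the paper), applying $\tcalG^\ast$ to $\tcalG\calP_\Omega\tcalG^\ast(\tW)=\bzero$ yields the cleaner condition $\calP_\Omega\tcalG^\ast(\tW) = \bzero$, while $(\calI-\tcalG\tcalG^\ast)(\tW)=\bzero$ just says $\tW = \tcalG\tcalG^\ast(\tW)$, i.e.\ $\tW$ lies in the range of $\tcalG$. These two equivalent forms of the hypotheses will be used repeatedly.

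The next step is to expand $\fronorm{\calP_{\tT}(\tW)}^2$ using the range property and then split according to the $\tT$/$\tT^\perp$ decomposition:
\begin{align*}
\fronorm{\calP_{\tT}(\tW)}^2 = \la \calP_{\tT}(\tW), \tcalG\tcalG^\ast(\tW)\ra = \fronorm{\tcalG^\ast\calP_{\tT}(\tW)}^2 + \la \tcalG^\ast\calP_{\tT}(\tW), \tcalG^\ast\calP_{\tT^\perp}(\tW)\ra.
\end{align*}
The central step is controlling $\fronorm{\tcalG^\ast\calP_{\tT}(\tW)}^2 = \la \calP_{\tT}\tcalG\tcalG^\ast\calP_{\tT}(\tW), \tW \ra$ using the operator-norm hypothesis. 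Plugging in $\tY = \calP_{\tT}(\tW)$ into the self-adjoint operator $\frac{1}{p}\calP_{\tT}\tcalG\calP_\Omega\tcalG^\ast\calP_{\tT} - \calP_{\tT}\tcalG\tcalG^\ast\calP_{\tT}$ whose norm is at most $1/2$ yields
\begin{align*}
\fronorm{\tcalG^\ast\calP_{\tT}(\tW)}^2 \leq \tfrac{1}{p}\fronorm{\calP_\Omega\tcalG^\ast\calP_{\tT}(\tW)}^2 + \tfrac{1}{2}\fronorm{\calP_{\tT}(\tW)}^2.
\end{align*}
The key substitution is then $\calP_\Omega\tcalG^\ast(\tW)=\bzero \Longrightarrow \calP_\Omega\tcalG^\ast\calP_{\tT}(\tW) = -\calP_\Omega\tcalG^\ast\calP_{\tT^\perp}(\tW)$, so the first term on the right is bounded by $\tfrac{1}{p}\fronorm{\calP_{\tT^\perp}(\tW)}^2$ using that $\calP_\Omega$ is a projection and $\fronorm{\tcalG^\ast(\cdot)}\leq \fronorm{\cdot}$ (the latter because $\tcalG\tcalG^\ast$ is an orthogonal projection).

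Combining these estimates with Cauchy--Schwarz on the cross term, and again using $\fronorm{\tcalG^\ast\calP_{\tT}(\tW)}\leq \fronorm{\calP_{\tT}(\tW)}$ and $\fronorm{\tcalG^\ast\calP_{\tT^\perp}(\tW)}\leq \fronorm{\calP_{\tT^\perp}(\tW)}$, produces the quadratic inequality
\begin{align*}
\tfrac{1}{2}\fronorm{\calP_{\tT}(\tW)}^2 - \fronorm{\calP_{\tT}(\tW)}\fronorm{\calP_{\tT^\perp}(\tW)} \leq \tfrac{1}{p}\fronorm{\calP_{\tT^\perp}(\tW)}^2.
\end{align*}
Solving this quadratic in $\fronorm{\calP_{\tT}(\tW)}$ gives $\fronorm{\calP_{\tT}(\tW)} \leq \bigl(1+\sqrt{1+2/p}\bigr)\fronorm{\calP_{\tT^\perp}(\tW)}$, and a direct numerical check for $p\in(0,1]$ confirms $1+\sqrt{1+2/p}\leq 2\sqrt{2}/p$, which is the claimed bound. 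The main obstacle is step two---namely, seeing that one should replace $\calP_{\tT}$ inside $\calP_\Omega\tcalG^\ast$ by $-\calP_{\tT^\perp}$ via the nullity hypothesis, as this is what converts an a priori uncontrolled quantity into one that is quantitatively dominated by $\fronorm{\calP_{\tT^\perp}(\tW)}$; everything else is routine algebra.
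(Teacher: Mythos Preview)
Your argument is correct but follows a genuinely different route from the paper's. The paper starts from the single vanishing identity
\[
\Bigl(\tfrac{1}{p}\tcalG\calP_{\Omega}\tcalG^\ast + (\calI - \tcalG\tcalG^\ast)\Bigr)(\tW)=\bzero,
\]
splits $\tW=\calP_{\tT}(\tW)+\calP_{\tT^\perp}(\tW)$, and bounds the two pieces separately: the $\tT$-part from below via orthogonality of the two summands together with the operator-norm hypothesis (yielding $\geq \tfrac{1}{\sqrt{2}}\fronorm{\calP_{\tT}(\tW)}$), and the $\tT^\perp$-part from above by the triangle inequality (yielding $\leq \tfrac{2}{p}\fronorm{\calP_{\tT^\perp}(\tW)}$). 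Combining these gives the constant $2\sqrt{2}/p$ directly, with no cross term and no quadratic to solve. Your approach instead expands $\fronorm{\calP_{\tT}(\tW)}^2$, isolates $\fronorm{\tcalG^\ast\calP_{\tT}(\tW)}^2$, controls it through the quadratic-form version of the operator-norm hypothesis, and then uses the key substitution $\calP_\Omega\tcalG^\ast\calP_{\tT}(\tW)=-\calP_\Omega\tcalG^\ast\calP_{\tT^\perp}(\tW)$ to transfer the bound to the $\tT^\perp$-side. The resulting quadratic inequality actually gives the sharper intermediate constant $1+\sqrt{1+2/p}$ before you relax it to $2\sqrt{2}/p$, though this refinement is not needed downstream. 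Both proofs hinge on the same two facts (the operator-norm hypothesis and the nullity $\calP_\Omega\tcalG^\ast(\tW)=\bzero$); the paper's version is a bit shorter because it avoids the cross term and the quadratic, while yours is slightly more quantitative.
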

\begin{proof}
Since $\tW$ satisfies \eqref{eq:tw_p}, we have 
\begin{align*}
0 &= \fronorm{\frac{1}{p}\tcalG\calP_{\Omega}\tcalGT(\tW) + (\calI - \tcalG\tcalGT )(\tW) }\\
&=\fronorm{\left(\frac{1}{p}\tcalG\calP_{\Omega}\tcalGT+(\calI - \tcalG\tcalGT ) \right)\left(\calP_{\tT}(\tW) + \calP_{\tT^\perp}(\tW)\right)}\\
&\geq \fronorm{\left(\frac{1}{p}\tcalG\calP_{\Omega}\tcalGT+(\calI - \tcalG\tcalGT ) \right) \calP_{\tT}(\tW) }
 - \fronorm{\left(\frac{1}{p}\tcalG\calP_{\Omega}\tcalGT+(\calI - \tcalG\tcalGT ) \right)  \calP_{\tT^\perp}(\tW) }.
\end{align*}

On the one hand,
\begin{align*}
	\fronorm{\left(\frac{1}{p}\tcalG\calP_{\Omega}\tcalGT+(\calI - \tcalG\tcalGT ) \right) \calP_{\tT}(\tW) }^2=~& \fronorm{\frac{1}{p}\tcalG\calP_{\Omega}\tcalGT\calP_{\tT}(\tW)}^2 + \fronorm{(\calI - \tcalG\tcalGT )\calP_{\tT}(\tW) }^2 \\
	~&\quad + \frac{2}{p}\left\la\tcalG\calP_{\Omega}\tcalGT\calP_{\tT}(\tW), (\calI - \tcalG\tcalGT )\calP_{\tT}(\tW) \right\ra\\
	=~&\fronorm{\frac{1}{p}\tcalG\calP_{\Omega}\tcalGT\calP_{\tT}(\tW)}^2 + \fronorm{(\calI - \tcalG\tcalGT )\calP_{\tT}(\tW) }^2\\
	=~& \frac{1}{p^2}\left\la\tcalG\calP_{\Omega}\tcalGT\calP_{\tT}(\tW), \tcalG\calP_{\Omega}\tcalGT\calP_{\tT}(\tW) \right\ra  \\
	&\quad +  \left\la(\calI - \tcalG\tcalGT )\calP_{\tT}(\tW), (\calI - \tcalG\tcalGT )\calP_{\tT}(\tW) \right\ra \\
	\geq~& \frac{1}{p}\left\la\calP_{\tT}\tcalG\calP_{\Omega}\tcalGT\calP_{\tT}(\tW), \calP_{\tT}(\tW) \right\ra  +  \left\la\calP_{\tT}(\calI - \tcalG\tcalGT )\calP_{\tT}(\tW), \calP_{\tT}(\tW) \right\ra\\
	=~& \fronorm{ \calP_{\tT}(\tW) }^2 + \left\la\calP_{\tT}(\frac{1}{p}\tcalG\calP_{\Omega}\tcalGT - \tcalG\tcalGT)\calP_{\tT}(\tW), \calP_{\tT}(\tW) \right\ra \\
	\geq~& \fronorm{ \calP_{\tT}(\tW) }^2 - \opnorm{\calP_{\tT}(\frac{1}{p}\tcalG\calP_{\Omega}\tcalGT - \tcalG\tcalGT)\calP_{\tT} }\cdot \fronorm{ \calP_{\tT}(\tW) }^2 \\
	\geq~&  \frac{1}{2}\fronorm{ \calP_{\tT}(\tW) }^2,
	\end{align*}
	where the third line follows from the property $(\tcalG\calP_{\Omega}\tcalGT)(\calI - \tcalG\tcalGT )=0$, and the last line follows from the assumption.

On the other hand, 
\begin{align*}
	 \fronorm{\frac{1}{p}\tcalG\calP_{\Omega}\tcalGT\calP_{\tT^\perp}(\tW) + (\calI - \tcalG\tcalGT )\calP_{\tT^\perp}(\tW) }& \leq \fronorm{\frac{1}{p}\tcalG\calP_{\Omega}\tcalGT\calP_{\tT^\perp}(\tW) } + \fronorm{(\calI - \tcalG\tcalGT )\calP_{\tT^\perp}(\tW) }\\
	&\leq \frac{1}{p}\fronorm{\calP_{\tT^\perp}(\tW) } + \fronorm{\calP_{\tT^\perp}(\tW) }\\
	&\leq \frac{2}{p}\fronorm{\calP_{\tT^\perp}(\tW) },
	\end{align*}
	where the third line follows from the fact that $\tcalG\calP_{\Omega}\tcalGT$ is a projection operator. 
	
	Combining two parts together, we complete the proof.
\end{proof}
\begin{lemma}
	\label{lemma : general}
Suppose a $d$-block diagonal matrix $\tZ\in\C^{dn_1\times dn_1}$ satisfies 
\begin{align}
\label{eq: condition 1}
\max_{p} \frac{1}{d}\sum_{i=1}^{d}\vecnorm{\ve_p^\tran\tZ_i}{2}^2 \leq B,
\end{align}
where $\tZ_i\in\C^{n_1\times n_1}$ is the $i$th block of $\tZ$ and $n_1=(n+1)/2$.
Then 
\begin{align*}
\frac{1}{d}\sum_{i=1}^{d}\sum_{k=1}^{n}\frac{1}{w_k}\left|\la \tZ_i, \mG_{k} \ra \right|^2\lesssim B\log(n_1). 
\end{align*}
\end{lemma}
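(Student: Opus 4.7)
The plan is to unfold $\la \tZ_i,\mG_k\ra$ as a normalized sum along the $k$-th anti-diagonal of $\tZ_i$, control it via Cauchy--Schwarz by the sum of squared entries on that anti-diagonal, and then rearrange the resulting double sum row by row so that the hypothesis on averaged row norms can be applied.

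First, since $\mG_k=\calH(\ve_k)/\sqrt{w_k}$ and $[\calH^{\ast}(\tZ_i)]_k=\sum_{j+l=k+1}(\tZ_i)_{j,l}$, one has
\[
\la \tZ_i,\mG_k\ra=\frac{1}{\sqrt{w_k}}\sum_{j+l=k+1}(\tZ_i)_{j,l},\qquad\text{so}\qquad \frac{|\la \tZ_i,\mG_k\ra|^2}{w_k}=\frac{1}{w_k^2}\Bigl|\sum_{j+l=k+1}(\tZ_i)_{j,l}\Bigr|^2.
\]
Applying Cauchy--Schwarz to the anti-diagonal sum (which has exactly $w_k$ terms) yields $\bigl|\sum_{j+l=k+1}(\tZ_i)_{j,l}\bigr|^2\le w_k\sum_{j+l=k+1}|(\tZ_i)_{j,l}|^2$, and interchanging the order of summation gives
\[
\sum_{k=1}^{n}\frac{|\la \tZ_i,\mG_k\ra|^2}{w_k}\;\le\;\sum_{k=1}^{n}\frac{1}{w_k}\sum_{j+l=k+1}|(\tZ_i)_{j,l}|^2\;=\;\sum_{j=1}^{n_1}\sum_{l=1}^{n_1}\frac{|(\tZ_i)_{j,l}|^2}{w_{j+l-1}}.
\]

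Second, for each fixed row index $j$, as $l$ runs through $[1,n_1]$ the index $k=j+l-1$ sweeps the interval $[j,\,j+n_1-1]$, and the weight $w_k=\min(k,\,2n_1-k)$ attains its minimum on that interval at one of the endpoints, equalling $\min(j,\,n_1-j+1)$. Replacing $w_{j+l-1}$ by this row-wise minimum and collapsing the $l$-sum into a full row norm gives
\[
\sum_{l=1}^{n_1}\frac{|(\tZ_i)_{j,l}|^2}{w_{j+l-1}}\;\le\;\frac{\vecnorm{\ve_j^\tran \tZ_i}{2}^2}{\min(j,\,n_1-j+1)}.
\]
Averaging over $i$, invoking the hypothesis $\frac{1}{d}\sum_i\vecnorm{\ve_j^\tran \tZ_i}{2}^2\le B$, and using the elementary harmonic-sum estimate $\sum_{j=1}^{n_1}\frac{1}{\min(j,\,n_1-j+1)}\lesssim\log(n_1)$ then delivers the claimed bound $\lesssim B\log(n_1)$.

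The main obstacle is the order in which the estimates are combined. If one were to invoke the hypothesis entry by entry via the coarser bound $\frac{1}{d}\sum_i|(\tZ_i)_{j,l}|^2\le B$, the residual weight sum $\sum_{j,l}1/w_{j+l-1}=\sum_k w_k/w_k=n$ would only give the useless estimate $Bn$. The logarithmic gain hinges on first aggregating the per-entry weights in a row at the row-wise minimum weight, so that the remaining sum over rows is a truncated harmonic series of size $\log(n_1)$ rather than a sum spread over all $n$ anti-diagonals.
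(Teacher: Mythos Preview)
Your proof is correct and arrives at the same $B\log(n_1)$ bound, but by a cleaner route than the paper's. Both arguments begin with the same Cauchy--Schwarz step
\[
\sum_{k=1}^{n}\frac{1}{w_k}\bigl|\la \tZ_i,\mG_k\ra\bigr|^2 \;\le\; \sum_{k=1}^{n}\frac{1}{w_k}\sum_{p+q=k+1}\bigl|(\tZ_i)_{p,q}\bigr|^2,
\]
and then diverge. The paper splits the range of $k$ into the two halves $[1,n_1]$ and $[n_1+1,n]$, then applies a dyadic decomposition $\mathcal{W}_a=\{k:w_k\in[2^{a-1},2^a)\}$ on each half; within each dyadic block it replaces $1/w_k$ by $1/2^{a-1}$, enlarges the inner sum to all $p\le 2^a-1$, collapses into row norms, and finally sums the $\log(n_1)$ blocks. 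Your argument instead rewrites the double sum as $\sum_{j,l}|(\tZ_i)_{j,l}|^2/w_{j+l-1}$, observes that for fixed $j$ the weight $w_{j+l-1}$ is minimized at an endpoint with value $\min(j,n_1-j+1)$, and reduces the final $j$-sum to a harmonic series. Your approach is shorter and more transparent; the paper's dyadic scheme is the one inherited from \cite{chen2014robust} and is perhaps more portable to settings where the weights do not have such a simple unimodal profile.
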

\begin{proof}
Note that 
\begin{align*}
\frac{1}{d}\sum_{i=1}^{d}\sum_{k=1}^{n}\frac{1}{w_k}\left|\la \tZ_i, \mG_{k} \ra \right|^2& = 
\frac{1}{d}\sum_{i=1}^{d}\sum_{k=1}^{n}\frac{1}{w_k}\left|\frac{1}{\sqrt{w_k}}\sum_{p+q=k+1}\la \tZ_i, e_pe_q^\tran \ra \right|^2\\
&\leq \frac{1}{d}\sum_{i=1}^{d}\sum_{k=1}^{n}\frac{1}{w_k}\sum_{p+q=k+1}\left|\la \tZ_i, e_pe_q^\tran \ra \right|^2\\
&=\frac{1}{d}\sum_{i=1}^{d}\sum_{k=1}^{n_1}\frac{1}{w_k}\sum_{p+q=k+1}\left|\la \tZ_i, e_pe_q^\tran \ra \right|^2+\frac{1}{d}\sum_{i=1}^{d}\sum_{k=n_1+1}^{n}\frac{1}{w_k}\sum_{p+q=k+1}\left|\la \tZ_i, e_pe_q^\tran \ra \right|^2.
\end{align*}
We follow the splitting scheme in \cite[Lemma 12]{chen2014robust} to bound  each term. 

Without loss of generality, assume $n_1$ is a power of $2$. We divide $[n_1]$ into $\log(n_1)$ groups:
$
\mathcal{W}_a = \{k: w_k\in[2^{a-1},2^a)\},~a=1,\cdots,\log(n_1).
$ For the first term we have 
\begin{align*}
\frac{1}{d}\sum_{i=1}^{d}\sum_{k=1}^{n_1}\frac{1}{w_k}\sum_{p+q=k+1}\left|\la \tZ_i, e_pe_q^\tran \ra \right|^2 &= \frac{1}{d}\sum_{i=1}^{d}\sum_{a=1}^{\log n_1}\sum_{k\in\calW_a}\frac{1}{w_k}\sum_{p+q=k+1} \left|\la \tZ_i, \ve_p\ve_q^\tran \ra\right|^2\\
&\leq \frac{1}{d}\sum_{i=1}^{d}\sum_{a=1}^{\log n_1}\sum_{k\in\calW_a}\frac{1}{2^{a-1}}\sum_{p+q=k+1} \left|\la \tZ_i, \ve_p\ve_q^\tran \ra\right|^2\\
&\leq \frac{1}{d}\sum_{i=1}^{d}\sum_{a=1}^{\log n_1}\frac{1}{2^{a-1}}\sum_{k=1}^{2^a-1}\sum_{p+q=k+1} \left|\la \tZ_i, \ve_p\ve_q^\tran \ra\right|^2\\
&\leq \frac{1}{d}\sum_{i=1}^{d}\sum_{a=1}^{\log n_1}\frac{1}{2^{a-1}}\sum_{k=1}^{2^a-1}\sum_{p+q=k+1} \left| \ve_p^\tran\tZ_i\ve_q\right|^2\\
&=\frac{1}{d}\sum_{i=1}^{d}\sum_{a=1}^{\log n_1}\frac{1}{2^{a-1}}\sum_{p=1}^{2^a-1}\sum_{q=1}^{2^a-1-p} \left| \ve_p^\tran\tZ_i\ve_q\right|^2\\
&\leq\frac{1}{d}\sum_{i=1}^{d}\sum_{a=1}^{\log n_1}\frac{1}{2^{a-1}}\sum_{p=1}^{2^a-1}\left\|\ve_p^\tran\tZ_i\right\|^2\\
&=\sum_{a=1}^{\log n_1}\frac{1}{2^{a-1}}\sum_{p=1}^{2^a-1}\left(\frac{1}{d}\sum_{i=1}^{d}\left\|\ve_p^\tran\tZ_i\right\|^2\right)\\
&\leq 2B\log(n_1),
\end{align*}
where the second line follows from $w_k\geq 2^{a-1}$ when $k\in \mathcal{W}_a$. 

The second term can be bounded similarly which concludes the proof.
\end{proof}

\bibliographystyle{unsrt}
\bibliography{refs}

\begin{thebibliography}{10}

\bibitem{so2007theory}
Anthony Man-Cho So and Yinyu Ye.
\newblock Theory of semidefinite programming for sensor network localization.
\newblock {\em Mathematical Programming}, 109(2-3):367--384, 2007.

\bibitem{rennie2005fast}
Jasson~DM Rennie and Nathan Srebro.
\newblock Fast maximum margin matrix factorization for collaborative
  prediction.
\newblock In {\em Proceedings of the 22nd international conference on Machine
  learning}, pages 713--719. ACM, 2005.

\bibitem{argyriou2008convex}
Andreas Argyriou, Theodoros Evgeniou, and Massimiliano Pontil.
\newblock Convex multi-task feature learning.
\newblock {\em Machine Learning}, 73(3):243--272, 2008.

\bibitem{candes2009exact}
Emmanuel~J Cand{\`e}s and Benjamin Recht.
\newblock Exact matrix completion via convex optimization.
\newblock {\em Foundations of Computational mathematics}, 9(6):717, 2009.

\bibitem{recht2011simpler}
Benjamin Recht.
\newblock A simpler approach to matrix completion.
\newblock {\em Journal of Machine Learning Research}, 12(Dec):3413--3430, 2011.

\bibitem{gross2011recovering}
David Gross.
\newblock Recovering low-rank matrices from few coefficients in any basis.
\newblock {\em IEEE Transactions on Information Theory}, 57(3):1548--1566,
  2011.

\bibitem{chen2015incoherence}
Yudong Chen.
\newblock Incoherence-optimal matrix completion.
\newblock {\em IEEE Transactions on Information Theory}, 61(5):2909--2923,
  2015.

\bibitem{davenport2016overview}
Mark~A Davenport and Justin Romberg.
\newblock An overview of low-rank matrix recovery from incomplete observations.
\newblock {\em IEEE Journal of Selected Topics in Signal Processing},
  10(4):608--622, 2016.

\bibitem{cai2018exploiting}
Jian-Feng Cai and Ke~Wei.
\newblock Exploiting the structure effectively and efficiently in low-rank
  matrix recovery.
\newblock In {\em Processing, Analyzing and Learning of Images, Shapes, and
  Forms: Part 1}, volume~19. Elsevier, 2018.

\bibitem{chen2018harn}
Yudong {Chen} and Yuejie {Chi}.
\newblock Harnessing structures in big data via guaranteed low-rank matrix
  estimation: Recent theory and fast algorithms via convex and nonconvex
  optimization.
\newblock {\em IEEE Signal Processing Magazine}, 35(4):14--31, July 2018.

\bibitem{chi2019nonconvex}
Yuejie Chi, Yue~M Lu, and Yuxin Chen.
\newblock Nonconvex optimization meets low-rank matrix factorization: An
  overview.
\newblock {\em IEEE Transactions on Signal Processing}, 67(20):5239--5269, Oct
  2019.

\bibitem{oropeza2010singular}
Vicente Oropeza.
\newblock The singular spectrum analysis method and its application to seismic
  data denoising and reconstruction.
\newblock Master's thesis, University of Alberta, 2010.

\bibitem{oropeza2011}
Vicente Oropeza and Mauricio Sacchi.
\newblock Simultaneous seismic data denoising and reconstruction via
  multichannel singular spectrum analysis.
\newblock {\em Geophysics}, 76(3):V25--V32, 2011.

\bibitem{huang2016damped}
Weilin Huang, Runqiu Wang, Yangkang Chen, Huijian Li, and Shuwei Gan.
\newblock Damped multichannel singular spectrum analysis for 3{D} random noise
  attenuation.
\newblock {\em Geophysics}, 81(4):V261--V270, 2016.

\bibitem{vetterli2002sampling}
Martin Vetterli, Pina Marziliano, and Thierry Blu.
\newblock Sampling signals with finite rate of innovation.
\newblock {\em IEEE transactions on Signal Processing}, 50(6):1417--1428, 2002.

\bibitem{blu2008sparse}
Thierry Blu, Pier-Luigi Blu, Martin Vetterli, Pina Marziliano, and Lionel
  Coulot.
\newblock Sparse sampling of signal innovations.
\newblock {\em IEEE Signal Processing Magazine}, 25(2):31--40, March 2008.

\bibitem{oropeza2009multifrequency}
Vicente~E Oropeza and Mauricio~D Sacchi.
\newblock Multifrequency singular spectrum analysis.
\newblock In {\em SEG Technical Program Expanded Abstracts 2009}, pages
  3193--3197. Society of Exploration Geophysicists, 2009.

\bibitem{sacchi2009fx}
Mauricio~D. Sacchi.
\newblock Fx singular spectrum analysis.
\newblock In {\em CSPG CSEG CWLS Convention}, pages 392--395, 2009.

\bibitem{trickett2008f}
Stewart Trickett.
\newblock F-xy {C}adzow noise suppression.
\newblock In {\em SEG Technical Program Expanded Abstracts 2008}, pages
  2586--2590. Society of Exploration Geophysicists, 2008.

\bibitem{oropeza2011simultaneous}
Vicente Oropeza and Mauricio Sacchi.
\newblock Simultaneous seismic data denoising and reconstruction via
  multichannel singular spectrum analysis.
\newblock {\em Geophysics}, 76(3):V25--V32, 2011.

\bibitem{trickett2009prestack}
Stewart Trickett and Lynn Burroughs.
\newblock Prestack rank-reduction-based noise suppression.
\newblock {\em CSEG Recorder}, 34(9):24--31, 2009.

\bibitem{wang2019fast}
Haifeng Wang, Jian-Feng Cai, Tianming Wang, and Ke~Wei.
\newblock Fast {C}adzow's algorithm and a gradient variant.
\newblock {\em arXiv preprint arXiv:1906.03572}, 2019.

\bibitem{tutuncu2001sdpt3}
Kim-Chuan Toh, Michael~J. Todd, and Reha~H. Tutuncu.
\newblock {SDPT}3—a {M}atlab software package for
  semidefinite-quadratic-linear programming, version 3.0.
\newblock {\em Web page http://www. math. nus. edu. sg/mattohkc/sdpt3. html},
  2001.

\bibitem{cvx}
Michael Grant and Stephen Boyd.
\newblock {CVX}: Matlab software for disciplined convex programming, version
  2.1.
\newblock \url{http://cvxr.com/cvx}, March 2014.

\bibitem{gb08}
Michael Grant and Stephen Boyd.
\newblock Graph implementations for nonsmooth convex programs.
\newblock In V.~Blondel, S.~Boyd, and H.~Kimura, editors, {\em Recent Advances
  in Learning and Control}, Lecture Notes in Control and Information Sciences,
  pages 95--110. Springer-Verlag Limited, 2008.
\newblock \url{http://stanford.edu/~boyd/graph_dcp.html}.

\bibitem{chen2014robust}
Yuxin Chen and Yuejie Chi.
\newblock Robust spectral compressed sensing via structured matrix completion.
\newblock {\em IEEE Transactions on Information Theory}, 60(10):6576--6601,
  2014.

\bibitem{bhaskar2013atomic}
Badri~Narayan Bhaskar, Gongguo Tang, and Benjamin Recht.
\newblock Atomic norm denoising with applications to line spectral estimation.
\newblock {\em IEEE Transactions on Signal Processing}, 61(23):5987--5999,
  2013.

\bibitem{tang2013compressed}
Gongguo Tang, Badri~Narayan Bhaskar, Parikshit Shah, and Benjamin Recht.
\newblock Compressed sensing off the grid.
\newblock {\em IEEE transactions on information theory}, 59(11):7465--7490,
  2013.

\bibitem{cai2019fast}
Jian-Feng Cai, Tianming Wang, and Ke~Wei.
\newblock Fast and provable algorithms for spectrally sparse signal
  reconstruction via low-rank hankel matrix completion.
\newblock {\em Applied and Computational Harmonic Analysis}, 46(1):94--121,
  2019.

\bibitem{cai2018spectral}
Jian-Feng Cai, Tianming Wang, and Ke~Wei.
\newblock Spectral compressed sensing via projected gradient descent.
\newblock {\em SIAM Journal on Optimization}, 28(3):2625--2653, 2018.

\bibitem{mccoy2013achievable}
Michael~B McCoy and Joel~A Tropp.
\newblock The achievable performance of convex demixing.
\newblock {\em arXiv preprint arXiv:1309.7478}, 2013.

\bibitem{strohmer2019painless}
Thomas Strohmer and Ke~Wei.
\newblock Painless breakups—efficient demixing of low rank matrices.
\newblock {\em Journal of Fourier Analysis and Applications}, 25(1):1--31,
  2019.

\bibitem{ling2017blind}
Shuyang Ling and Thomas Strohmer.
\newblock Blind deconvolution meets blind demixing: Algorithms and performance
  bounds.
\newblock {\em IEEE Transactions on Information Theory}, 63(7):4497--4520,
  2017.

\bibitem{jung2017blind}
Peter Jung, Felix Krahmer, and Dominik St{\"o}ger.
\newblock Blind demixing and deconvolution at near-optimal rate.
\newblock {\em IEEE Transactions on Information Theory}, 64(2):704--727, 2017.

\bibitem{carroll1970analysis}
J~Douglas Carroll and Jih-Jie Chang.
\newblock Analysis of individual differences in multidimensional scaling via an
  {N}-way generalization of “{E}ckart-{Y}oung” decomposition.
\newblock {\em Psychometrika}, 35(3):283--319, 1970.

\bibitem{harshman1970foundations}
Richard~A Harshman.
\newblock Foundations of the {PARAFAC} procedure: {M}odels and conditions for
  an "explanatory" multimodal factor analysis.
\newblock {\em UCLA Working Papers in Phonetics 16}, pages 1--84, 1970.

\bibitem{kolda2009tensor}
Tamara~G Kolda and Brett~W Bader.
\newblock Tensor decompositions and applications.
\newblock {\em SIAM review}, 51(3):455--500, 2009.

\bibitem{tucker1966some}
Ledyard~R Tucker.
\newblock Some mathematical notes on three-mode factor analysis.
\newblock {\em Psychometrika}, 31(3):279--311, 1966.

\bibitem{grasedyck2010hierarchical}
Lars Grasedyck.
\newblock Hierarchical singular value decomposition of tensors.
\newblock {\em SIAM Journal on Matrix Analysis and Applications},
  31(4):2029--2054, 2010.

\bibitem{oseledets2011tensor}
Ivan~V Oseledets.
\newblock Tensor-train decomposition.
\newblock {\em SIAM Journal on Scientific Computing}, 33(5):2295--2317, 2011.

\bibitem{braman2010third}
Karen Braman.
\newblock Third-order tensors as linear operators on a space of matrices.
\newblock {\em Linear Algebra and its Applications}, 433(7):1241--1253, 2010.

\bibitem{kilmer2011factorization}
Misha~E Kilmer and Carla~D Martin.
\newblock Factorization strategies for third-order tensors.
\newblock {\em Linear Algebra and its Applications}, 435(3):641--658, 2011.

\bibitem{kilmer2013third}
Misha~E Kilmer, Karen Braman, Ning Hao, and Randy~C Hoover.
\newblock Third-order tensors as operators on matrices: A theoretical and
  computational framework with applications in imaging.
\newblock {\em SIAM Journal on Matrix Analysis and Applications},
  34(1):148--172, 2013.

\bibitem{gleich2013power}
David~F Gleich, Chen Greif, and James~M Varah.
\newblock The power and {A}rnoldi methods in an algebra of circulants.
\newblock {\em Numerical Linear Algebra with Applications}, 20(5):809--831,
  2013.

\bibitem{zhang2017tensor}
Zemin {Zhang} and Shuchin {Aeron}.
\newblock Exact tensor completion using t-{SVD}.
\newblock {\em IEEE Transactions on Signal Processing}, 65(6):1511--1526, March
  2017.

\bibitem{lu2018exact}
Canyi Lu, Jiashi Feng, Zhouchen Lin, and Shuicheng Yan.
\newblock Exact low tubal rank tensor recovery from {G}aussian measurements.
\newblock {\em arXiv preprint arXiv:1806.02511}, 2018.

\bibitem{lu2019tensor}
Canyi Lu, Jiashi Feng, Wei Liu, Zhouchen Lin, Shuicheng Yan, et~al.
\newblock Tensor robust principal component analysis with a new tensor nuclear
  norm.
\newblock {\em IEEE transactions on pattern analysis and machine intelligence},
  2019.

\bibitem{watson1992characterization}
G~Alistair Watson.
\newblock Characterization of the subdifferential of some matrix norms.
\newblock {\em Linear algebra and its applications}, 170:33--45, 1992.

\bibitem{vandereycken2013low}
Bart Vandereycken.
\newblock Low-rank matrix completion by {R}iemannian optimization.
\newblock {\em SIAM Journal on Optimization}, 23(2):1214--1236, 2013.

\bibitem{tropp2012}
Joel~A. Tropp.
\newblock User-friendly tail bounds for sums of random matrices.
\newblock {\em Foundations of Computational Mathematics}, 12(4):389--434, Aug
  2012.

\bibitem{liao2016music}
Wenjing Liao and Albert Fannjiang.
\newblock Music for single-snapshot spectral estimation: Stability and
  super-resolution.
\newblock {\em Applied and Computational Harmonic Analysis}, 40(1):33--67,
  2016.

\bibitem{candes2006robust}
Emmanuel~J Cand{\`e}s, Justin Romberg, and Terence Tao.
\newblock Robust uncertainty principles: Exact signal reconstruction from
  highly incomplete frequency information.
\newblock {\em IEEE Transactions on information theory}, 52(2):489--509, 2006.

\bibitem{vaiter2018model}
Samuel Vaiter, Gabriel Peyr{\'e}, and Jalal Fadili.
\newblock Model consistency of partly smooth regularizers.
\newblock {\em IEEE Transactions on Information Theory}, 64(3):1725--1737,
  March 2018.

\bibitem{chen2001atomic}
Scott~Shaobing Chen, David~L Donoho, and Michael~A Saunders.
\newblock Atomic decomposition by basis pursuit.
\newblock {\em SIAM review}, 43(1):129--159, 2001.

\bibitem{zhang2016one}
Hui Zhang, Ming Yan, and Wotao Yin.
\newblock One condition for solution uniqueness and robustness of both
  l1-synthesis and l1-analysis minimizations.
\newblock {\em Advances in Computational Mathematics}, 42(6):1381--1399, Dec
  2016.

\end{thebibliography}
\end{document}